\documentclass[a4paper,UKenglish,cleveref, autoref, thm-restate]{lipics-v2021}


\bibliographystyle{plainurl}

\title{Reducing the Vertex Cover Number via Edge Contractions} 

\titlerunning{Reducing the Vertex Cover Number via Edge Contractions} 





\author{Paloma T. Lima}{Computer Science Department, IT University of Copenhagen, Denmark}{palt@itu.dk}{}{}

\author{Vinicius F.\ dos Santos}{Departamento de Ci\^encia da Computa\c{c}\~{a}o, Universidade Federal de Minas Gerais, Belo Horizonte, Brazil} {viniciussantos@dcc.ufmg.br}{https://orcid.org/0000-0002-4608-4559}{Grant APQ-01707-21 Minas Gerais Research Support Foundation (FAPEMIG) and Grants 311679/2018-8 National Council for Scientific and Technological Development (CNPq).}

\author{Ignasi Sau}{LIRMM, Universit\'e de Montpellier, CNRS, Montpellier, France}{ignasi.sau@lirmm.fr}{https://orcid.org/0000-0002-8981-9287}{CAPES-PRINT Institutional Internationalization Program, process 88887.371209/ 2019-00, and projects DEMOGRAPH (ANR-16-CE40-0028), ESIGMA (ANR-17-CE23-0010), ELIT (ANR-20-CE48-0008-01), and UTMA (ANR-20-CE92-0027).}

\author{U\'everton S. Souza}{Instituto de Computa\c c\~ao, Universidade Federal Fluminense, Niter\'oi, Brazil \and Institute of Informatics, University of Warsaw, Warsaw, Poland}{ueverton@ic.uff.br}{https://orcid.org/0000-0002-5320-9209}{
This research has received funding from Rio de Janeiro Research Support Foundation (FAPERJ) under grant agreement E-26/201.344/2021,  National Council for Scientific  and Technological Development (CNPq) under grant agreement 309832/2020-9,  and the European Research Council (ERC) under the European Union's Horizon $2020$ research and innovation programme under grant agreement CUTACOMBS (No. $714704$). }

\author{Prafullkumar Tale}{CISPA Helmholtz Center for Information Security, SIC, Saarbr$\ddot{\text{u}}$cken, Germany}{prafullkumar.tale@cispa.saarland}{}{This research is a part of a project that has received funding from the European Research Council (ERC) under the European Union's Horizon $2020$ research and innovation programme under grant agreement SYSTEMATICGRAPH (No. $725978$).\flag{./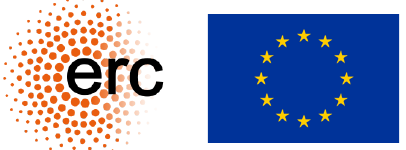}}

\authorrunning{P. T. Lima, V. F. dos Santos, I. Sau, U. S. Souza, and P. Tale} 

\Copyright{Paloma T. Lima, Vinicius F.\ {dos Santos}, Ignasi Sau, U\'everton S. Souza, and Prafullkumar Tale} 


\begin{CCSXML}
<ccs2012>
   <concept>
       <concept_id>10003752.10003809.10010052</concept_id>
       <concept_desc>Theory of computation~Parameterized complexity and exact algorithms</concept_desc>
       <concept_significance>500</concept_significance>
       </concept>
 </ccs2012>
\end{CCSXML}

\ccsdesc[500]{Theory of computation~Param. complex. and exact algorithms}

\keywords{Blocker problems, edge contraction, vertex cover, parameterized complexity.}

\category{} 

\relatedversion{} 




\nolinenumbers 

\hideLIPIcs  

\EventEditors{}
\EventNoEds{0}
\EventLongTitle{}
\EventShortTitle{}
\EventAcronym{}
\EventYear{}
\EventDate{}
\EventLocation{}
\EventLogo{}
\SeriesVolume{}
\ArticleNo{}

\usepackage[svgnames]{xcolor}
\usepackage{color}

\definecolor{darkmagenta}{rgb}{0.30, 0.0, 0.30}

\definecolor{dark-red}{rgb}{0.7,0.15,0.15}
\definecolor{dark-blue}{rgb}{0.15,0.15,0.4}
\definecolor{medium-blue}{rgb}{0,0,0.5}
\definecolor{gray}{rgb}{0.5,0.5,0.5}
\definecolor{color-Ig}{rgb}{0.15,0.7,0.15}

\hypersetup{
    colorlinks, linkcolor={blue},
    citecolor={dark-red}, urlcolor={medium-blue}
}

\newenvironment{proofThm2}[1][]{\medskip \par \noindent {\bf \color{lipicsGray}{Proof of \autoref{thm:w1-hard}.}#1}\ }{\hfill\color{lipicsGray}{$\blacktriangleleft$}\medskip}

\usepackage{amssymb}
\usepackage{amsmath}
\usepackage{amsthm}
\usepackage[sort]{cite}
\usepackage{tikz}
\usetikzlibrary{shapes,arrows}

\usepackage[linesnumbered,algoruled,boxed,lined]{algorithm2e}
\usepackage{complexity} 
\usepackage{todonotes} 
\presetkeys{todonotes}{inline}{}


\newcommand{\cost}{\textsf{cost}}
\newcommand{\el}{\textsf{el}}

\newcommand{\bc}{\texttt{bc}}
\newcommand{\oct}{\texttt{oct}}
\newcommand{\vc}{\texttt{vc}}

\newcommand{\calA}{\mathcal{A}}

\newcommand{\calB}{\mathcal{B}}
\newcommand{\calC}{\mathcal{C}}
\newcommand{\calD}{\mathcal{D}}
\newcommand{\calE}{\mathcal{E}}

\newcommand{\calF}{\mathcal{F}}
\newcommand{\calG}{\mathcal{G}}

\newcommand{\calI}{{\mathcal I}}

\newcommand{\calM}{\mathcal{M}}
\newcommand{\calO}{\ensuremath{{\mathcal O}}}
\newcommand{\OO}{\mathcal{O}}

\newcommand{\calP}{\mathcal{P}}

\newcommand{\rank}{{\sf rank}}

\newcommand{\calT}{\mathcal{T}}

\newcommand{\calW}{\mathcal{W}}
\newcommand{\calY}{\mathcal{Y}}

\newcommand{\ETH}{\textsf{ETH}\xspace}

\newcommand{\Hard}{\textsf{hard}}
\newcommand{\Hardness}{\textsf{hardness}}
\newcommand{\para}{\textsf{para}}

\newcommand{\yes}{{\sc Yes}}
\newcommand{\no}{{\sc No}}

\newtheorem{reduction rule}{Reduction Rule}[section]
\newtheorem{marking-scheme}{Marking Scheme}[section]
\newtheorem{guarantee}{Guarantee}[section]


\newcommand{\defproblem}[3]{
  \vspace{1mm}
\noindent\fbox{
  \begin{minipage}{0.96\textwidth}
  \begin{tabular*}{\textwidth}{@{\extracolsep{\fill}}lr} #1 \\ \end{tabular*}
  {\bf{Input:}} #2  \\
  {\bf{Question:}} #3
  \end{minipage}
  }
  \vspace{1mm}
}


\begin{document}

\maketitle

\begin{abstract}
The \textsc{Contraction(\textsf{vc})} problem takes as input a graph $G$ on $n$ vertices and two integers $k$ and $d$, and asks whether one can contract at most $k$ edges to reduce the size of a minimum vertex cover of $G$ by at least $d$.
Recently, Lima et al.~[JCSS 2021] proved, among other results, that unlike most of the so-called \emph{blocker problems}, \textsc{Contraction(\textsf{vc})} admits an \XP\ algorithm running in time $f(d) \cdot n^{\calO(d)}$. They left open the question of whether this problem is  \FPT\ under this parameterization.
In this article, we continue this line of research and prove the following results:
\begin{itemize}
\item \textsc{Contraction(\textsc{\vc})} is \W[1]-\Hard\ parameterized by $k + d$.
Moreover, unless the \ETH\ fails, the problem does not admit an algorithm running in time $f(k + d) \cdot n^{o(k + d)}$ for any function $f$. In particular, this answers the open question stated in Lima et al.~[JCSS 2021] in the negative.
\item It is \NP-\Hard\ to decide whether an instance $(G, k, d)$ of \textsc{Contraction(\textsf{\vc})} is a \yes-instance even when $k = d$, hence enhancing our understanding of the classical complexity of the problem.
\item \textsc{\textsc{Contraction(\textsc{\vc})}} can be solved in time $2^{\calO(d)} \cdot n^{k - d + \calO(1)}$. This \XP\ algorithm improves the one of Lima et al.~[JCSS 2021], which uses Courcelle's theorem as a subroutine and hence, the $f(d)$-factor in the running time is non-explicit and probably very large. On the other hard, it shows that when $k=d$, the problem is \FPT\ parameterized by $d$ (or by $k$).
\end{itemize}
\end{abstract}


\section{Introduction}
\label{sec:intro}


Graph modification problems have been extensively studied in theoretical computer science, in particular for their vast expressive power and their applicability in a number of scenarios.
Such problems can be generically defined as follows.
For a fixed graph class ${\cal F}$ and a fixed set ${\calM}$ of allowed graph modification operations, such as vertex deletion, edge deletion, edge addition, edge editing or edge contraction, the $\calF$-$\calM$-\textsc{Modification}  problem takes as input a graph $G$ and a positive integer $k$, and the goal is to decide whether at most $k$ operations from ${\cal M}$ can be applied to $G$ so that the resulting graph belongs to the class ${\cal F}$.
For most natural graph classes $\calF$ and modification operations ${\cal M}$, the $\calF$-$\calM$-\textsc{Modification} problem is \NP-hard~\cite{10.1145/800133.804355,WatanabeAN83,LewisY80}.
To cope up with this hardness, these problems have been examined via the lens of parameterized complexity~\cite{BodlaenderHL14grap,crespelle2020survey}.
With an appropriate choice of $\calF$ and the allowed modification operations $\calM$,  \textsc{$\calF$-$\calM$-Modification} can encapsulate well-studied problems like \textsc{Vertex Cover}, \textsc{Feedback Vertex Set (FVS)}, \textsc{Odd Cycle Transversal (OCT)},  \textsc{Chordal Completion},  or \textsc{Cluster Editing}, to name just a few.

The most natural and well-studied modification operations are, probably in this order, vertex deletion,  edge deletion,  and edge addition.
In recent years, the \emph{edge contraction} operation has begun to attract significant scientific attention.
(When contracting an edge $uv$ in a graph $G$, we delete $u$ and $v$ from $G$, add a new vertex and make it adjacent to vertices that were adjacent to $u$ or $v$.)
In parameterized complexity, $\calF$-\textsc{Contraction} problems, i.e., $\calF$-$\calM$-\textsc{Modification} problems where the only modification operation in $\calM$ is  edge contraction,  are usually studied with the number of edges allowed to contract, $k$, as the parameter.
A series of recent papers studied the parameterized complexity of $\calF$-\textsc{Contraction} for various graph classes $\calF$ such as paths and trees~\cite{heggernes2014contracting},
generalizations and restrictions of trees~\cite{agarwal2019parameterized, agrawal2017paths},
cactus graphs~\cite{krithika2018fpt},
bipartite graphs~\cite{guillemot2013faster, heggernes2013obtaining},
planar graphs~\cite{golovach2013obtaining},
grids~\cite{saurabh2020parameterized},
cliques~\cite{cai2013contracting},
split graphs~\cite{agrawal2019split},
chordal graphs~\cite{lokshtanov2013hardness},
bicliques~\cite{martin2015computational}, or
degree-constrained graph classes~\cite{Belmonte:2014,golovach2013increasing, DBLP:conf/iwpec/0001T20}.

For all the $\calF$-$\calM$-\textsc{Modification} problems mentioned above,  a typical definition of the problem contains a description of the target graph class $\calF$.
For example,  \textsc{Vertex Cover}, \textsc{FVS},  and \textsc{OCT} are $\calF$-$\calM$-\textsc{Modification} problems where $\calF$ is the collection of edgeless graphs, forests, and bipartite graphs, respectively, and $\calM$ contains only vertex deletion.
Recently, a different formulation of these graph modification problems, called \emph{blocker problems},  has been considered.
In this formulation, the target graph class is defined in a \emph{parametric way} from the input graph.
To make the statement of such problems precise, consider an invariant~$\pi : \calG \mapsto \mathbb{N}$, where $\calG$ is the collection of all graphs.
For a fixed invariant $\pi$,  a typical input of a blocker problem consists of a graph $G$, a budget $k$,  and a threshold value $d$,  and the question is whether $G$ can be converted into a graph $G'$ using at most $k$ allowed modifications such that $\pi(G') \le \pi(G) - d$.
This is the same as determining whether $(G, k,d)$ is a \yes-instance of $\calF^{\pi}_{G, d}$-$\calM$-\textsc{Modification} where $\calF^{\pi}_{G, d} = \{G' \in \calG \mid \pi(G') \le \pi(G) - d\}$.

Consider the following examples of this formulation.
For  invariant $\pi(G) = |E(G)|$,  threshold $d = |E(G)|$,  and vertex deletion as the modification operation in $\calM$, $\calF^{\pi}_{G, d}$-$\calM$-\textsc{Modification} is the same as \textsc{Vertex Cover}.
Setting the threshold $d$ to a fixed integer $p$ leads to \textsc{Partial Vertex Cover}.
In a typical definition of this problem,  the input is a graph $G$ and two integers $k, p$,  and the objective is to decide whether there is a set of vertices of size at most $k$ that has at least $p$ edges  incident on it.
Consider another example when $\pi(G) = \vc(G)$,  the size of a minimum vertex cover of $G$,  the threshold value $d = \vc(G) - 1$,   and the allowed modification operation is edge contraction.
To reduce the size of a minimum vertex cover from $\vc(G)$ to $1$ by $k$ edge contractions, we need to find a connected vertex cover of size $k + 1$.
Hence, in this case $\calF^{\pi}_{G, d}$-$\calM$-\textsc{Modification} is the  same as \textsc{Connected Vertex Cover}.
In all these cases,  we can think of the set of vertices or edges involved in the modifications as `blocking'  the invariant $\pi$, that is, preventing $\pi$ from being smaller.

Blocker problems have been investigated for numerous graph invariants, such as the chromatic number, the independence number, the matching number, the diameter, the domination number, the total domination number, and the clique number of a graph~\cite{BTT11, CWP11, bazgan2015blockers, diner2018contraction, kimarxiv, PBP, paulusma2018critical,RBPDCZ10,WatanabeAN83} with `vertex deletion' or `edge deletion' as the allowed graph modification operation.
Blocker problems with the edge contraction operation have already been studied with respect to the chromatic number, clique number, independence number~\cite{diner2018contraction,paulusma2018critical}, the domination number~\cite{GALBY2021DM,GalbyArxivP3}, total domination number~\cite{GalbyArxivTotalD}, and the semitotal domination number~\cite{galby2021using}.

This article is strongly motivated by the results in \cite{lima2020reducing}.
They proved,  among other results,  that it is  \coNP-\Hard\ to test whether we can reduce the size of a minimum feedback vertex set or of  a minimum odd cycle transversal of a graph by one, i.e., $d = 1$, by performing one edge contraction, i.e., $k = 1$.
This is consistent with earlier results, as blocker problems are generally very hard, and become polynomial-time solvable only when restricted to specific graph classes.
However,  the notable exception is the case when the invariant in question is the size of a minimum vertex cover of the input graph.
We  formally define the  problem before mentioning existing results and our contribution (where $G/F$ denotes the graph obtained from $G$ by contracting the edges in $F$).

\medskip
\defproblem{\textsc{Contraction(\vc)}}{An undirected graph $G$ and two non-negative integers $k$ and  $d$.}{Does there exist a set $F \subseteq E(G)$ such that $|F| \le k$ and $\vc(G/F) \le \vc(G) - d$?}

\medskip
\noindent \textbf{Our results}.
A simple reduction,  briefly mentioned in \cite{lima2020reducing},  shows that the above problem is \NP-\Hard\ for \emph{some} $k$ in $\{d, d+1, \dots, 2d\}$.
In our first result,  we enhance our understanding of the classical complexity of the problem and prove that the problem is \NP-\Hard\ even when $k = d$.
As any edge contraction can decrease $\vc(G)$ by at most one, if $k < d$ then the input instance is a trivial \no-instance.
To state our first result,  we introduce the notation of $\rank(G)$, which  is the number of vertices of $G$ minus its number of connected components (or equivalent, the number of edges of a set of spanning trees of each of the connected components of $G$).
Note that it is sufficient to consider values of $k$ that are at most $\rank(G)$, as otherwise it is possible to transform $G$ to an edgeless graph with at most $k$ contractions.
\begin{theorem}
\label{thm:np-hard}
To decide whether an instance $(G, k,  d)$ of {\sc Contraction(\vc)} is a {\sc Yes}-instance is
\begin{itemize}
\item \emph{\coNP-\Hard} if $k = \emph{\rank}(G)$,
\item \emph{\coNP-\Hard} if $k < \emph{\rank}(G)$ and $2d \le k$,  and
\item \emph{\NP-\Hard} if $k < \emph{\rank}(G)$ and $k = d + \frac{\ell - 1}{\ell + 3} \cdot d$ for  any  integer $\ell \ge 1$ such that $k$ is an integer.
\end{itemize}
\end{theorem}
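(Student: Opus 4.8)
The plan is to handle the three regimes separately, all resting on two elementary facts about contraction. First, a single edge contraction decreases the vertex cover number by at most one. Second, if $C$ is a minimum vertex cover of $G$ and $F \subseteq E(G[C])$ is a forest, then the image of $C$ is a vertex cover of $G/F$ of size $\vc(G) - |F|$; combined with the first fact this gives $\vc(G/F) = \vc(G) - |F|$ exactly. In particular, with budget $k = \rank(G)$ we may contract a spanning forest of $G$ and reach an edgeless graph, so the minimum of $\vc(G/F)$ over all admissible $F$ equals $0$, and $(G,\rank(G),d)$ is a yes-instance if and only if $\vc(G) \ge d$. Deciding this is precisely the complement of \textsc{Vertex Cover} with threshold $d-1$, which yields the first item: from a \textsc{Vertex Cover} instance $(H,t)$ output $(H, \rank(H), t+1)$, a yes-instance iff $\vc(H) > t$, giving \coNP-\Hard ness.

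For the second item I keep the equivalence ``yes iff $\vc(H) \ge d$'' but inflate the rank so that $k$ drops strictly below $\rank(G)$ while staying at least $2d$. The idea is to attach to $H$ a component whose edges are useless for decreasing $\vc$ yet carries large rank; a star $K_{1,s}$ works, since contracting any of its edges leaves a smaller star of vertex cover number $1$. Taking $G = H \sqcup K_{1,s}$ with $s > k$ and $k = \rank(H)$, we get $k < \rank(G) = \rank(H) + s$, the star can never be fully contracted within budget so diverting contractions to it is strictly wasteful, and the analysis above still gives ``yes iff $\vc(H) \ge d$''. To force $2d \le k = \rank(H)$ I would first pad $H$ by routing a many-leaf pendant star through one fresh vertex, which raises $\rank(H)$ arbitrarily while increasing $\vc(H)$ by a known amount; adjusting the threshold keeps the equivalence intact.

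The third, \NP-\Hard ness item is the substantial one and needs an honest gadget reduction from an \NP-complete problem (for instance a suitable version of \textsc{Vertex Cover}/\textsc{Independent Set}). Here I would exploit two opposite efficiencies of contraction: an edge inside a minimum vertex cover (say an edge of a triangle) yields reduction $1$ at cost $1$, whereas a $P_3$ forces reduction $1$ at cost $2$, since one contraction of a $P_3$ leaves a $P_2$ of the same vertex cover number. Writing $d = (\ell+3)N$ and $k = (2\ell+2)N$, one checks that per unit a bundle of four cost-$1$ reductions together with $\ell-1$ cost-$2$ reductions realizes reduction $\ell+3$ at cost $2\ell+2$, matching $k = d + \frac{\ell-1}{\ell+3}\,d$; the case $\ell=1$ uses only cost-$1$ contractions and gives $k=d$. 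The construction encodes the source into $N$ copies of a core gadget whose efficient, cost-$1$ resolution is available exactly when the corresponding source element is satisfied, and attaches $\ell-1$ disjoint $P_3$-gadgets per copy purely to tilt the global ratio towards $2$ as $\ell$ grows. Calibrating $k$ and $d$ tightly forces every unit of reduction to be paid at the intended rate, so that reaching $d$ within $k$ is possible iff the source is a yes-instance; $k<\rank(G)$ is guaranteed because the gadgets contain more edges than are ever contracted (or by appending a useless star as above).

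The step I expect to be the main obstacle is the soundness of this last reduction: showing that an \emph{arbitrary} admissible $F$ cannot beat the per-gadget bookkeeping. One must rule out ``global'' strategies that contract connected subgraphs crossing several gadgets, or that trade independence-number loss for extra contractions, and prove that the maximum of $|F| - (\alpha(G) - \alpha(G/F))$ over $|F|\le k$ is controlled precisely by the number of cost-$1$ contractions available inside minimum vertex covers of the gadgets. This reduces to a tight structural characterization of $\vc(G/F)$ that cleanly separates the efficient contractions inside a minimum vertex cover from the forced wasteful ones, and it is where the bulk of the technical work — and the careful gadget design ensuring no shortcut exists — will lie.
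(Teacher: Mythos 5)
Your first two items are correct and essentially identical to the paper's treatment: the paper establishes the same equivalences (with $k=\rank(G)$ the instance is a \yes-instance iff $d\le\vc(G)$; with $k<\rank(G)$ and $2d\le k$ on a connected graph, iff $d<\vc(G)$) and obtains coNP-hardness by complementing \textsc{Vertex Cover}; your padding constructions for inflating the rank are routine and sound.

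The genuine gap is the third item, which is the real content of \autoref{thm:np-hard}, and you flag it yourself: you provide only the arithmetic of the intended solution ($d=(\ell+3)N$, $k=(2\ell+2)N$; per unit, four reductions at cost $1$ and $\ell-1$ reductions at cost $2$) together with a design goal, but no gadget, no reduction, and no soundness argument. The calibration by itself proves nothing: one must still exhibit a graph in which contractions at the rate of one unit of $\vc$-decrease per contracted edge are available exactly when an NP-complete selection exists, and then show that \emph{no} edge set $F$ with $|F|\le k$ --- including connected sets straddling several gadgets --- achieves the required decrease; this is precisely the obstacle you defer to future work. The paper does not attack this by characterizing $\vc(G/F)$ for arbitrary $F$, which is the route you propose and which is indeed the hard way; it sidesteps it with \autoref{lemma:solution-edge-pair}, which shows that a solution of size $|F|$ exists iff a fixed minimum vertex cover $X$ can be modified into a vertex cover $X_{\el}=(X\setminus X_s)\cup Y_s$ with $\rank(X_{\el})\ge|F|$ and $|Y_s|-|X_s|\le|F|-d$, so that all soundness reasoning happens at the level of vertex-set pairs rather than contraction sets. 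Concretely, the paper reduces from $(3\times q)$-\textsc{Multicolored Independent Set}: the constructed graph has a minimum vertex cover of size $d+1$, so the rank and size conditions force $G'[X_{\el}]$ to be connected; pendant vertices pin a hub and $\ell q$ connector vertices inside $X_{\el}$ (\autoref{lemma:pendant-out-of-pair}); connectivity then forces $\ell q$ selector vertices into $Y_s$, the budget forces $|X_s|\ge q$, and since $X_s$ is disjoint from the vertex cover $X_{\el}$ it must be an independent set meeting each (clique) color class exactly once --- a multicolored independent set. Note also that the slack $k-d=(\ell-1)q$ is burnt there through connectivity-forced additions to the cover, not through separate $P_3$ components as in your sketch. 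Without a concrete gadget and a soundness proof of this kind, the NP-hardness item remains unproven.
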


As one needs to contract at least $d$ edges to reduce the size of a minimum vertex cover by~$d$, the above theorem, for $\ell = 1$,  implies that the problem is \para-\NP-\Hard\ when parameterized by the `excess over the lower bound', i.e., by $k - d$.
Since we can assume that $d \le k$, $d$~is a `stronger' parameter than $k$.
One of the main results of~\cite{lima2020reducing} is an \XP\ algorithm for \textsc{Contraction(\vc)} with running time $f(d) \cdot n^{\calO(d)}$.
Here, and in the rest of the article, we denote by $n$ the number of vertices of the input graph.
The authors explicitly asked whether the problem admits an \FPT\ algorithm parameterized by $d$.
As our next result, we answer this question in the negative by proving that such an algorithm is highly unlikely, even when parameterized by the larger parameter $d + k$ (or equivalently, just $k$, as discussed above).


\begin{theorem}
\label{thm:w1-hard}
{\sc Contraction(\vc)} is \emph{\W[1]-\Hard} parameterized by $k+d$.
Moreover, unless the \emph{\ETH} fails\footnote{The Exponential Time Hypothesis (\ETH) is a conjecture stating that, roughly speaking, $N$-variable $3$-SAT cannot be solved in time $2^{o(N)}$.
We refer readers to \cite[Chapter $14$]{cygan2015parameterized} for the formal definition and the related literature.}, it does not admit an algorithm running in time $f(k + d) \cdot n^{o(k + d)}$ for any computable function $f: \mathbb{N} \mapsto \mathbb{N}$. The result holds even if we assume that the input $(G,k,d)$ is such that $k < \rank(G)$ and $d \le k < 2d$, and $G$ is a bipartite graph with a bipartition $\langle X, Y \rangle$ such that $X$ is a minimum vertex cover of $G$.
\end{theorem}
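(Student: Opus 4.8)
The plan is to design a parameterized reduction from \textsc{Multicolored Clique}, which is \W[1]-hard parameterized by the number~$\kappa$ of color classes and, unless the \ETH\ fails, admits no algorithm in time $f(\kappa)\cdot N^{o(\kappa)}$ on $N$-vertex graphs. The crucial design constraint is \emph{linearity of the parameter}: to transfer the $N^{o(\kappa)}$ bound into the desired $n^{o(k+d)}$ bound, the output instance must satisfy $k+d=\calO(\kappa)$. In particular, one cannot afford to charge the budget separately for each of the $\binom{\kappa}{2}$ edges whose presence must be certified, as this would force $k+d=\Theta(\kappa^2)$ and would only exclude the useless bound $n^{o(\kappa^2)}$. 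The guiding principle is therefore that the budget pays essentially only for the $\kappa$ \emph{selections} of one vertex per color class, whereas the verification that the selected vertices are pairwise adjacent is enforced \emph{structurally} by budget-free gadgets built into $G$.

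Throughout I would exploit the following reformulation. Since contracting an edge lying on a cycle of $(V(G),F)$ does not change $G/F$, we may assume that $F$ is a forest, and then $|V(G/F)|=|V(G)|-|F|$; combined with the identity $\vc(H)=|V(H)|-\alpha(H)$ this gives $\vc(G/F)\le \vc(G)-d \Longleftrightarrow \alpha(G/F)\ge \alpha(G)+d-|F|$, where $\alpha$ denotes the independence number. As the target $G$ is bipartite with bipartition $\langle X,Y\rangle$ and $X$ a minimum vertex cover, $Y$ is a maximum independent set with $\alpha(G)=|Y|$, so the goal becomes to contract few edges while letting the independence number drop by at most $|F|-d$, i.e.\ keeping a maximum independent set almost intact.

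For the construction I would attach to each color class $V_i$ a bipartite \emph{selection gadget} across $X$ and $Y$ offering one candidate contraction per vertex of $V_i$, designed so that contracting a single such edge lowers $\vc$ by exactly one (equivalently, deletes one vertex while preserving $\alpha$), and so that no family of contractions internal to a single gadget lowers $\vc$ by more than one. The encoded choices are then fed into budget-free \emph{incidence gadgets}, one per pair $\{i,j\}$ (they enlarge $|V(G)|$ only polynomially), whose purpose is to ensure that selecting non-adjacent vertices in $V_i$ and $V_j$ forces two selection contractions to compete for the same unit of saving, so that the jointly preserved independent set falls short of the sum of the individual contributions and the overall target $d$ becomes unreachable. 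Designing these gadgets so that, in the honest (clique) case, some reduction units are realised at ``efficiency one'' (one contraction per unit of $\vc$ saved) and some at ``efficiency one half'' (two contractions per unit) yields an instance with $d<k<2d$, the strict inequalities reflecting the presence of both gadget types; a linear number of each keeps $k+d=\calO(\kappa)$. One finally checks that $G$ is bipartite with $X$ a minimum vertex cover and that $k<\rank(G)$, the latter being immediate since $G$ is large and connected, so that the extremal regime $k=\rank(G)$ is avoided.

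Correctness splits as usual into two directions. The forward direction is routine: a multicolored clique $\{v_1,\dots,v_\kappa\}$ yields the contraction set obtained by selecting each $v_i$, whose incidence gadgets do not interfere, and a direct computation via the reformulation above certifies $\vc(G/F)\le \vc(G)-d$ with $|F|\le k$. The backward direction is the main obstacle: one must show that \emph{every} $F$ with $|F|\le k$ and $\vc(G/F)\le \vc(G)-d$ is honest, namely it contracts exactly one selection edge per color class with all pairwise incidence constraints met, and hence exposes a clique. This demands a tight structural analysis, bounding the saving contributed by each connected component of $F$ in this specific bipartite construction and arguing that the budget is so tight that any deviation---contracting inside an incidence gadget, merging several gadgets into one component, or selecting non-adjacent vertices---strictly wastes budget and so cannot reach the target $d$. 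Carrying out this analysis while simultaneously keeping $k+d$ linear in $\kappa$ and preserving the promised structural properties of $G$ is where the real difficulty lies; the \ETH\ lower bound is then an immediate consequence of the linear parameter dependence and the known lower bound for \textsc{Multicolored Clique}.
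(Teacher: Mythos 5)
Your proposal is a plan rather than a proof, and the part it leaves open is precisely where all the difficulty of \autoref{thm:w1-hard} resides. Both of your gadgets are specified only by a wish list of properties: a ``selection gadget'' in which a single contraction lowers $\vc$ by exactly one and no internal family of contractions does better, and a ``budget-free incidence gadget'' per pair $\{i,j\}$ that makes non-adjacent selections ``compete for the same unit of saving.'' You give no construction of either, no argument that these desiderata can be met simultaneously in a bipartite graph with $X$ a minimum vertex cover, and you explicitly defer the backward direction as ``where the real difficulty lies.'' The preliminary reformulation you do prove (that $F$ may be assumed to be a forest, and that $\vc(G/F)\le \vc(G)-d$ if and only if $\alpha(G/F)\ge \alpha(G)+d-|F|$) is correct and is in the spirit of the paper's \autoref{lemma:solution-edge-pair}, but it is only a starting point: by itself it does not say how contractions made in one gadget can be ``seen'' by another gadget through the global vertex-cover accounting, which is exactly what any verification mechanism must exploit.

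For comparison, the paper realizes budget-free verification not with per-pair incidence gadgets but with a global acyclicity condition, reached in two steps. First, \textsc{Multicolored Independent Set} is reduced to a new intermediate problem, \textsc{Edge Induced Forest} (find at least $\ell$ edges whose endpoints induce a forest), by adding a universal vertex $\alpha$ with pendant neighbors (\autoref{thm:w-hardness-edge-induced-forest}): selecting a vertex $v$ means taking the edge $v\alpha$, and two adjacent selections $u,v$ create the triangle on $u,v,\alpha$, so independence is verified by acyclicity at zero cost to the parameter. Second, \textsc{Edge Induced Forest} is reduced to \textsc{Contraction(\vc)} with $k=4\ell$ and $d=3\ell$, where the rank condition $\rank(X_{\el})\ge k$ of \autoref{lemma:solution-edge-pair}, combined with a tight counting argument over the five possible local configurations of a solution pair, forces every solution to be honest; here the forest requirement is precisely what the rank accounting checks, since a cycle among the selected edges would make the rank fall short. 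Your single-step plan has to compress both mechanisms into one construction: an incidence gadget that detects non-adjacency of two selections purely through the rank/vertex-cover accounting while consuming no budget. That mechanism is the crux of the theorem, and you neither construct it nor show that it exists, so the proposal as written does not establish the statement.
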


For the \XP\ algorithm in \cite{lima2020reducing}, the authors did not explicitly mention an upper bound on the corresponding function $f$, but it is expected to be quite large since the algorithm uses Courcelle's theorem~\cite{Courcelle90} as a subroutine.
Our next result improves this algorithm by providing a concrete upper bound on the running time, and by distinguishing in a precise way the contribution of $k$ and $d$.
\begin{theorem}
\label{thm:algorithm}
There exists an algorithm that solves {\sc Contraction(\vc)} in time $2^{\calO(d)} \cdot n^{k-d + \calO(1)}$. Moreover, for an input $(G,k,d)$, the algorithm runs in time $2^{\calO(d)} \cdot n^{\calO(1)}$ unless $k <  {\sf rank}(G)$ and $d \leq k < 2d$.
\end{theorem}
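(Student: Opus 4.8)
The plan is to work with the identity $\vc(G) - \vc(G/F) = |F| - \mathrm{loss}(F)$, where $\mathrm{loss}(F) := \alpha(G) - \alpha(G/F)$ measures the drop in the independence number caused by contracting the edge set $F$. Since contracting edges never increases the independence number, $\mathrm{loss}(F) \ge 0$, so $(G,k,d)$ is a \yes-instance if and only if there is an $F$ with $|F| \le k$ and $\mathrm{loss}(F) \le |F| - d$. In particular such an $F$ must satisfy $|F| \ge d$ and $\mathrm{loss}(F) \le k - d$: the admissible contraction sets are precisely those whose independence-number loss is bounded by the small quantity $k-d$. I would also use the standard normalisation that the bags (the connected components of $(V(G),F)$) may be assumed to form a forest, so that $|F| = \sum_i (|B_i| - 1)$ and each bag of size $b$ costs exactly $b-1$ contractions.

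First I would dispose of the regimes in which the claimed bound is only $2^{\calO(d)} n^{\calO(1)}$. If $k < d$ the instance is trivially negative. If $k \ge \rank(G)$, one may contract a spanning forest to reach an edgeless graph, so the answer is \yes\ iff $\vc(G) \ge d$, which is decided by the parameterized \textsc{Vertex Cover} algorithm (testing whether $\vc(G) \le d-1$) in time $2^{\calO(d)} n^{\calO(1)}$. If instead $k \ge 2d$ (but $k < \rank(G)$), the loss budget $k-d$ is at least $d$, so one no longer needs to bound the number of loss-incurring bags and the matching/flow subroutine below solves the instance directly in polynomial time. The remaining regime, $d \le k < 2d$ and $k < \rank(G)$, is the genuinely expensive one; there $k-d < d$, and every relevant $F$ satisfies $|F| < 2d$ and touches only $\calO(d)$ vertices.

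The core of the argument is a structural lemma describing a canonical optimal solution. I would classify bags by their loss and show that the loss-free reductions can be maximised in polynomial time (they form a matching/flow structure that can be added greedily), while every bag that actually costs independence number is charged at least one unit of loss. Consequently, in the target regime an optimal $F$ uses at most $k-d$ loss-incurring bags, and I would prove, by an exchange argument, that each of them may be taken in a canonical form anchored at a single vertex whose remaining vertices are absorbed greedily around it. The algorithm then enumerates the set $Z$ of at most $k-d$ anchors, which contributes the $n^{k-d+\calO(1)}$ factor, and for each guess computes the maximum achievable reduction by a polynomial flow/matching routine that grows the anchored bags and simultaneously adds a maximum loss-free contraction set, accepting iff some guess reaches reduction at least $d$; the $2^{\calO(d)}$ factor absorbs the bookkeeping on the $\calO(d)$-sized core and the local recomputation of $\mathrm{loss}(F)$.

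The step I expect to be the main obstacle is the structural lemma itself: proving that an optimal contraction set decomposes into at most $k-d$ loss-incurring bags plus freely-addable loss-free contractions, and that each loss-incurring bag admits a single-vertex anchor — this single anchor per bag is exactly what keeps the exponent at $k-d$ rather than a constant multiple of it. The two delicate points are controlling the interaction between distinct bags, so that the per-bag losses genuinely add up and the reductions of different bags do not interfere, and showing that $\mathrm{loss}(F)$ for a fixed $F$ can be evaluated without knowing $\alpha(G)$ globally, by localising the comparison of maximum independent sets to the vicinity of the $\calO(d)$ contracted vertices.
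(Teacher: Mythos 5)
There is a genuine gap, and it is fatal to the proposed decomposition. Your plan rests on the claim that the \emph{loss-free} part of a solution (bags with $\mathrm{loss}=0$) ``can be maximised in polynomial time'' via a greedy matching/flow structure, and that all the hardness sits in the at most $k-d$ loss-incurring bags, which you enumerate via anchors at a cost of $n^{k-d}$. This inverts where the difficulty of the problem actually lies. In your own reformulation, the case $k=d$ is exactly the question of whether there is a forest $F$ with $|F|=d$ and $\mathrm{loss}(F)=0$, i.e.\ maximizing loss-free contractions; but this case is NP-hard (\autoref{thm:np-hard}, third item with $\ell=1$), so no polynomial-time matching/flow routine can solve it unless P $=$ NP. The paper spends its entire $2^{\calO(d)}$ budget precisely on this loss-free core: it converts the instance (via \autoref{lemma:solution-edge-pair} and an annotated version) into \textsc{Constrained MaxCut}, and for $k=d$ solves a directed variant by merging directed cycles into a DAG and running a dynamic program over a topological order (\autoref{lemma:algo-const-dir-maxcut}); the $n^{k-d}$ factor comes instead from guessing which $k-d$ independent-set vertices enter the enlarged vertex cover (\autoref{lemma:k-neq-d-to-k-eq-d}). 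Your anchor-enumeration intuition loosely parallels that guessing step, but without the structural machinery it cannot carry the proof, since the subproblem you feed each guess to is itself NP-hard.

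A second, related error: you claim that when $2d\le k<\rank(G)$ the instance is solved ``directly in polynomial time'' by the same subroutine. By \autoref{obs:bound-d-k}, that regime is equivalent to deciding whether $\vc(G)>d$, which is coNP-hard (\autoref{thm:np-hard}, second item); the correct treatment needs the FPT vertex-cover algorithm and hence the $2^{\calO(d)}$ factor, just as in your $k\ge\rank(G)$ case. Finally, even setting these aside, the structural lemma you flag as the main obstacle (per-bag additivity of $\mathrm{loss}$, single-vertex anchors, local evaluation of $\mathrm{loss}(F)$) is asserted rather than proved, and the additivity claim is genuinely problematic because $\mathrm{loss}$ is defined through a global comparison of maximum independent sets; the paper avoids this entirely by never decomposing the loss over bags, working instead with the enlarged-vertex-cover/rank conditions of \autoref{lemma:solution-edge-pair}.
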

Note that the above result implies, in particular, that the problem is \FPT\ parameterized by $d$ when $k - d$ is a constant.

\medskip
\noindent \textbf{Our methods}. A central tool in both our negative and positive results is \autoref{lemma:solution-edge-pair}, which allows us to reformulate the problem as follows. As discussed later, by applying appropriate \FPT\ reductions to the input graph $G$, it is possible to assume that we have at hand a minimum vertex cover $X$ of $G$. We say that a set of edges $F$ is a \emph{solution} of $(G, k, d)$ if $|F| \le k$ and $\vc(G/F) \le \vc(G) - d$.
\autoref{lemma:solution-edge-pair} implies that there exists such a solution (i.e., an edge set) if and only if there exist vertex subsets $X_s \subseteq X$ and $Y_s \subseteq V(G) \setminus X$ such that the pair $\langle X_s, Y_s \rangle$ satisfies the technical conditions mentioned in its statement (and which we prefer to omit here).
This reformulation allows us to convert the problem of finding a subset $F$ of edges to the problem of modifying the given minimum vertex cover $X$ to obtain another vertex cover $X_{\el} = (X \setminus X_s) \cup Y_s$ such that $|X_{\el}| \le |X| + (k - d)$ and $\rank(X_{\el}) \ge k$.
Here, we define $\rank(X_{\el}) := \rank(G[X_{\el}])$.
See \autoref{fig:sol-edge-sol-pair} for an illustration.

\smallskip

In our hardness reductions, another simple, yet critical, tool is  \autoref{lemma:pendant-out-of-pair}, which states that if there is a vertex which is adjacent to a pendant vertex, then there is a solution pair that does not contain this vertex.
We present overviews of the reductions in \autoref{sec:np-hardness} and \autoref{sec:w-hard} to  demonstrate the usefulness of these two lemmas in the respective hardness results. The reduction that we use to prove the third item in the statement of \autoref{thm:np-hard} (which is the most interesting case) is from a variant of \textsc{Multicolored Independent Set}, while the one in the proof of \autoref{thm:w1-hard} is from \textsc{Edge Induced Forest}, a problem that we define and that we prove to be \W[1]-\Hard\ in \autoref{thm:w-hardness-edge-induced-forest}, by a parameter preserving reduction from, again, \textsc{Multicolored Independent Set}. It is worth mentioning that the \W[1]-\Hardness\ in \autoref{thm:w-hardness-edge-induced-forest} holds even if we assume that the input graph $G$ is a bipartite graph with a bipartition $\langle X, Y \rangle$ such that $X$ is a minimum vertex cover of $G$, and such that $k < \rank(G)$ and $d \le k < 2d$. This case is the crux of the difficulty of the problem. 
This becomes clear in the \XP\ algorithm of \autoref{thm:algorithm} that we proceed to discuss.


\smallskip

The algorithm for \textsc{Contraction(\vc)}, which is our main technical contribution, is provided in \autoref{sec:algorithm}. A diagram of this algorithm is shown in \autoref{fig:diagram-algo}. By a standard Knapsack-type dynamic programming table, which is also mentioned in \cite{lima2020reducing}, we can assume that the input graph $G$ is connected. We distinguish three cases depending on the relation between $k, d$, and $\rank(G)$. The first two cases are easy, and can be solved in time $2^{\calO(d)} \cdot n^{\calO(1)}$,  by essentially running an \FPT\ algorithm to determine whether $\vc(G) < d$; see
\autoref{lemma:k-equal-rank-algo} and \autoref{lemma:k-larger-2k-algo}.
We now present  an overview of the algorithm for the third case, namely when its input $(G, k, d)$ is with guarantees that $k < \rank(G)$ and $d \le k < 2d$ (cf. \autoref{lemma:k-lesser-2k-algo}).
Inspired by \autoref{lemma:solution-edge-pair},  we introduce an annotated version of the problem called \textsc{Annotated Contraction(\vc)}.
We first argue  (cf. \autoref{lemma:large-oct-large-rank-vc}) that there is an algorithm that runs in time $2^{\calO(k)} \cdot n^{\calO(1)}$, and either correctly concludes that $(G, k, d)$ is a \yes-instance of \textsc{Contraction(\vc)} or finds a minimum vertex cover $X$ of $G$ such that $\rank(X) < d$.
Using this vertex cover, we can construct $2^{\calO(d)}$ many instances of \textsc{Annotated Contraction(\vc)} such that $(G, k, d)$ is a \yes-instance of \textsc{Contraction(\vc)} if and only if at least one of these newly created instances is a \yes-instance of \textsc{Annotated Contraction(\vc)} (cf. \autoref{lemma:contr-VC-to-annot-contr-vc}).
Hence, it suffices to design an algorithm to solve \textsc{Annotated Contraction(\vc)}. We show that we can apply a simple reduction rule (cf. \autoref{lemma:rr-anno-contr-edges-X}) that allows us to assume that the input graph $G$ of
\textsc{Annotated Contraction(\vc)} is
 bipartite with bipartition $\langle X, Y \rangle$ such that $X$ is a minimum vertex cover of $G$, as mentioned above.

\begin{figure}[t]
  \begin{center}
\tikzstyle{input} = [rectangle, draw, fill=green!80, text width=8em, text centered, rounded corners, minimum height=4em]
\tikzstyle{algo} = [rectangle, draw, fill=red!20, text width=8em, text centered, rounded corners, minimum height=4em]
\tikzstyle{reduction} = [rectangle, draw, fill=blue!20, text width=8em, text centered, rounded corners, minimum height=4em]
\tikzstyle{decision} = [diamond, draw, fill=cyan!20, text width=4.5em, text badly centered, node distance=3cm, inner sep=0pt]
\tikzstyle{line} = [draw, -latex']

\scalebox{0.85}{
\begin{tikzpicture}[node distance = 2cm, auto]
\node [input] (inputnode) {Instance $(G, k, d)$ of \textsc{Contr.(\vc)}};
\node [algo, below of=inputnode, node distance=2cm] (keqrankG) {Solve in $\calO^{\star}(2^{\calO(d)})$ using \autoref{lemma:k-equal-rank-algo}};
\node [algo, right of=keqrankG, node distance=5cm] (klerankG2dlek) {Solve in $\calO^{\star}(2^{\calO(d) })$ using \autoref{lemma:k-larger-2k-algo}};
\node[diamond, draw, fill=cyan!20, right of=inputnode, node distance=5cm] (invisiblenode) {};
\node [decision, right of=inputnode, node distance=10cm] (lemma-oct) {\autoref{lemma:large-oct-large-rank-vc} $\calO^{\star}(2^{\calO(k)})$};
\node [algo, above of=invisiblenode, node distance=1.5cm] (yesinstance) {$(G, k, d)$ is a \yes-instance};
\node [reduction, below of=lemma-oct, node distance=2.5cm]  (vc-x-rank-d) {Min. vertex cover $X$ of $G$ with $\rank(X) < d$};
\node [reduction, below of=vc-x-rank-d, node distance=2cm] (anno-contr-vc) {$2^{\calO(d)}$ instances of \textsc{Annotated Contr.(\vc)}};
\node [reduction, below of=klerankG2dlek, node distance=2.5cm] (input-bipartite) {$G$ is bipartite with bipartition $\langle X, Y \rangle$ and $X$ is min. vertex cover};
\node [reduction, below of=keqrankG, node distance=2.5cm] (constrained-max-cut) {Instance $((G, k, d), \cdot)$ of \textsc{Constrained MaxCut}};
\node [diamond, draw, fill=cyan!20, below of=constrained-max-cut, node distance=2.5cm] (is-k-eq-d) {};
\node [reduction, below of=is-k-eq-d, node distance=2cm]
(xp-factor) {Create $2^{\calO(d)} \cdot n^{k - d}$ many instances such that $k = d$ using \autoref{lemma:k-neq-d-to-k-eq-d}};
\node [reduction, below of=input-bipartite, node distance=2.5cm] (matching-x-y) {Simplify using \autoref{lemma:rr-only-matching} to get $|X| = |Y|$};
\node [reduction, below of=anno-contr-vc, node distance=2.5cm] (constrained-dir-max-cut) {Instance of \textsc{Constrained Directed MaxCut}};
\node [algo, below of=constrained-dir-max-cut, node distance=2.25cm] (dp-algo) {Solve in $\calO^{\star}(2^{\calO(k)})$ using  \autoref{lemma:algo-const-dir-maxcut}};

\path [line] (inputnode) -- node [anchor=east] {$k = \rank(G)$}(keqrankG);
\draw [line] (inputnode) -- node {$k < \rank(G)$}(invisiblenode);
\path [line] (invisiblenode) -- node {$2d \le k$}(klerankG2dlek);
\path [line] (invisiblenode) -- node {$d \le k < 2d$}(lemma-oct);
\path [line] (lemma-oct) |- node {} (yesinstance);
\path [line] (lemma-oct) -- node {} (vc-x-rank-d);
\path [line] (vc-x-rank-d) -- node [anchor=east] {\autoref{lemma:contr-VC-to-annot-contr-vc}} (anno-contr-vc);
\path [line] (anno-contr-vc) -- node {\autoref{lemma:rr-anno-contr-edges-X}} (input-bipartite);
\path [line] (input-bipartite) -- node {\autoref{lemma:red-const-contr-vc-maxcut}} (constrained-max-cut);
\path [line] (constrained-max-cut) -- node {} (is-k-eq-d);
\path [line] (is-k-eq-d) -- node {$k > d$} (xp-factor);
\path [line] (xp-factor) -| node [near start] {$k = d$} (matching-x-y);
\path [line] (is-k-eq-d) -- node {$k = d$} (matching-x-y);
\path [line] (matching-x-y) -- node {\autoref{lemma:redu-maxcut-digraph-maxcut}} (constrained-dir-max-cut);
\path [line] (constrained-dir-max-cut) -- node {} (dp-algo);
\end{tikzpicture}}
\end{center}
   \caption{Diagram of the algorithm for \textsc{Contraction(\vc)} given by \autoref{thm:algorithm}.
Recall that we can assume that $d \le k \leq \rank(G)$, hence the case distinction considered in the beginning is exhaustive. Note also that, in the case where $d \le k < 2d$, it holds that $\calO^{\star}(2^{\calO(k)})= \calO^{\star}(2^{\calO(d)})$.\label{fig:diagram-algo}}
\end{figure}
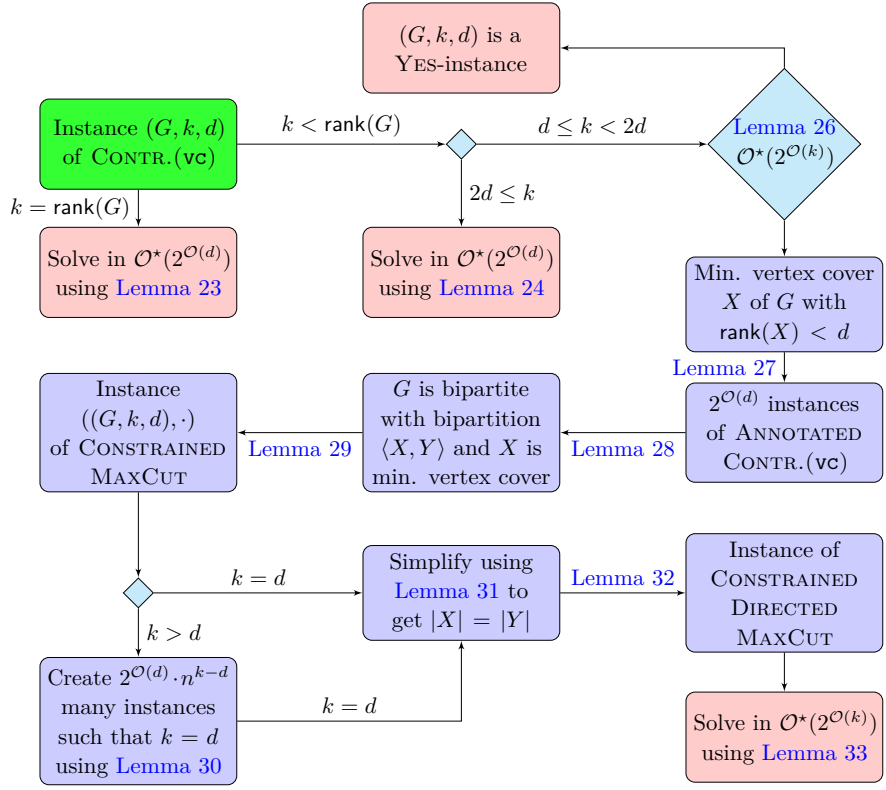

A solution of an instance of \textsc{Annotated Contraction(\vc)} is a solution pair $\langle X_s, Y_s \rangle$ as stated in \autoref{lemma:solution-edge-pair}.
We find convenient to present an algorithm that finds a partition $\langle V_L, V_R \rangle$ of $V(G)$ instead of a solution pair $\langle X_s, Y_s \rangle$.
To formalize this, we introduce the problem called \textsc{Constrained MaxCut} and we show it  to be equivalent to {\sc Annotated Contraction(\vc)} (cf. \autoref{lemma:red-const-contr-vc-maxcut}).
We partition the input instances of \textsc{Constrained MaxCut} into the following two types: $(1)$ $k = d$, and $(2)$ $k > d$. For the instances of the second type, we construct $2^{\calO(d)} \cdot n^{k - d}$ many instances of the first type such that the input instance is a \yes-instance if and only if at least one of these newly created instances is a \yes-instance (cf.
\autoref{lemma:k-neq-d-to-k-eq-d}).
We remark that this is the only step in the whole algorithm where a $n^{k-d}$-factor appears (note that this is unavoidable by \autoref{thm:w-hardness-edge-induced-forest}).

Finally, to handle the instances of the first type (i.e., with $k = d$), we first apply a simplification based on the existence of a matching saturating $X$ (cf. \autoref{lemma:rr-only-matching}), we introduce a directed variation of the problem called \textsc{Constrained Directed MaxCut}, and we prove it to be equivalent to its undirected version (cf. \autoref{lemma:redu-maxcut-digraph-maxcut}).
We then present a dynamic programming algorithm, with running time $2^{\calO(k)} \cdot n^{\calO(1)}$, that critically uses the fact that $k = d$ (cf. \autoref{lemma:algo-const-dir-maxcut}), in particular to ``merge'' appropriately some directed cycles in order to obtain a directed \emph{acyclic} graph, whose topological ordering gives a natural way to process the vertices of the input graph in a dynamic programming fashion.
At the end of \autoref{subsubsec-simplify-k-eq-d} we  present an overview of this algorithm.

\medskip

\noindent \textbf{Organization}.
In \autoref{sec:prelim} we present some notations, known results, preliminary results about \textsc{Contraction(\vc)}, and \autoref{lemma:solution-edge-pair} and \autoref{lemma:pendant-out-of-pair}.
In \autoref{sec:np-hardness} we present a reduction from a special case of \textsc{Multicolored Independent Set} to \textsc{Contraction(\vc)}.
This, along with other preliminary results, proves \autoref{thm:np-hard}.
In \autoref{sec:w-hard} we present two parameter preserving reductions, one from \textsc{Multicolored Independent Set} to \textsc{Edge Induced Forest}, and another one from \textsc{Edge Induced Forest} to \textsc{Contraction(\vc)}.
These two reductions, along with known results about \textsc{Multicolored Independent Set}, prove \autoref{thm:w1-hard}.
\autoref{sec:algorithm} is the most technical part of the paper, and contains the description of the algorithm to solve \textsc{Contraction(\vc)} mentioned in \autoref{thm:algorithm}.
We conclude the article in \autoref{sec:conclusion} with some open problems.

\section{Preliminaries}
\label{sec:prelim}
For a positive integer $q$, we denote {the} set $\{1, 2, \dots, q\}$ by $[q]$.
We use $\mathbb{N}$ to denote the collection of all non-negative integers.

\subsection{Graph theory}
We use standard graph-theoretic notation, and we refer the reader to~\cite{Diestel12} for any undefined notation. For an undirected graph $G$, sets $V(G)$ and $E(G)$ denote its set of vertices and edges, respectively.
For a directed graph $D$,  sets $V(D)$ and $A(D)$ {denote} its set of vertices and arcs, respectively.
We denote an edge with two endpoints $u, v$ as $uv$.
To avoid confusion with {edges}, we denote an arc with {tail} $u$ and {head} $v$ as $(u, v)$.
Unless otherwise specified, we use $n$ to denote the number of vertices in the input (di)graph $G$ of the problem under consideration.
Two vertices $u, v$ in $V(G)$ are \emph{adjacent} if there is an edge $uv$ {in $G$}.
The \emph{open neighborhood} of a vertex $v$, denoted by $N_G(v)$, is the set of vertices adjacent to $v$.
The \emph{closed neighborhood} of a vertex $v$, denoted by $N_G[v]$, is the set $N_G(v) \cup \{v\}$.
We say that a vertex $u$ is a \emph{pendant vertex} if $|N_G(v)| = 1$.
The \emph{in-neighbourhood} of $v$ in {a} digraph $D$ is the set of vertices $u$ such that $(u, v)$ is an arc in $A(D)$.
We say {that} $(u, v)$ is an \emph{in-coming} arc of $v$.
Similarly, {the} \emph{out-neighbourhood} of $v$ is the set of vertices $u$ such that $(v, u)$ is an arc in $A(D)$.
We say {that} $(v, u)$ is an \emph{out-going} arc of $v$.
We denote the out-neighbors of $v$ by $N_{\sf out}(v)$.
We omit the subscript in the notation for neighborhood if the graph under consideration is clear.

For a subset $S$ of $V(G)$, we define $N[S] = \bigcup_{v \in S} N[v]$ and $N(S) = N[S] \setminus S$.
For a subset $F$ of edges, we denote by $V(F)$ the collection of endpoints of edges in $F$.
For a subset $S$ of $V(G)$ ({resp.} a subset $F$ of $E(G)$), we denote the graph obtained by deleting $S$ ({resp.} deleting $F$) from $G$ by $G - S$ ({resp.} by $G - F$).
We denote the subgraph of $G$ induced on the set $S$ by $G[S]$.
For two subsets $S_1, S_2$ of $V(G)$, $E(S_1, S_2)$ denotes the set of edges with one endpoint in $S_1$ and another one in $S_2$.
With {a slight abuse of notation}, we use $E(S_1)$ to denote the set $E(S_1, S_1)$.
Similarly, we define these notations for digraphs.
Namely,  $A(S_1, S_2)$ denotes the set of arcs with {tail} in $S_1$ and {head} in $S_2$.

A graph is {\em connected} if there is a path between every pair of distinct vertices.
A subset $S \subseteq V(G)$ is said to be a \emph{connected set} if $G[S]$ is connected.
A \emph{spanning tree} of a connected graph is {a} connected acyclic subgraph that includes all the vertices of the graph.
A \emph{spanning forest} of a graph is a collection of spanning trees of its connected components.
The \emph{rank} of a graph $G$, denoted by $\rank(G)$, is the number of edges of a spanning forest of $G$ with the maximum number of edges.
The \emph{rank} of a set $X \subseteq V(G)$ of vertices, denoted by $\rank(X)$, is the rank of $G[X]$.
The \emph{rank} of a set $F \subseteq V(G)$ of edges, denoted by $\rank(F)$, is the rank of $V(F)$.
Note that an edge contraction decreases the rank of a graph $G$ by exactly one.

A set of vertices $Y$ is said to be {an} \emph{independent set} if no two vertices in $Y$ are adjacent.
We use the following observation.
\begin{observation}
\label{obs:rank-bound-vertices}
Consider two independent sets $X, Y$ in a graph $G$ such that there is no isolated vertex in $G[X \cup Y]$.
If $\emph{\rank}(E(X, Y)) \le k$, then $|X|,  |Y| \le k$.
\end{observation}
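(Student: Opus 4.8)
The plan is to reduce everything to the structure of the bipartite graph $H := G[X \cup Y]$. First I would observe that since $X$ and $Y$ are both independent sets, every edge of $H$ has one endpoint in $X$ and the other in $Y$, so $E(H) = E(X,Y)$; moreover, since $H$ has no isolated vertex, the endpoint set $V(E(X,Y))$ is exactly $X \cup Y$. By the definitions of rank of an edge set and of a vertex set, this yields $\rank(E(X,Y)) = \rank(V(E(X,Y))) = \rank(H)$. Hence the hypothesis $\rank(E(X,Y)) \le k$ becomes $\rank(H) \le k$, and it suffices to bound both $|X|$ and $|Y|$ from above by $\rank(H)$.

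Next I would exploit the characterization $\rank(H) = |V(H)| - c$, where $c$ is the number of connected components of $H$, and carry out the count component by component. Let $C_1, \dots, C_c$ be the connected components of $H$. Because $H$ has no isolated vertex, each $C_i$ has at least two vertices; being a connected bipartite graph, $C_i$ therefore meets both sides, so setting $a_i := |V(C_i) \cap X|$ and $b_i := |V(C_i) \cap Y|$ we have $a_i \ge 1$ and $b_i \ge 1$. A connected graph on $a_i + b_i$ vertices has rank $a_i + b_i - 1$, so summing over components gives $\rank(H) = \sum_{i=1}^{c} (a_i + b_i - 1)$.

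Finally, writing $|X| = \sum_{i=1}^{c} a_i$ and using $b_i \ge 1$ for every $i$, I would compute $\rank(H) - |X| = \sum_{i=1}^{c} (b_i - 1) \ge 0$, which gives $|X| \le \rank(H) \le k$; the bound $|Y| \le k$ follows by the symmetric computation with the roles of $X$ and $Y$ exchanged. The only genuine subtlety, and the step I would be most careful about, is justifying that each component contributes at least one vertex to \emph{each} of $X$ and $Y$: this is precisely where the no-isolated-vertex hypothesis is needed, as it rules out singleton components and guarantees connectivity forces a vertex on both sides, which is exactly what makes the telescoping estimate $\sum_i (b_i - 1) \ge 0$ (and its counterpart for $X$) valid.
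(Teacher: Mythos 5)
Your proof is correct. The paper actually states this observation without any proof (it is treated as immediate from the definitions), and your argument --- identifying $\rank(E(X,Y))$ with $\rank(G[X\cup Y])$ via the no-isolated-vertex hypothesis, then counting $|V(C_i)|-1$ over the connected components $C_i$ and using that each component meets both $X$ and $Y$ --- is exactly the natural justification the authors leave implicit, so there is nothing to compare against beyond noting that your write-up fills in the details soundly (including the key point that singleton components are excluded, which also forces $X$ and $Y$ to be disjoint).
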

A graph is \emph{bipartite} if its vertex set can be partitioned into two independent sets.
For a graph $G$, a set $X \subseteq V(G)$ is said to be {a} \emph{vertex cover} if $V(G) \setminus X$ is an independent set.
A set of vertices $Y$ is said to be a \emph{clique} if any two vertices in $Y$ are adjacent.
A set of edges $M$ is called {a} \emph{matching} if no two edges in $M$ share an endpoint.
We say {that a} matching $M$ \emph{saturates} a set $X \subseteq V(G)$ if $X \subseteq V(M)$.

A vertex cover $X$ is a \emph{minimum vertex cover} if for any other vertex cover $Y$ of $G$, we have $|X| \le |Y|$.
We denote by $\vc(G)$ the size of {a} minimum vertex cover of {a graph} $G$.
As {a} vertex cover needs to contain at least {one} vertex from each edge in a matching, $\vc(G)$ is at least the size of a maximum matching.
Consider a minimum vertex cover $X$.
For any $X' \subseteq X$, we have $|X'| \le  |N(X')|$ as otherwise $Y = (X \setminus X') \cup N(X')$ is another vertex cover of $G$ and $|Y| < |X|$,
contradicting the fact that $X$ is a minimum vertex cover.
As $|X'| \le  |N(X')|$ for every $X' \subseteq X$, Hall's theorem~\cite{Hall35} in bipartite graphs implies that there exists a matching saturating a minimum vertex cover $X$ in $G$.
Such a matching can be found in polynomial time \cite{HopcroftK73}. 
For a graph $G$,  a set $X \subseteq V(G)$ is said to be {an} \emph{odd cycle transversal} if $G - X$ is a bipartite graph.
An odd cycle transversal $X$ is a \emph{minimum odd cycle transversal} if for any other odd cycle transversal $Y$ of $G$, we have $|X| \le |Y|$.
We denote by $\oct(G)$ the size of {a} minimum odd cycle transversal of {a graph} $G$.
We need the following algorithmic results regarding vertex covers and odd cycle transversals.
\begin{proposition}[\!\!\cite{chen2010improved}]
\label{prop:find-vc}
There is an algorithm that takes as input a graph $G$ and an integer $\ell$, runs in time $1.2738^{\ell}\cdot n^{\calO(1)}$, and correctly determines whether $\emph{\vc}(G) \le \ell$.
\end{proposition}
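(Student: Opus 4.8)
The plan is to design a bounded-search-tree (branching) algorithm whose measure is the budget $\ell$, so that the search tree has depth at most $\ell$ and the running time is governed entirely by the branching vectors. The key structural fact is that, for any vertex $v$, an optimal vertex cover either contains $v$ or contains all of $N(v)$: if $v$ is excluded from the cover, then every edge incident to $v$ must be covered by its other endpoint, forcing $N(v)$ into the cover. This yields the fundamental branching step: pick a vertex $v$ of maximum degree and recurse on the two instances $(G - v, \ell - 1)$ and $(G - N[v], \ell - \deg(v))$. We answer \YES\ as soon as the remaining graph is edgeless and $\ell \ge 0$, and \NO\ as soon as $\ell < 0$.

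Before branching, I would exhaustively apply reduction rules to eliminate low-degree vertices, since these are precisely the ones giving poor branching vectors. First, delete isolated vertices. Second, for a degree-$1$ vertex $u$ with neighbour $w$, there is always an optimal cover containing $w$ rather than $u$, so put $w$ in the cover and recurse on $(G - \{u,w\}, \ell - 1)$. Third, for a degree-$2$ vertex $u$ with neighbours $x,y$: if $x$ and $y$ are adjacent, take both into the cover; otherwise apply \emph{folding}, replacing $\{u,x,y\}$ by a single new vertex adjacent to $(N(x) \cup N(y)) \setminus \{u\}$ and decrementing $\ell$ by one, which preserves the answer. After these rules the graph has minimum degree at least $3$.

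The running-time analysis then rests on the branching vectors. A naive branch on an edge gives the recurrence $T(\ell) \le 2\,T(\ell-1)$ and the bound $2^{\ell}$; branching on a maximum-degree vertex of degree at least $3$ already improves this to $T(\ell) \le T(\ell-1) + T(\ell-3)$, whose characteristic root is roughly $1.47$. Obtaining the sharp base $1.2738$ is the real work, and is where I expect the main difficulty to lie: when the maximum degree is exactly $3$ (and in the boundary degree-$4$ and degree-$5$ cases) the simple branching vector is insufficient, and one must use refined techniques. Concretely, I would branch on a degree-$3$ vertex $v$ together with its \emph{mirrors} (non-neighbours whose neighbourhood is dominated by $N(v)$, which can be forced into the cover in the branch where $v$ is excluded) and \emph{satellites}, and carry out an exhaustive case analysis on the local structure of $G[N[v]]$ and on the adjacencies among the second neighbours of $v$. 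Each case is designed to yield a branching vector whose root is at most $1.2738$, and taking the worst case over all of them gives the claimed bound.

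Finally, the polynomial factor is handled by first reducing to a kernel: a standard crown-decomposition or LP-relaxation argument produces, in time $n^{\calO(1)}$, an equivalent instance on $\calO(\ell)$ vertices, after which the exponential search runs on an instance whose size is independent of $n$. The dominant obstacle throughout is the sheer length and delicacy of the degree-$3$ case analysis needed to certify that \emph{every} branching vector meets the $1.2738$ threshold; this exhaustive verification, rather than any single clever idea, is what makes the result technically demanding.
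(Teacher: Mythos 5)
The paper never proves this proposition: it is imported as a black box from Chen, Kanj, and Xia~\cite{chen2010improved}, and \autoref{prop:find-vc} is used downstream (in \autoref{lemma:k-equal-rank-algo} and \autoref{lemma:k-larger-2k-algo}) purely as a citation. Your outline does correctly reproduce the architecture of that reference: the take-$v$-or-take-$N(v)$ branching, removal of degree-one vertices, degree-two folding, kernelization to $\calO(\ell)$ vertices so that the exponential part runs on an instance of size independent of $n$, and the use of mirrors and satellites when the maximum degree is small. As a description of the known route, it is accurate.

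As a \emph{proof}, however, it has a genuine gap, and the gap sits exactly where the content of the statement lies. The steps you actually carry out --- correctness of the two-way branching, the low-degree reduction rules, and folding --- only justify the recurrence $T(\ell) \le T(\ell-1) + T(\ell-3)$, i.e.\ a bound of roughly $1.466^{\ell} \cdot n^{\calO(1)}$. The constant $1.2738$ does not follow from anything you establish; it is the outcome of the exhaustive local case analysis (on the structure around degree-three vertices, their mirrors and satellites, good pairs, and a generalized folding operation, together with an amortized measure), which you explicitly defer. Since the entire assertion of the proposition is that the base can be pushed down to $1.2738$, deferring that analysis means the claimed running time is announced rather than proven. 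This would be perfectly acceptable if your intent were to cite \cite{chen2010improved} for the constant --- which is precisely what the paper itself does --- but as a self-contained argument the quantitative heart of the statement is missing, and no amount of polishing of the elementary branching you describe will close the distance between $1.466$ and $1.2738$ without that case analysis.
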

\begin{proposition}[Corollary~$10$ in \cite{narayanaswamy2012lp}]
\label{prop:find-oct}
There is an algorithm that takes as input a graph $G$ and an integer $\ell$, runs in time $2.6181^{\ell} \cdot n^{\calO(1)}$ and determines whether $\emph{\oct}(G) \le \ell$.
\end{proposition}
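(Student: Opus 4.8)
The plan is to recover the claimed running time by reducing \textsc{Odd Cycle Transversal} to \textsc{Vertex Cover} parameterized \emph{above} the optimum of its standard LP relaxation, and then solving the latter by LP-guided branching; the constant $2.6181 \approx \varphi^2$, where $\varphi = (1+\sqrt{5})/2$ is the golden ratio, will emerge from the branching analysis. First I would apply the known gadget reduction that, from an instance $(G, \ell)$, builds an auxiliary graph $H$ on two copies $v^1, v^2$ of each vertex $v$, replacing every edge $uv$ of $G$ by the pair $u^1v^2, u^2v^1$ (the bipartite double cover). This $H$ is bipartite, so the all-$\tfrac12$ assignment is an optimal LP solution and $\vc^{*}(H) = |V(G)|$; the crucial property to verify is that $\vc(H) = |V(G)| + \oct(G)$, so that $G$ admits an odd cycle transversal of size at most $\ell$ if and only if $H$ admits a vertex cover of size at most $|V(G)| + \ell$, i.e.\ of value exactly $\ell$ above its LP optimum. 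Thus it suffices to solve \textsc{Vertex Cover Above LP} in time $2.6181^{\mu}\cdot n^{\calO(1)}$, where $\mu$ is the excess of the budget over $\vc^{*}$, since the reduction yields $\mu = \ell$.

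For the \textsc{Vertex Cover Above LP} instance, I would branch guided by a half-integral optimal LP solution, whose existence is the Nemhauser--Trotter theorem: an optimum with all values in $\{0, \tfrac12, 1\}$. Vertices receiving value $1$ can be placed in the cover and vertices receiving value $0$ discarded by the persistence property, and after these reductions one may assume the unique optimum is the all-$\tfrac12$ assignment. Then I would pick a vertex $v$ and branch into two cases: either $v$ enters the cover, or $v$ is excluded and all of $N(v)$ is forced into the cover. The measure is $\mu = k - \vc^{*}$, and the heart of the argument is to show, using the half-integrality and a standard preprocessing to eliminate low-degree vertices, that committing a $\tfrac12$-vertex to the cover drops $\vc^{*}$ by exactly $\tfrac12$ while lowering the budget by $1$, so $\mu$ decreases by $\tfrac12$ in the first branch and by (at least) $1$ in the second.

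These two branches give the recurrence $T(\mu) \le T(\mu - \tfrac12) + T(\mu - 1)$. Substituting $y = x^{1/2}$ into its characteristic equation yields $y^{2} - y - 1 = 0$, hence $y = \varphi$ and branching number $x = \varphi^{2} \approx 2.618$; therefore $T(\mu) = \calO(2.6181^{\mu})$, and combined with the reduction this gives the bound $2.6181^{\ell}\cdot n^{\calO(1)}$ for \textsc{Odd Cycle Transversal}. (An alternative route, which I would keep as a fallback, replaces the direct reduction by iterative compression: maintaining an odd cycle transversal over vertex prefixes reduces the task to compressing a transversal of size $\ell+1$, and the compression step, after guessing how an optimal solution meets the current transversal and exploiting that the rest of the graph is bipartite, again reduces to the above-LP vertex cover question.)

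The step I expect to be the main obstacle is precisely the LP bookkeeping that certifies the $(\tfrac12, 1)$ branching vector. One must prove correctness of the Nemhauser--Trotter persistence reductions in the parameterized (above-LP) setting, argue that after preprocessing the only LP optimum is totally fractional, and establish the two lemmas quantifying how $\vc^{*}$ changes when a half-vertex is forced in or out of the cover. Everything else --- the double-cover reduction, the verification that $\vc(H) = |V(G)| + \oct(G)$, and the routine degree-reduction rules --- is comparatively mechanical, but extracting the base $\varphi^2 \approx 2.618$ rather than the naive $3$ from the classical $3^{\ell}$ algorithm is exactly where the amortized LP analysis is indispensable.
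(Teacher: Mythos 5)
First, a point of reference: the paper does not prove this proposition at all; it is imported as a black box (Corollary~10 of the cited work of Narayanaswamy, Raman, Ramanujan and Saurabh), so what you are attempting is a reconstruction of that paper's proof. Your high-level plan---reduce \textsc{Odd Cycle Transversal} to \textsc{Vertex Cover} above the LP optimum and solve the latter by LP-guided branching with recurrence $T(\mu)\le T(\mu-\tfrac12)+T(\mu-1)$ and base $\varphi^{2}\approx 2.6181$---is indeed the plan of that paper. However, your reduction gadget is wrong, and the error is fatal. The bipartite double cover $H$ (only the edges $u^{1}v^{2}$, $u^{2}v^{1}$) is bipartite, and on bipartite graphs the vertex cover LP is integral (K\"onig--Egerv\'ary), so $\vc(H)=\vc^{*}(H)$; your claimed identity $\vc(H)=|V(G)|+\oct(G)$ would therefore force $\oct(G)\le 0$ for every $G$. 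Concretely, for $G=K_{3}$ the double cover is $C_{6}$, where $\vc(C_{6})=3=|V(G)|$ but $\oct(K_{3})=1$. (Another way to see that something must be off: if the target of the reduction were bipartite, the resulting above-LP vertex cover instance would be solvable in polynomial time, and you would have shown that \textsc{Odd Cycle Transversal} is polynomial-time solvable.) The correct gadget keeps the edges of $G$ \emph{inside} each copy and joins the two copies by a perfect matching: $V(H)=\{v^{1},v^{2} : v\in V(G)\}$ and $E(H)=\{u^{i}v^{i} : uv\in E(G),\ i\in\{1,2\}\}\cup\{v^{1}v^{2} : v\in V(G)\}$. Then $\vc^{*}(H)=|V(G)|$ (the all-$\tfrac12$ solution gives the upper bound, the perfect matching $\{v^{1}v^{2}\}$ the lower bound), and $\vc(H)=|V(G)|+\oct(G)$: from an odd cycle transversal $S$ with bipartition $A\uplus B$ of $G-S$, the set $S^{1}\cup S^{2}\cup A^{1}\cup B^{2}$ (where $Z^{i}$ denotes the copy-$i$ vertices of $Z$) is a vertex cover of size $|V(G)|+|S|$; conversely, from a cover $C$, the set $S=\{v : v^{1},v^{2}\in C\}$ has size at most $|C|-|V(G)|$, and $G-S$ is bipartite, since an edge of $G-S$ with both endpoints in $\{v\notin S : v^{1}\in C\}$ would leave the corresponding copy-$2$ edge uncovered, and symmetrically for copy~$1$.

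Second, the branching vector $(\tfrac12,1)$ that you assert is not delivered by the argument you sketch, and you rightly flag this as the crux. When the all-$\tfrac12$ assignment is the unique LP optimum, half-integrality does give $\vc^{*}(G-v)=\vc^{*}(G)-\tfrac12$ in the branch that takes $v$; but in the branch that takes $N(v)$, the same extension argument only yields $\vc^{*}(G-N[v])\ge \vc^{*}(G)-\deg(v)+\tfrac12$, so \emph{both} branches provably drop $\mu$ by $\tfrac12$, and the recurrence you can actually justify is $T(\mu)\le 2\,T(\mu-\tfrac12)$, i.e.\ $4^{\mu}$, not $\varphi^{2\mu}$. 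Extracting a drop of $1$ in the second branch is precisely the nontrivial content of the cited paper (re-applying the LP-based reductions after each branch and charging their additional gains, with separate treatment of low-degree structures). So, as written, your proof has a genuine gap there, on top of the incorrect reduction.
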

\begin{proposition}[Corollary~$15$ in \cite{narayanaswamy2012lp}]
\label{prop:find-vc-para-oct}
There is an algorithm that takes as input a graph $G$, runs in time $1.6181^{\emph{\oct(G)}}\cdot n^{\calO(1)}$ and computes a minimum vertex cover of $G$.
\end{proposition}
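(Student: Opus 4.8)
The plan is to prove \autoref{prop:find-vc-para-oct}, which states that a minimum vertex cover can be computed in time $1.6181^{\oct(G)} \cdot n^{\calO(1)}$. The key idea is to leverage the structure imposed by a minimum odd cycle transversal together with the classical fact that \textsc{Vertex Cover} reduces to a matching/flow computation on bipartite graphs. First I would compute a minimum odd cycle transversal $O$ of $G$ using the algorithm of \autoref{prop:find-oct}; let $o = |O| = \oct(G)$, so this costs $2.6181^{o} \cdot n^{\calO(1)}$. By definition $G - O$ is bipartite, say with bipartition $\langle A, B \rangle$, which can be found in polynomial time.

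The heart of the argument is a guessing step over the solution's intersection with $O$. For each of the $2^{o}$ subsets $S \subseteq O$, I would attempt to extend $S$ into a minimum vertex cover whose restriction to $O$ is exactly $S$. Fixing $S$ as the set of vertices of $O$ placed \emph{inside} the cover, the vertices $O \setminus S$ are forced to be \emph{outside}; hence every neighbor (in $A \cup B$) of a vertex of $O \setminus S$ must be included in the cover. After performing this forcing and deleting the already-decided vertices, what remains is the problem of finding a minimum vertex cover of the bipartite graph $G[A \cup B]$ after removing the forced vertices and their incident edges. By the König--Egerváry theorem, a minimum vertex cover of a bipartite graph equals the size of a maximum matching and can be computed in polynomial time (e.g.\ via \cite{HopcroftK73}, already cited in the excerpt). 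Taking $S$ together with the forced neighbors and this optimal bipartite cover yields a vertex cover of $G$; ranging over all $S$ and keeping the smallest such cover produces a minimum vertex cover, since some optimal solution induces \emph{some} subset $S$ on $O$ and our enumeration considers that $S$.

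This already gives a running time of $2^{o} \cdot n^{\calO(1)}$, but $2 > 1.6181$, so the naive enumeration over all subsets of $O$ is too slow to match the claimed base. To sharpen the base of the exponential from $2$ to the golden ratio $1.6181 \approx \varphi$, I would not branch blindly on all $2^{o}$ subsets but instead use a smarter branching or an iterative-compression-style argument on $O$. The most natural route is to reduce the problem of choosing the partition of $O$ to an \textsc{Almost 2-SAT} / \textsc{Vertex Cover above matching} style instance: the forcing constraints between $O$ and the bipartite side are $2$-clause-like, and one can cast the optimization of the $O$-partition as a problem solvable in time $\varphi^{o} \cdot n^{\calO(1)}$ using the LP-based or measure-and-conquer branching techniques that underlie the $2.6181^{o}$ \oct\ algorithm and its $1.6181^{o}$ corollary in \cite{narayanaswamy2012lp}. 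Indeed, since this is stated as a corollary in that paper, the cleanest presentation is to invoke their framework directly.

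The main obstacle, and the step requiring genuine care rather than routine bookkeeping, is precisely obtaining the golden-ratio base $1.6181$ rather than the easy base $2$. The bipartite-reduction and König--Egerváry parts are standard and polynomial; the forcing step is a short correctness check; but shaving the branching factor down to $\varphi$ relies on the specialized LP-relaxation and above-guarantee branching machinery of \cite{narayanaswamy2012lp}, and reproducing that in full would essentially mean re-deriving their \oct\ algorithm. Consequently, I expect the honest proof to consist of the structural reduction outlined above together with a careful invocation of their $1.6181^{o}$ bound, emphasizing that once a minimum \oct\ is in hand, completing it to an optimal vertex cover costs no more than the same golden-ratio factor.
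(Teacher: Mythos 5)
The statement you are asked to prove is not actually proved in the paper: it is imported verbatim as Corollary~15 of \cite{narayanaswamy2012lp}, so the paper's ``proof'' is the citation itself. Your reconstruction attempt must therefore stand on its own, and it has a concrete gap: no step of your outline achieves the golden-ratio base. Your very first step, computing a minimum odd cycle transversal via \autoref{prop:find-oct}, already costs $2.6181^{\oct(G)} \cdot n^{\calO(1)}$, which exceeds the claimed bound of $1.6181^{\oct(G)} \cdot n^{\calO(1)}$; the subsequent enumeration of all $2^{|O|}$ subsets of $O$ (with K\"onig/Hopcroft--Karp on the bipartite remainder) costs $2^{\oct(G)} \cdot n^{\calO(1)}$, which also exceeds it. You acknowledge both problems, but the repair you sketch---recasting the choice of the partition of $O$ as an \textsc{Almost 2-SAT}-style instance and ``invoking the framework''---is never made precise, and since the statement being proven \emph{is} a corollary of that framework, at that point your argument reduces to citing the result rather than proving it.

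For the record, the actual route in \cite{narayanaswamy2012lp} is different and needs neither the computation of an odd cycle transversal nor any enumeration over its subsets. Their main algorithm solves \textsc{Vertex Cover} in time $2.6181^{\vc(G) - \vc^*(G)} \cdot n^{\calO(1)}$, where $\vc^*(G)$ denotes the optimum of the LP relaxation (the ``above LP'' branching algorithm). This is combined with the combinatorial inequality $\vc(G) - \vc^*(G) \le \oct(G)/2$, which one proves by looking at the graph induced by the half-valued vertices of an optimal half-integral LP solution, removing a minimum odd cycle transversal of that induced graph (whose size is at most $\oct(G)$), and covering the bipartite remainder optimally via K\"onig's theorem. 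Since the golden ratio $\varphi \approx 1.6181$ satisfies $\varphi^2 = \varphi + 1 \approx 2.6181$, the running time is bounded by $2.6181^{\oct(G)/2} \cdot n^{\calO(1)} = 1.6181^{\oct(G)} \cdot n^{\calO(1)}$; the transversal appears only in the analysis, never in the algorithm. This inequality, together with the identity $\varphi^2 = \varphi + 1$, is precisely the ingredient your sketch is missing, and it is why the ``easy base $2$'' cannot be salvaged by your forcing-plus-K\"onig decomposition alone.
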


The algorithm in \autoref{prop:find-vc} can be easily modified to compute $\vc(G)$ if $\vc(G) \le \ell$.
For a graph $G$, we denote by $\bc(G)$ the minimum number of edges in $G$ that {need} to be contracted to make it a bipartite graph.
Note that for a set $F \subseteq E(G)$, if one can obtain a bipartite graph by contracting all edges in $F$, then one can obtain a bipartite graph by deleting one endpoint of every edge in $F$.
Hence,  we have the following observation.
\begin{observation}
\label{obs:oct-bound-bc}
For a graph $G$, $\emph{\oct}(G) \le \emph{\bc}(G)$.
\end{observation}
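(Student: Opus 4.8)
The plan is to formalise the implication sketched just before the statement: any edge set whose contraction makes $G$ bipartite can be turned, by keeping a single representative per contracted component, into an odd cycle transversal of no larger size. Concretely, I would fix a set $F \subseteq E(G)$ of minimum size with $G/F$ bipartite, so that $|F| = \bc(G)$, and use it to construct an odd cycle transversal $S$ with $|S| \le |F| = \bc(G)$, which immediately yields $\oct(G) \le \bc(G)$.

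First I would record the structure of the contraction. Let $H = (V(G), F)$ and let $C_1, \dots, C_t$ be its connected components; these are exactly the vertices of $G/F$, and two components $C_i, C_j$ are adjacent in $G/F$ if and only if $G$ has an edge between them (edges of $G$ lying inside a single component are contracted away). Since $G/F$ is bipartite, there is a $2$-colouring $\chi \colon \{C_1,\dots,C_t\} \to \{0,1\}$ with $\chi(C_i) \ne \chi(C_j)$ whenever $C_iC_j$ is an edge of $G/F$. Next I would choose a representative $r_i \in C_i$ for each component and set $S = V(G) \setminus \{r_1, \dots, r_t\}$. For the size bound, note that $|S| = |V(G)| - t = \rank(H) \le |F|$, because the rank of a graph never exceeds its number of edges; hence $|S| \le \bc(G)$. (Equivalently, each edge of $F$ has both endpoints in one component, which keeps a single representative, so $S$ meets every edge of $F$ — this is the ``delete one endpoint of every edge'' phrasing.) To see that $S$ is an odd cycle transversal, observe that $G - S = G[\{r_1,\dots,r_t\}]$, and any edge $r_ir_j$ of this graph is an edge of $G$ between two distinct components $C_i, C_j$, so $C_iC_j$ is an edge of $G/F$ and thus $\chi(C_i)\ne\chi(C_j)$; colouring each $r_i$ by $\chi(C_i)$ is then a proper $2$-colouring of $G - S$, proving $G-S$ bipartite.

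The only point needing care — and the step I expect to be the main, if minor, obstacle — is the correspondence between edges of $G - S$ and edges of $G/F$, i.e. verifying that the subgraph induced on the representatives really embeds into $G/F$ and so inherits bipartiteness. This relies on the paper's contraction convention producing a \emph{simple} graph, so that no self-loop survives and every remaining edge joins two distinct, oppositely coloured components; once this is checked, $S$ is an odd cycle transversal with $|S| \le \bc(G)$, and the observation follows.
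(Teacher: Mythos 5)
Your proof is correct and takes essentially the same approach as the paper: the paper's entire justification is the one-line remark that if contracting all edges in $F$ yields a bipartite graph, then one can instead delete one endpoint of every edge in $F$, and your representative-per-component construction (keeping one vertex of each component of $(V(G),F)$ and deleting the remaining $\rank((V(G),F)) \le |F|$ vertices, colored via the bipartition of $G/F$) is exactly the careful formalization of that remark. Nothing further is needed.
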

Consider a {(not necessarily proper)} $2$-coloring $\psi : V(G) \mapsto \{1, 2\}$.
Heggernes et al. \cite{heggernes2013obtaining} defined a notion of \emph{cost} of a $2$-coloring $\psi$ of a graph as $\sum_{X \in M_{\psi}}(|X| - 1)$, where $M_{\psi}$ is the set of monochromatic components of $\psi$.
Let $(V_1, V_2)$ be the partition of $V(G)$ such that every vertex in $V_1$ and  $V_2$ has color $1$ and $2$, respectively.
It is easy to see that cost of $\psi$ is equal to $\rank(V_1) + \rank(V_2)$.
We restate \cite[Lemma~$1$]{heggernes2013obtaining}
as the following observation.
\begin{observation}
\label{obs:bc-rank-equiv}
For a graph $G$, $\emph{\bc}(G) \le \ell$ if and only if there exists a  partition $(V_L, V_R)$ of $V(G)$ such that $\emph{\rank}(V_L) + \emph{\rank}(V_R) \le \ell$.
\end{observation}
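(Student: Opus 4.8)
The plan is to prove the two directions of the equivalence separately, using as the main engine the fact---established just above the statement---that the cost of a $2$-coloring $\psi$ with color classes $(V_1,V_2)$ equals $\rank(V_1)+\rank(V_2)$, together with the elementary relationship between edge contractions and spanning forests. In fact, since the excerpt already records that the cost of $\psi$ equals $\rank(V_1)+\rank(V_2)$, the cleanest route is to combine this identity with \cite[Lemma~$1$]{heggernes2013obtaining} (which asserts that $\bc(G)\le\ell$ if and only if $G$ admits a $2$-coloring of cost at most $\ell$); the observation then follows immediately. Below I spell out the self-contained argument behind this combination.

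For the forward direction, suppose $\bc(G)\le \ell$, so there is a set $F\subseteq E(G)$ with $|F|\le \ell$ such that $G/F$ is bipartite. The vertices of $G/F$, the \emph{super-vertices}, are precisely the connected components of the spanning subgraph $(V(G),F)$. Fix a proper $2$-coloring of $G/F$ and pull it back to a $2$-coloring $\psi$ of $G$ by assigning to each vertex the color of the super-vertex containing it, with color classes $(V_L,V_R)$. The key claim is that the monochromatic components of $\psi$ are \emph{exactly} the super-vertices: every super-vertex is monochromatic and connected through its $F$-edges, and conversely no monochromatic edge can join two distinct super-vertices (such an edge would either lie in $F$---placing both endpoints in the same super-vertex---or survive in $G/F$ as a monochromatic edge, contradicting bipartiteness). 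Since each super-vertex $S$ is connected by the $F$-edges inside it, it contains at least $|S|-1$ edges of $F$; summing over all super-vertices gives $\rank(V_L)+\rank(V_R)=\sum_S(|S|-1)\le |F|\le \ell$, as desired.

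For the backward direction, assume a partition $(V_L,V_R)$ of $V(G)$ with $\rank(V_L)+\rank(V_R)\le \ell$. Let $F$ consist of the edges of a spanning forest of $G[V_L]$ together with those of a spanning forest of $G[V_R]$, so that $|F|=\rank(V_L)+\rank(V_R)\le \ell$. Contracting $F$ collapses each connected component of $G[V_L]$ and of $G[V_R]$ to a single vertex; coloring these super-vertices by the side they came from yields a proper $2$-coloring of $G/F$, because any edge of $G$ with both endpoints in $V_L$ (or both in $V_R$) lies inside one component and hence disappears under the contraction, so every surviving edge joins a $V_L$-super-vertex to a $V_R$-super-vertex. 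Thus $G/F$ is bipartite and $\bc(G)\le |F|\le \ell$.

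The argument is essentially bookkeeping, and I do not expect any genuine obstacle. The one point requiring care is the exact correspondence between monochromatic components and super-vertices in the forward direction, since it is what turns the inequality $\sum_S(|S|-1)\le |F|$ into the identity $\rank(V_L)+\rank(V_R)=\text{cost}(\psi)$; verifying that no same-colored edge can connect two distinct super-vertices is the crux, and it follows directly from the bipartiteness of $G/F$.
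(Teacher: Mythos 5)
Your proposal is correct and takes essentially the same route as the paper, which proves this observation simply by restating Lemma~1 of Heggernes et al.~\cite{heggernes2013obtaining} through the identity, noted just before the statement, that the cost of a $2$-coloring with color classes $(V_1,V_2)$ equals $\rank(V_1)+\rank(V_2)$. The self-contained two-direction argument you add --- identifying the monochromatic components with the super-vertices of $G/F$ in one direction, and contracting spanning forests of $G[V_L]$ and $G[V_R]$ in the other --- is sound and merely spells out what the paper delegates to the citation.
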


\subsection{Edge contraction}
\label{prelim:edge-contraction}

The {\em contraction} of an edge $uv$ in a graph $G$ deletes vertices $u$ and $v$ from $G$, and adds a new vertex which is adjacent to all vertices that were adjacent to either $u$ or $v$.
This process does not introduce self-loops or parallel edges.
The resulting graph is denoted by $G/e$.
For a graph $G$ and edge $e = uv$, we formally define $G/e$ in the following way: $V(G/e) = (V(G) \cup \{w\}) \backslash \{u, v\}$ and $E(G/e) = \{xy \mid x,y \in V(G) \setminus \{u, v\}, xy \in E(G)\} \cup \{wx \mid x \in N_G(u) \cup N_G(v)\}$.
Here, $w$ is a new vertex.
An edge contraction reduces the number of vertices in a graph by exactly one.
Several edges might disappear because of one edge contraction.
For a subset of edges $F$ in $G$, graph $G/ F$ denotes the graph obtained from $G$ by contracting {all the edges in $F$}.

We now formally define a contraction of {a} graph $G$ to another graph $H$.
\begin{definition}[Graph contraction]\label{def:graph-contractioon} A graph $G$ is said to be \emph{contractible} to {a} graph $H$ if there is a function $\psi: V(G) \rightarrow V(H)$ such that {the} following properties hold.
\begin{enumerate}
\item For any vertex $h$ in $V(H)$, the set $W(h) := \{v \in V(G) \mid \psi(v)= h\}$ is not empty and the graph $G[W(h)]$ is connected.
\item For any two vertices $h, h’$ in $V(H)$, edge $hh’$ is present in $H$ if and only if $E(W(h), W(h’))$ is not empty.
\end{enumerate}
\end{definition}
We say {that} graph $G$ is \emph{contractible to $H$ via function $\psi$}.
For a vertex $h$ in $H$, the set $W(h)$, also denoted by $\psi^{-1}(h)$,  is called a \emph{witness set} associated with or corresponding to $h$.
For a fixed $\psi$, we define the $H$-\emph{witness structure} of $G$, denoted by $\mathcal{W}$, as {the} collection of all witness sets.
Formally, $\mathcal{W}=\{W(h) \mid h \in V(H)\}$.
{Note that} a witness structure $\mathcal{W}$ is a partition of {the} vertices in $G$.
If a {witness set} contains more than one vertex, then we call it {a} \emph{big} witness set, otherwise {we call it a} \emph{small} witness set.

\subsection{Contraction(\vc)}

In this subsection, we present a couple of observations regarding an instance $(G, k, d)$ of the \textsc{Contraction(\vc)} problem.
Later, we present a lemma that helps us to characterize the problem as finding a vertex cover with special properties.

\begin{observation}
\label{obs:bound-k-rank}
Consider an instance $(G, k, d)$ of {\sc Contraction(\vc)} such that $k = \emph{\rank}(G)$.
Then,  $(G, k, d)$ is a {\sc Yes}-instance if and only if $d \le \emph{\vc}(G)$.
\end{observation}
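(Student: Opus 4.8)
The plan is to prove the equivalence by analyzing what happens when the full contraction budget $k = \rank(G)$ is available. The key structural fact is that contracting a set of edges $F$ with $|F| = \rank(G)$ allows us, in the best case, to contract an entire spanning forest of $G$. Since each edge contraction reduces the rank by exactly one (as noted in the preliminaries), contracting $\rank(G)$ edges can reduce the rank all the way to zero, meaning we can turn each connected component into a single vertex. The resulting graph $G/F$ is then edgeless, so $\vc(G/F) = 0$.

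First I would establish the forward direction. Suppose $(G, k, d)$ is a {\sc Yes}-instance, so there exists $F \subseteq E(G)$ with $|F| \le k = \rank(G)$ and $\vc(G/F) \le \vc(G) - d$. Since $\vc(G/F) \ge 0$, we immediately get $\vc(G) - d \ge 0$, i.e. $d \le \vc(G)$, which is exactly what we want. This direction holds regardless of the precise value of $k$ and is essentially immediate.

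For the reverse direction, assume $d \le \vc(G)$. I would exhibit an explicit solution $F$. Take $F$ to be the edge set of a spanning forest of $G$, so $|F| = \rank(G) = k$. Contracting all edges of $F$ collapses each connected component of $G$ to a single vertex, yielding an edgeless graph $G/F$ with $\vc(G/F) = 0$. Since $0 \le \vc(G) - d$ is equivalent to $d \le \vc(G)$, the set $F$ witnesses that $(G, k, d)$ is a {\sc Yes}-instance. I expect no real obstacle here; the only point requiring a line of care is verifying that contracting a spanning forest indeed produces an edgeless graph (each component becomes one vertex, and no edges remain between distinct components since they were already disconnected), together with the observation that $|F| = \rank(G)$ matches the budget exactly. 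Combining both directions yields the stated equivalence.
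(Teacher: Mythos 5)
Your proposal is correct and follows essentially the same argument as the paper: the forward direction follows from $\vc(G/F) \ge 0$, and the reverse direction contracts the edge set of a spanning forest to obtain an edgeless graph with $\vc(G/F) = 0 \le \vc(G) - d$. No gaps; this matches the paper's proof of the observation.
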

\begin{proof}
Suppose that $(G, k,d)$ is a \yes-instance.
Let $F \subseteq E(G)$ be {a} collection of at most $k$ edges in $G$ such that $\vc(G/F) \le \vc(G) - d$.
As $\vc(G/F) \ge 0$,  we have $d \le \vc(G)$.

Suppose now that $d \le \vc(G)$.
Let $F$ be the collection of edges in a spanning forest of $G$.
Note that the graph $G/F$ does not contain any edge and hence $\vc(G/F) = 0 \le \vc(G) - d$.
As $k = \rank(G)$,  we have $|F| = k$.
Hence,  $F$ is a solution of $(G, k, d)$.
\end{proof}

\begin{observation}
\label{obs:bound-d-k}
Consider an instance $(G, k, d)$ of {\sc Contraction(\vc)} such that $G$ is a connected graph,  $k < \emph{\rank}(G)$, and $2d \le k$.
Then,  $(G, k, d)$ is a {\sc Yes}-instance if and only if $d < \emph{\vc}(G)$.
\end{observation}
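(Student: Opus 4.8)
The plan is to prove both implications, relating everything to the behaviour of the independence number $\alpha(G) := |V(G)| - \vc(G)$ (the complement of a minimum vertex cover being a maximum independent set) under contraction.

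\textbf{Forward direction.} First I would show that if $(G,k,d)$ is a \yes-instance then $d < \vc(G)$. Let $F \subseteq E(G)$ with $|F| \le k$ and $\vc(G/F) \le \vc(G) - d$. Since $G$ is connected, so is $G/F$; and since each contraction removes exactly one vertex, contracting the edges of $F$ removes at most $|F| \le k < \rank(G) = |V(G)| - 1$ vertices, so $G/F$ has at least two vertices. A connected graph on at least two vertices has an edge, hence $\vc(G/F) \ge 1$. Combining with $\vc(G/F) \le \vc(G) - d$ yields $d \le \vc(G) - 1 < \vc(G)$. Note the strict inequality, which is exactly what separates this case from \autoref{obs:bound-k-rank}.

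\textbf{Backward direction.} For the converse, assume $d < \vc(G)$, so $\vc(G) \ge d+1$, and I must exhibit a set $F$ of at most $k$ edges with $\vc(G/F) \le \vc(G) - d$. The core tool is the following bound: if $S \subseteq V(G)$ is connected and $F_S$ is the edge set of a spanning tree of $G[S]$, then contracting $F_S$ (which uses $|S|-1$ edges and collapses $S$ to a single vertex $w$) satisfies $\vc(G/F_S) \le \vc(G) - (\vc(G[S]) - 1)$. Indeed, for a maximum independent set $I$ of $G$ we have $|I \cap S| \le \alpha(G[S])$, so $I \setminus S$ is an independent set of the subgraph $G - S$ of $G/F_S$ of size at least $\alpha(G) - \alpha(G[S])$; hence $\alpha(G/F_S) \ge \alpha(G) - \alpha(G[S])$, and using $|V(G/F_S)| = |V(G)| - (|S|-1)$ together with $\vc = |V| - \alpha$ and $\vc(G[S]) = |S| - \alpha(G[S])$ gives the stated drop of at least $\vc(G[S]) - 1$. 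It therefore suffices to locate a connected $S$ with $|S| - 1 \le k$ and $\vc(G[S]) \ge d+1$. To find one I would use a matching $M$ saturating a minimum vertex cover $X$ (which exists by Hall's theorem, as recalled in the preliminaries), of size $\vc(G) \ge d+1$ and with all edges between $X$ and $Y := V(G) \setminus X$: choosing $d+1$ of its edges yields $2(d+1)$ vertices whose induced subgraph already has vertex cover at least $d+1$, and the connectivity of $G$ lets one grow this into a connected set $S$, whose spanning tree supplies $F_S$.

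\textbf{Main obstacle.} The delicate point is the edge budget. The bound above only guarantees a drop of $\vc(G[S]) - 1$, so the naive choice requires $\vc(G[S]) \ge d+1$ together with a spanning tree of at most $k$ edges; but when the budget is tight, i.e. $k = 2d$, this is impossible in general: for a long path every connected $S$ with $\vc(G[S]) \ge d+1$ has at least $2d+2$ vertices, hence at least $2d+1$ tree edges. The resolution is to sharpen the estimate to a full drop of $\vc(G[S])$ by exploiting the merged vertex $w$: whenever $G-S$ has a maximum independent set $J$ avoiding $N(S)$, the set $J \cup \{w\}$ is independent in $G/F_S$, improving the bound to $\alpha(G/F_S) \ge \alpha(G) - \alpha(G[S]) + 1$ and thus to a drop of at least $\vc(G[S])$. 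Choosing $S$ (around the matching edges, and using that $Y$ is itself a maximum independent set) so that this condition holds while keeping $|S| - 1 \le k$ is the crux of the argument; the hypotheses $2d \le k < \rank(G)$ and the connectivity of $G$ are precisely what make this balancing achievable.
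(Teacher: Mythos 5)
Your forward direction is correct and is essentially the paper's argument: both proofs show that $G/F$ still contains an edge, hence $\vc(G/F)\ge 1$ and $d<\vc(G)$ (you count surviving vertices and invoke connectivity; the paper notes that each contraction decreases $\rank$ by exactly one, so $\rank(G/F)\ge \rank(G)-k\ge 1$). The backward direction, however, has a genuine gap, and you have pinpointed it yourself: after correctly observing that your basic bound (a drop of $\vc(G[S])-1$ when contracting a spanning tree of a connected set $S$) cannot suffice when $k=2d$ (your path example), you replace it by a sharpened bound that needs the extra hypothesis that $G-S$ admits a \emph{maximum} independent set avoiding $N(S)$, and then declare that choosing $S$ so that this hypothesis holds while $|S|-1\le k$ ``is the crux of the argument.'' That crux is exactly the missing proof, and it is not routine: even on a path with an even number of vertices, taking $S$ to be a prefix on $2d+1$ vertices leaves $G-S$ an odd path, whose \emph{unique} maximum independent set contains the neighbor of $S$, so your hypothesis fails (the desired drop still happens there, which also shows your sufficient condition is not the right invariant). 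One can repair the path case by parity-adjusting $|S|$, but you give no argument that a suitable $S$ exists in a general connected graph, and the hypotheses $2d\le k<\rank(G)$ plus connectivity do not obviously supply one.

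For comparison, the paper avoids this global selection problem entirely with a local, iterative argument. Two vertices of a minimum vertex cover $X$ at minimum pairwise distance are at distance at most $2$ (otherwise an internal edge of a shortest path between them would be uncovered); contracting that path costs at most $2$ edges and makes the vertex cover number drop by exactly $1$, since $(X\setminus\{x_1,x_2\})\cup\{w\}$ covers the new graph and a contraction of at most two edges cannot decrease $\vc$ by more than that. Repeating $d$ times costs at most $2d\le k$ edges. Here the cost ($\le 2$) and the gain (exactly $1$) of each step are matched locally, so no budget balancing, no parity issue, and no choice of a single global set $S$ ever arises. If you want to keep your one-shot contraction scheme, you must actually prove the selection lemma you postulate, which the path example already shows requires a parity-sensitive choice of $S$; as it stands, the backward implication is not established.
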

\begin{proof}
Suppose that $(G, k, d)$ is a \yes-instance.
Let $F \subseteq E(G)$ be a collection of at most $k$ edges in $G$ such that $\vc(G/F) \le \vc(G) - d$.
As $|F| \le k$ and $k < \rank(G)$, the graph $G/F$ contains at least one edge.
Hence, $\vc(G/F) \ge 1$.
This implies $1 + d \le \vc(G)$.

Suppose now that $d < \vc(G)$.
Let $X$ be a minimum vertex cover of $G$.
Consider an algorithm that contracts a path between two vertices in $X$ that are distance at most two.
The existence of such vertices is guaranteed by the fact that $G$ is a connected graph.
Let $G'$ be the resulting graph.
Note that $\vc(G') = \vc(G) - 1$.
It is easy to verify that if $(G, k, d)$ is a \yes-instance, then $(G', k - 2,  d - 1)$ is a \yes-instance.
As $d < \vc(G) $ and $k \ge 2d$, the subroutine can repeat the process $d$ times to get an equivalent instance $(G', k', d')$ such that $k' \ge 0$ and $d' = 0$.
As $(G', k', d')$ is a trivial \yes-instance, $(G, k, d)$ is a \yes-instance.
\end{proof}

\begin{figure}[t]
  \begin{center}
    \includegraphics[scale=0.53]{./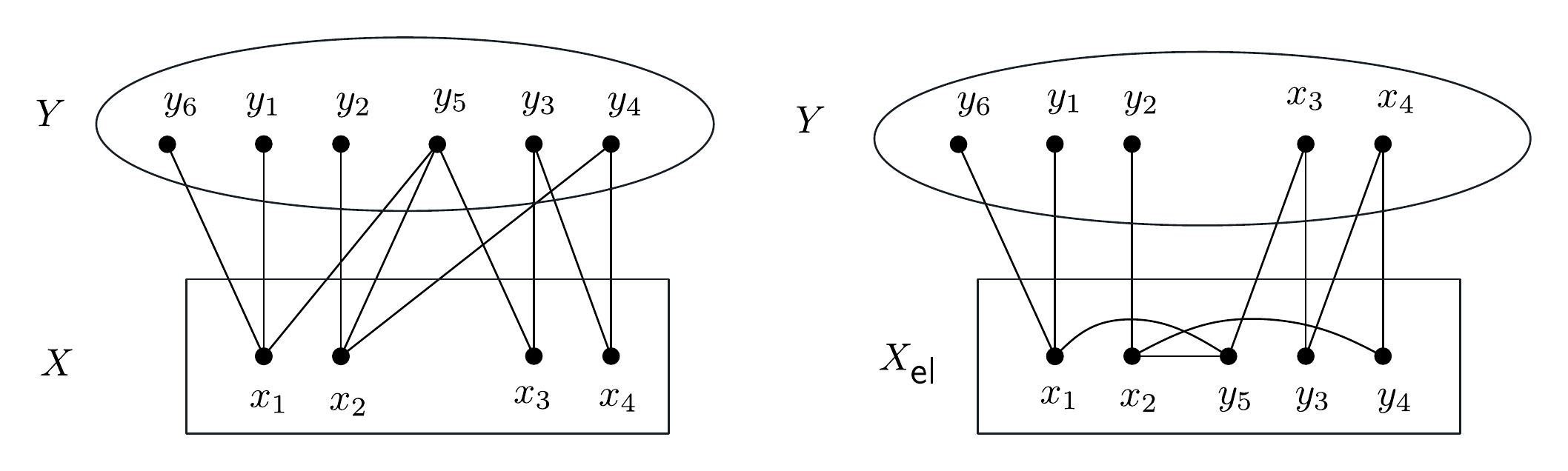}
    \end{center}
   \caption{We can reduce the size of {a} minimum vertex cover of $G$ by two by contracting {the} three edges in $F = \{x_1y_5, x_2y_5, x_2y_4\}$, i.e., $\vc(G/F) \le \vc(G) - d$ for $d = 2$.
\autoref{lemma:solution-edge-pair} implies that there exists a solution pair $\langle X_s = \{x_3, x_4\}, Y_s = \{y_3, y_4, y_5\} \rangle$ such that $\rank((X \setminus X_s) \cup Y_s) = \rank(X_{\el}) \ge 3 = |F|$ and $|Y_s| - |X_s| \le 1 = |F| - d$. \label{fig:sol-edge-sol-pair}}
\end{figure}

Suppose that $(G, k, d)$ is a \yes-instance of \textsc{Contraction(\vc)}.
We say {that} a set $F \subseteq E(G)$ is a \emph{solution} of $(G, k, d)$ if $|F| \le k$ and $\vc(G/F) \le \vc(G) - d$.
Fix a minimum vertex cover $X$ of $G$.
As $X$ is a vertex cover, for every edge in $F$, \emph{at least one} of its endpoints is in $X$.
We argue that one can construct an \emph{enlarged vertex cover} $X_{\el}$ of $G$ such that for every edge in $F$, \emph{both} of its endpoints are in $X_{\el}$.
Also, $X_{\el}$ is not much larger than $X$.
In order to construct $X_{\el}$ from $X$,  one needs to remove and add  some vertices to $X$.
We denote {the} removed and added vertices by $X_s$ and $Y_s$, respectively, and call $\langle X_s, Y_s \rangle$ a \emph{solution pair}.
See \autoref{fig:sol-edge-sol-pair} for an illustration.
The following lemma relates a solution (a set of edges) to a solution pair (a tuple of disjoint vertex sets).

\begin{lemma}
\label{lemma:solution-edge-pair}
Consider a connected graph $G$,  a minimum vertex cover $X$ of $G$,  a proper subset $F$ of edges of a spanning forest of $G$ (i.e., $|F| < \emph{\rank}(G)$), and {a non-negative} integer~$d$.
Then,  $\emph{\vc}(G/F) \le \emph{\vc}(G) - d$ if and only if there exists subsets $X_s \subseteq X$ and $Y_s \subseteq V(G) \setminus X$ such that
$(i)$ $X_{\emph{\el}} := (X \setminus X_s) \cup Y_s$ is a vertex cover of $G$,
$(ii)$ $\emph{\rank}((X \setminus X_s) \cup Y_s) \ge |F|$, and
$(iii)$ $ |Y_s| - |X_s|  \le |F| - d$, i.e., $|X_{\emph{\el}}| \le |X| + |F| - d$.
\end{lemma}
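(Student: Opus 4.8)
The plan is to prove the two implications separately, working throughout with the \emph{witness structure} $\calW$ of $G/F$ whose sets are the connected components of the forest $(V(G),F)$. Since $F$ is acyclic, every big witness set is a tree contained in $V(F)$, and summing $|W|-1$ over the big sets gives exactly $|F|$. The single fact that drives both directions is that contracting a set of edges lying entirely \emph{inside} a vertex cover leaves the untouched complement independent; hence such a contraction sends a vertex cover to a vertex cover of the contracted graph while decreasing its size by exactly the number of contracted edges (recall that each contraction lowers the rank, and thus the vertex count, by one).

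I would first dispatch the easier backward direction. Suppose $X_s,Y_s$ satisfy $(i)$--$(iii)$ and write $X_{\el}=(X\setminus X_s)\cup Y_s$. By $(ii)$, $G[X_{\el}]$ has a spanning forest with at least $|F|$ edges, so I can select a set $F^{\star}$ of exactly $|F|$ of those edges; then $F^{\star}$ is an acyclic subset of $E(G[X_{\el}])$, hence a subset of a spanning forest of $G$, with $|F^{\star}|=|F|$. Contracting $F^{\star}$ merges vertices only inside $X_{\el}$, so by $(i)$ the set $V(G)\setminus X_{\el}$ stays independent in $G/F^{\star}$; therefore the image of $X_{\el}$ is a vertex cover of $G/F^{\star}$ of size $|X_{\el}|-|F|$. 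Combined with $(iii)$ this gives $\vc(G/F^{\star})\le |X_{\el}|-|F|\le |X|-d=\vc(G)-d$. This establishes the backward implication, the contraction $F^{\star}$ it produces having the same cardinality $|F|$.

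For the forward direction, assume $\vc(G/F)\le \vc(G)-d$ and fix a minimum vertex cover $Z$ of $G/F$, so $|Z|=\vc(G/F)\le |X|-d$. I would lift $Z$ to $G$ by setting $\hat Z=\bigcup_{w\in Z}W(w)$ and take as a first candidate $X_{\el}=\hat Z\cup V(F)$, with $X_s=X\setminus X_{\el}$ and $Y_s=X_{\el}\setminus X$. Property $(i)$ is then immediate: every edge of $G$ joining two distinct witness sets is covered because $Z$ covers the corresponding edge of $G/F$, while every edge inside a witness set forces that set to be big, so both endpoints lie in $V(F)\subseteq X_{\el}$. Property $(ii)$ is also immediate, since $F$ is an acyclic set of $|F|$ edges contained in $E(G[X_{\el}])$, whence $\rank(X_{\el})\ge |F|$.

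The delicate point, which I expect to be the main obstacle, is the size bound $(iii)$. A direct count gives $|X_{\el}|=|Z|+|F|+b$, where $b$ is the number of big witness sets whose representing vertex in $G/F$ lies outside $Z$; so the candidate overshoots the target $|X|+|F|-d$ by exactly $b$ whenever $Z$ is tight. To absorb this surplus I would first choose $Z$ so as to maximise the number of big witness sets it hits, and then, for each remaining big set $W$ whose vertex is outside $Z$, trim one leaf of its spanning tree $F\cap E(W)$ out of $X_{\el}$: removing a single vertex keeps the complement inside $W$ independent, so $X_{\el}$ stays a vertex cover, and this saves exactly one vertex per such set, restoring $(iii)$. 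The one unit of rank lost by each such trimming must then be recovered from the $G$-edges joining $W$ to the neighbouring witness sets that $Z$ \emph{does} contain (these edges exist precisely because $W$'s vertex is uncovered in $Z$), using also the minimality of $X$ and the connectivity of $G$. Balancing this rank-versus-size trade-off is the crux; once it is settled, the remaining verifications are routine.
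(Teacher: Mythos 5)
Your backward direction is correct and is essentially the paper's own: from a spanning forest of $G[X_{\el}]$ one selects exactly $|F|$ edges, contracts them, and the image of $X_{\el}$ is a vertex cover of the contracted graph of size $|X_{\el}|-|F|\le \vc(G)-d$ (like the paper, you exhibit \emph{some} contraction set of cardinality $|F|$ rather than re-using the given $F$, which is the existential reading the lemma is actually put to use for). In the forward direction, however, there is a genuine gap, and you flag it yourself: after trimming one leaf from each big witness set $W$ whose vertex lies outside $Z$, condition $(ii)$ is never re-established. Each trim destroys one edge of $F$ inside the cover, so only $|F|-b$ edges of $F$ survive, and the assertion that the lost rank can be recovered from the $G$-edges joining $W$ to covered witness sets is precisely what must be proved. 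It genuinely can fail for a careless choice of leaf: if $W=\{u,v\}$ with $u$ a pendant vertex of $G$ attached to $v$, and all edges from $W$ to covered witness sets leave from $v$, then trimming $v$ leaves $u$ isolated inside the new cover and the rank does drop. Handling exactly this is the central maneuver of the paper's proof: there, a spanning tree $T_{xy}$ of $W(x')\cup W(y')$ is formed through a connecting edge $e$, and one deletes a leaf of $T_{xy}$ lying in $W(y')$, which exists because $e$ touches at most one leaf of $T_y$; this preserves both the number of contracted edges and the cover property.

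The gap is fillable, and once filled your argument becomes a clean one-shot alternative to the paper's extremal exchange argument (minimum-cost edge set plus contradiction). The key fact is the one you invoke in passing: since $Z$ is a vertex cover of $G/F$ and the vertex of $W$ is outside $Z$, \emph{every} $G$-edge leaving $W$ ends in a witness set whose vertex is in $Z$. This fact does double duty: it is what makes the trimmed set a vertex cover (your stated justification for this is garbled, but the conclusion is right), and it supplies the connecting edges. Connectivity of $G/F$ (here one uses that $G$ is connected and $|F|<\rank(G)$, so the vertex of $W$ has a neighbour, necessarily in $Z$) gives an edge $e_W$ from some $u_W\in W$ to a covered witness set; since the tree $F\cap E(W)$ has at least two leaves, trim a leaf $v_W\neq u_W$. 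Then the spanning trees of the witness sets of $Z$, the trimmed trees, and one edge $e_W$ per trimmed set form a forest inside the new cover with $\sum_{W\ \mathrm{big}}(|W|-1)=|F|$ edges (no cycles arise, as each trimmed remnant receives exactly one connecting edge), so $(ii)$ holds while $(i)$ and $(iii)$ are preserved. Note that neither the minimality of $X$ nor your preliminary choice of $Z$ maximizing the number of big witness sets hit is needed anywhere.
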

\begin{proof}
$(\Rightarrow)$ Consider the collection $\calF$ of subsets  of  $E(G)$ such that for every $F \in \calF$,  $\vc(G/F) \le \vc(G) - d$.
For $F \in \calF$, suppose $G$ is contracted to $G/F$ via function $\psi$.
Let $X'$ be a minimum vertex cover of $G/F$ and $Y' = V(G/F) \setminus X'$.
We say {that} $\langle X', Y' \rangle$ is a partition corresponding to $F$.
We define a function $\cost: \calF \mapsto \mathbb{N}$ as follows.
For $F \in \calF$, $\cost(F)$ is the minimum number of vertices in $Y'$ that are associated with big witness sets over all partitions $\langle X', Y' \rangle$ corresponding to $F$.
Formally, $\cost(F) := \min_{\langle X', Y' \rangle}|\{y \in Y'\ |\ |\psi^{-1}(y)| > 1\}|${, where $\langle X', Y' \rangle$ ranges over all partitions corresponding to $F$.}

We assume, for the sake of contradiction, that there is no set in $\calF$ whose cost is zero.
Let $F \in \calF$ be a set of edges of minimum cost over all sets in $\calF$.
By our assumption,  $\cost(F) > 0$.
This implies that there is a partition $\langle X', Y' \rangle$ of $V(G/F)$ and  a vertex $y' \in Y'$ such that $|\psi^{-1}(y')| > 1$.
Recall that $F$ is a proper subset of edges in a spanning forest of $G$.
Hence,  $|F| < \rank(G)$ and there is at least one edge present in $G/F$.
This implies that a minimum vertex cover $X'$ of $G/F$ is not empty.
As $G$ is a connected graph, so is $G/F$.
Hence,  there is a vertex $x' \in X'$ such that $x'y' \in E(G/F)$.

Consider {a} $G/F$-witness structure $\calW$ of $G$.
Let $W(x'), W(y')$ be the witness sets corresponding to $x'$ and $y'$, respectively.
Recall that $W(x')$ and $W(y')$ are connected sets in $G$.
As $x'y' \in E(G/F)$, there exists an edge $e$ in $E(G)$ with one of its endpoints in $W(x')$ and {the} other in $W(y')$.
Hence, $W(x') \cup W(y')$ is a connected set in $G$.
We claim that there is a spanning tree of $G[W(x') \cup W(y')]$ that has a leaf in $W(y')$.

Let $T_x$ and $T_y$ be spanning trees of $G[W(x')]$ and $G[W(y')]$, respectively.
Without loss of generality, we can assume that $E(T_x) \cup E(T_y) \subseteq F$.
As $|W(y')|>1$,  $T_y$ contains at least one edge and hence at least two leaves.
Consider the tree $T_{xy}$ such that $E(T_{xy}) = E(T_x) \cup \{e\} \cup E(T_y)$, where $e$ is the edge mentioned in the previous paragraph.
Note that $T_{xy}$ is a spanning tree of $G[W(x') \cup W(y')]$.
This spanning tree has a leaf, say $y_1$,  in $W(y')$,  as $e$ is incident on at most one leaf of $T_y$.

Consider the partition $\calW_1$ obtained from $\calW$ by removing $W(x'), W(y')$ and adding $W_{x^{\circ}}, W_{y^{\circ}}$.
Here $W_{x^{\circ}} = (W(x') \cup W(y')) \setminus \{y_1\}$ and $W_{y^{\circ}} = \{y_1\}$.
Formally,  $\calW_1 = (\calW \cup \{W_{x^{\circ}} , W_{y^{\circ}} \} )\setminus \{W(x'), W(y')\}$.
Let $F_1 = (F \cup E(T_{xy})) \setminus (E(T_x) \cup E(T_y)) $.
It is easy to verify that $\calW_1$ is a $G/F_1$-witness structure of $G$.
As $F_1$ is obtained from $F$ by removing an edge incident on $y_1$ and adding edge $e$ (which was not in $F$), we have $|F_1| = |F|$.
Also, note that $\cost(F_1) < \cost(F)$ as the witness set corresponding to $y'$ no longer contributes to the cost.

We argue that $F_1$ is  in $\calF$.
Let $x^{\circ}$ and $y^{\circ}$ be the two vertices corresponding to witness sets $W_{x^{\circ}}$ and $W_{y^{\circ}}$, respectively.
Note that the graph obtained from $G/F$ by deleting vertices {$x$ and $y$} is {the} same as the graph obtained from $G/F_1$ by deleting vertices $x^{\circ}$ and $y^{\circ}$.
As $W(x) \subseteq W_{x^{\circ}}$,  $x^{\circ}$ covers all the edges in $G/F$ that were covered by $x$.
Also, as $y$ was not in a vertex cover, it did not cover any edge in $G/F$.
This implies $\vc(G/F) = \vc(G/F_1)$.
Thus, $\vc(G/F_1) \le \vc(G) - d$,  and $F_1$ is in $\calF$.
But this contradicts the fact that $F$ is {a} set of edges with minimum cost.
Hence, our assumption was wrong and there exists a set of edges in $F$ whose cost is zero.

Consider a set $F \in \calF$ such that $\cost(F) = 0$.
Let $\langle X', Y' \rangle$ be the partition of $V(G/F')$ such that $X'$ is a vertex cover of $G$ and for every $y' \in Y'$,  $|\psi^{-1}(\{y'\})| = 1$.
Let $X_{\el} = \bigcup_{x' \in X'} \psi^{-1}(\{x'\})$.
Alternately, $X_{\el}$ is the subset of vertices in $V(G)$ such that $\psi(X_{\el}) = X'$.
Define $X_s := X \setminus X_{\el}$ and $Y_s := X_{\el} \cap Y$.
We argue that $\langle X_s, Y_s \rangle$ is a solution pair.
As $|\psi^{-1}(\{y'\})| = 1$ {for every $y' \in Y'$}, $X_{\el} = (X \setminus X_s) \cup Y_s$ is a vertex cover of $G$.
Recall that $F$ is a subset of edges in a spanning forest of $G$.
As $V(F) \subseteq X_{\el}$, the rank of $X_{\el} = (X \setminus X_s) \cup Y_s$ is at least $|F|$.
Note that the set $X'$ can be obtained from $X_{\el}$ by contracting the edges in $F$.
As $F$ is a subset of edges of a forest, we get $|X_{\el}| \le |X'| + |F| \le |X| - d + |F|$.
Hence, $\langle X_s, Y_s \rangle$ satisfies all three properties.

$(\Leftarrow)$
Suppose that there is a solution pair $\langle X_s, Y_s \rangle$ such that $X_s \subseteq X$ and $Y_s \subseteq Y$ with the desired properties.
Define $X_{\el}  = (X \setminus X_s) \cup Y_s$.
Let $F$ be the {edge set of} a spanning forest of $G[X_{\el}]$ such that $|Y_s| - |X_s| \le |F| - d$.
Such a set of edges exists as $\rank(X_{\el}) \ge |F|$.
Hence, $|X_{\el}| = |X| - |X_s| + |Y_s| \le |X| + |F| - d$.
As $X_{\el}$ is a vertex cover of $G$, set $V(G[X]/F)$ is a vertex cover of $G/F$.
As $F$ is  the {edge set of} a spanning forest of $G[X_{\el}]$, we have $\vc(G/F) \le |V(G[X]/F)| = |X_{\el}| - |F| = |X| + (|F| - d) - |F| \le |X| - d = \vc(G) - d$.
Hence, $\vc(G/F) \le \vc(G) - d$.
\end{proof}

In the following lemma, we argue that there exists a solution pair $\langle X_s, Y_s \rangle$ such that $X_s$ does not  {contain} any vertex in $X$ which is adjacent to a pendant vertex.
For example, in \autoref{fig:sol-edge-sol-pair}, there exists a solution pair $\langle X_s, Y_s \rangle$ such that $x_1 \not\in X_s$.

\begin{lemma}
\label{lemma:pendant-out-of-pair}
Consider a connected graph $G$, a minimum vertex cover $X$ of $G$, and two integers $\ell$ and $d$.
Suppose that there exists a vertex $x^{\circ}$ in $X$ which is adjacent to a pendant vertex.
Suppose that there are subsets $X_s \subseteq X$ and $Y_s \subseteq V(G) \setminus X$ such that
$(i)$ $(X \setminus X_s) \cup Y_s$ is a vertex cover of $G$,
$(ii)$ $\emph{\rank}((X \setminus X_s) \cup Y_s) \ge \ell$, and
$(iii)$ $ |Y_s| - |X_s|  \le \ell - d$.
Then, there are subsets $X'_s \subseteq X$ and $Y'_s \subseteq V(G) \setminus X$ that satisfy these three conditions and $x^{\circ} \not\in X'_s$.
\end{lemma}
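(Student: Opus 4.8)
The plan is to perform a single local swap that keeps $x^{\circ}$ in the cover at the expense of its pendant neighbor, and then to verify that none of the three conditions is harmed. First I would dispose of the trivial case: if $x^{\circ} \notin X_s$, then $\langle X_s, Y_s\rangle$ already witnesses the claim, so I set $X'_s := X_s$ and $Y'_s := Y_s$. Hence assume $x^{\circ} \in X_s$ and write $X_{\el} = (X \setminus X_s) \cup Y_s$. Since $x^{\circ} \in X$ we have $x^{\circ} \notin Y_s$, and since $x^{\circ} \in X_s$ we have $x^{\circ} \notin X \setminus X_s$; thus $x^{\circ} \notin X_{\el}$. Let $p$ be a pendant neighbour of $x^{\circ}$. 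I would first observe that $p \notin X$: the unique edge incident to $p$ is $x^{\circ}p$, which is covered by $x^{\circ} \in X$, so minimality of $X$ forbids $p \in X$. As $X_{\el}$ is a vertex cover (condition $(i)$) but $x^{\circ} \notin X_{\el}$, the edge $x^{\circ}p$ forces $p \in X_{\el}$, and together with $p \in V(G) \setminus X$ this yields $p \in Y_s$.

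Next I would define the swap $X'_s := X_s \setminus \{x^{\circ}\}$ and $Y'_s := Y_s \setminus \{p\}$, so that $X'_s \subseteq X$ and $Y'_s \subseteq V(G) \setminus X$ hold for free and $x^{\circ} \notin X'_s$ as required. A short set computation gives the corresponding enlarged cover $X'_{\el} := (X \setminus X'_s) \cup Y'_s = (X_{\el} \setminus \{p\}) \cup \{x^{\circ}\}$; that is, the swap simply removes $p$ from $X_{\el}$ and inserts $x^{\circ}$. In particular $|Y'_s| - |X'_s| = |Y_s| - |X_s| \le \ell - d$, so condition $(iii)$ is immediate, and for condition $(i)$ it suffices to note that $p$, being pendant, covers only the edge $x^{\circ}p$ in $G$, which is now covered by $x^{\circ}$ instead; every other edge is covered exactly as before, so $X'_{\el}$ is again a vertex cover.

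The step I expect to be the crux is condition $(ii)$, the rank bound. The key observation is that $p$ is an \emph{isolated} vertex of $G[X_{\el}]$, since its only neighbour $x^{\circ}$ lies outside $X_{\el}$. Deleting an isolated vertex decreases the number of vertices and the number of connected components each by one, so $\rank(G[X_{\el} \setminus \{p\}]) = \rank(G[X_{\el}]) \ge \ell$. Adding back the single vertex $x^{\circ}$ can only preserve the rank (if $x^{\circ}$ turns out isolated in $X'_{\el}$) or increase it (if it has a neighbour inside $X'_{\el}$), because $\rank(G[S]) = |S| - c(G[S])$ is non-decreasing under vertex addition. Hence $\rank(G[X'_{\el}]) \ge \rank(G[X_{\el} \setminus \{p\}]) \ge \ell$, which establishes $(ii)$ and completes the argument. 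The only point requiring care is ensuring that $p$ is genuinely an added vertex (i.e.\ $p \in Y_s$) and isolated in $G[X_{\el}]$, both of which follow from $x^{\circ} \notin X_{\el}$ and the fact that $p$ is pendant with sole neighbour $x^{\circ}$.
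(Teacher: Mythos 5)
Your proposal is correct and follows essentially the same route as the paper's proof: the same reduction to the case $x^{\circ} \in X_s$, the same swap $X'_s = X_s \setminus \{x^{\circ}\}$, $Y'_s = Y_s \setminus \{p\}$, and the same rank argument (the pendant neighbour is isolated in $G[X_{\el}]$, so deleting it preserves rank, and re-inserting $x^{\circ}$ cannot decrease it). The only difference is that you explicitly justify $p \notin X$ via minimality of $X$, a detail the paper leaves implicit.
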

\begin{proof}
If $x^{\circ} \not\in X_s$ then the lemma is vacuously true.
Consider the case {where $x^{\circ} \in X_s$}.
Let $y^{\circ}$ be a pendant vertex in $G$ which is adjacent to $x^{\circ}$.
As $(X \setminus X_{s}) \cup Y_s$ is a vertex cover of $G$, $y^{\circ}$ is in it.
More specifically, $y^{\circ} \in Y_s$.
Define $X'_s := X_s \setminus \{x^{\circ}\} $ and $Y'_s := Y_s \setminus \{y^{\circ}\}$.

As $(X \setminus X_{s}) \cup Y_s$ is a vertex cover of $G$, and $y^{\circ}$ is a pendant vertex, {it follows} that $(X \setminus (X_{s} \cup \{x^{\circ}\}) \cup (Y_s \setminus \{y^{\circ}\})$ is also a vertex cover of $G$.
As $y^{\circ}$ is not adjacent to any vertex in $(X \setminus X_{s}) \cup Y_s$, we have $\rank((X \setminus X_s) \cup (Y_s \setminus \{y^{\circ}\})) = \rank((X \setminus X_s) \cup Y_s) \ge \ell$.
Removing a vertex from $X_S$, which is the same as adding a vertex in $(X \setminus X_s) \cup (Y_s \setminus \{y^{\circ}\}))$, cannot decrease its rank.
This implies $\rank((X \setminus X'_s) \cup Y'_s)) \ge \ell$.
Note that $|X'_s| = |X_s| -  1$ and $|Y'_s| = |Y_s| - 1$.
Hence, $|Y'_s| - |X'_s| \le \ell - d$.
As $\langle X'_s, Y'_s \rangle$ satisfies all the three properties, and $x^{\circ} \not\in X'_s$, we get a solution pair with the desired properties.
\end{proof}

\subsection{Parameterized complexity}
\label{prelim:pc}

An instance of a parameterized problem $\Pi$ {consists} of an input $I$, which is an input of the non-parameterized version of the problem, and an integer $k$, which is called the \emph{parameter}.
A problem $\Pi$ is said to be \emph{fixed-parameter tractable}, or \FPT, if given an instance $(I,k)$ of $\Pi$, we can decide whether  $(I,k)$ is a \yes-instance of $\Pi$ in  time $f(k)\cdot |I|^{\OO(1)}$.
Here, {$f: \mathbb{N} \mapsto \mathbb{N}$} is some computable function {depending} only on $k$.
Parameterized complexity theory  provides tools to {rule out} the existence of \FPT\ algorithms under plausible complexity-theoretic assumptions.
For this, a hierarchy of parameterized complexity classes $\FPT \subseteq \W[1]\subseteq \W[2] \cdots \subseteq \XP$ was introduced, and it was conjectured that the inclusions are proper.
The most common way to show that it is unlikely that a parameterized problem {admits} an \FPT\ algorithm is to show that it is $\W[1]$ or $\W[2]$-\Hard.
It is possible to use reductions analogous to the polynomial-time reductions employed in classical complexity.
Here, the concept of \W$[1]$-\Hardness\ replaces the one of \NP-\Hardness, and we need not only {to} construct an equivalent instance  \FPT\ time, but also {to} ensure that the size of the parameter in the new instance depends only on the size of the parameter in the original instance.
These types of reductions are called \emph{parameter preserving reductions}.
For {a} detailed introduction to parameterized complexity and related terminologies, we refer the reader to the recent books by Cygan et al.~\cite{cygan2015parameterized} and Fomin et al.~\cite{fomin2019kernelization}.

In {the} \textsc{Multicolored Independent Set} {problem}, the input is a graph $G$, an integer $q$, and  a partition $\langle V_1, V_2, \dots, V_q \rangle$ of $V(G)$.
The  objective is to determine whether there exists a multicolored independent set in $G$. We say {that} an independent set  in $G$ is \emph{multicolored} if it contains one vertex from $V_i$ for every $i \in [q]$.
Note that it is safe to assume that each $V_i$ is a clique in $G$.
We will need the following result.
\begin{proposition}[cf. Theorem~$14.21$ in  \cite{cygan2015parameterized}]
\label{prop:ind-set-w-hard}
{\sc Multicolored Independent Set} parameterized by the size of the solution $q$ is \emph{\W[1]-\Hard}.
Moreover, unless {the} \emph{\ETH} fails, it does not admit an algorithm running in time $f(q) \cdot n^{o(q)}$ for any computable function {$f: \mathbb{N} \mapsto \mathbb{N}$}.
\end{proposition}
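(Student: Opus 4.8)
The plan is to obtain both statements by a single parameter-preserving polynomial-time reduction from the classical \textsc{Independent Set} problem parameterized by the solution size $q$, which is the canonical \W[1]-\Hard\ problem and for which, under the \ETH, no algorithm running in time $f(q) \cdot n^{o(q)}$ exists (this is exactly the non-multicolored content packaged by the statement; see \cite[Chapters~$13$ and~$14$]{cygan2015parameterized}). Since the number of colors in the constructed instance will be \emph{exactly} $q$ and the blow-up in the number of vertices will be polynomial, both the \W[1]-\Hardness\ and the fine-grained \ETH\ lower bound transfer verbatim.

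First I would describe the construction. Given an instance $(G, q)$ of \textsc{Independent Set}, build a graph $H$ whose vertex set is partitioned into $q$ color classes $V_1, \dots, V_q$, where each $V_i = \{ v^i \mid v \in V(G)\}$ is a private copy of $V(G)$. Turn each $V_i$ into a clique, so that any independent set of $H$ contains at most one vertex per color. Between two distinct classes $V_i$ and $V_j$, insert the edge $u^i v^j$ precisely when either $uv \in E(G)$ or $u = v$. The output is the instance $(H, q, \langle V_1, \dots, V_q\rangle)$ of \textsc{Multicolored Independent Set}, and the parameter is kept equal to $q$.

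Next I would verify the equivalence. A multicolored independent set of $H$ picks exactly one vertex $u_i^i$ from each $V_i$; pairwise non-adjacency across classes forces, by the definition of the cross edges, both $u_i \neq u_j$ and $u_i u_j \notin E(G)$ for all $i \neq j$, so $\{u_1, \dots, u_q\}$ is an independent set of size $q$ in $G$. Conversely, an independent set $\{w_1, \dots, w_q\}$ of $G$ yields the multicolored independent set $\{w_1^1, \dots, w_q^q\}$ of $H$. Since $|V(H)| = q \cdot |V(G)|$ and $H$ is built in polynomial time, a hypothetical algorithm for \textsc{Multicolored Independent Set} running in time $f(q) \cdot |V(H)|^{o(q)}$ would decide \textsc{Independent Set} in time $f(q) \cdot (q\,n)^{o(q)} = f'(q) \cdot n^{o(q)}$, contradicting the \ETH; the same construction also witnesses the plain \W[1]-\Hardness.

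I do not expect a serious obstacle, as the reduction is entirely standard; the only points requiring care are $(i)$ enforcing that a multicolored solution uses $q$ \emph{distinct} vertices of $G$, which is exactly why each color class is made a clique and why copies of the same vertex are joined across classes, and $(ii)$ checking that the parameter is preserved \emph{exactly} (not merely up to some function of $q$), which is what allows the $n^{o(q)}$ lower bound, and not just plain \W[1]-\Hardness, to be inherited. If one instead prefers to start from \textsc{Clique} (equivalently \textsc{Multicolored Clique}), a symmetric construction using independent color classes and joining $u^i$ to $v^j$ precisely when $uv \in E(G)$ and $u \neq v$ yields the same conclusion, reflecting that the two multicolored formulations are standard reformulations of one another.
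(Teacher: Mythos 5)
Your proposal is correct: the self-reduction from \textsc{Independent Set} (copies of $V(G)$ as cliques, cross edges for $uv \in E(G)$ or $u = v$) preserves the parameter exactly and transfers both the W$[1]$-hardness and the $f(q)\cdot n^{o(q)}$ ETH lower bound. Note that the paper does not prove this statement at all---it imports it as a known result (cf.\ Theorem~14.21 of Cygan et al.)---and your argument is essentially the standard textbook reduction underlying that citation, so there is nothing to compare beyond that.
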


A \emph{reduction rule} is a polynomial-time  algorithm that takes as input an instance of a problem and outputs another, usually reduced, instance.
A reduction rule said to be \emph{applicable} on an instance if the output {instance} and input instance are different.
A reduction rule is \emph{safe} if the input instance is a \yes-instance if and only if the output instance is a \yes-instance.

\section{\NP-hardness results}
\label{sec:np-hardness}

In this section we prove \autoref{thm:np-hard}.
The first and the second item in the statement of \autoref{thm:np-hard} {follow} directly from \autoref{obs:bound-k-rank} and \autoref{obs:bound-d-k}, respectively.
Hence,  we focus on the third case in this section.
Recall that in the \textsc{Multicolored Independent Set} problem, the input is a graph $G$,  an integer $q$, and a partition $\langle V_1, V_2, \dots, V_q \rangle$ of $V(G)$.
We consider a special case of this problem and call it $(3 \times q)$-\textsc{Multicolored Independent Set}.
In this problem, the input is {the} same as that of \textsc{Multicolored Independent Set}, but it comes with a guarantee that every color class has exactly three vertices.
The \NP-\Hardness\ of this problem follows from the standard reduction from $3$-\textsc{SAT} to \textsc{Independent Set} ({see, for example,~\cite[Theorem~$8.8$]{kleinberg2006algorithm}}).
This reduction ensures that each color class is a clique of size two or three.
For every color class $V_i$ that contains two vertices, we add  a new vertex to $V_i$ and make it adjacent to every vertex in the graph.

\begin{figure}[t]
  \begin{center}
    \includegraphics[scale=0.58]{./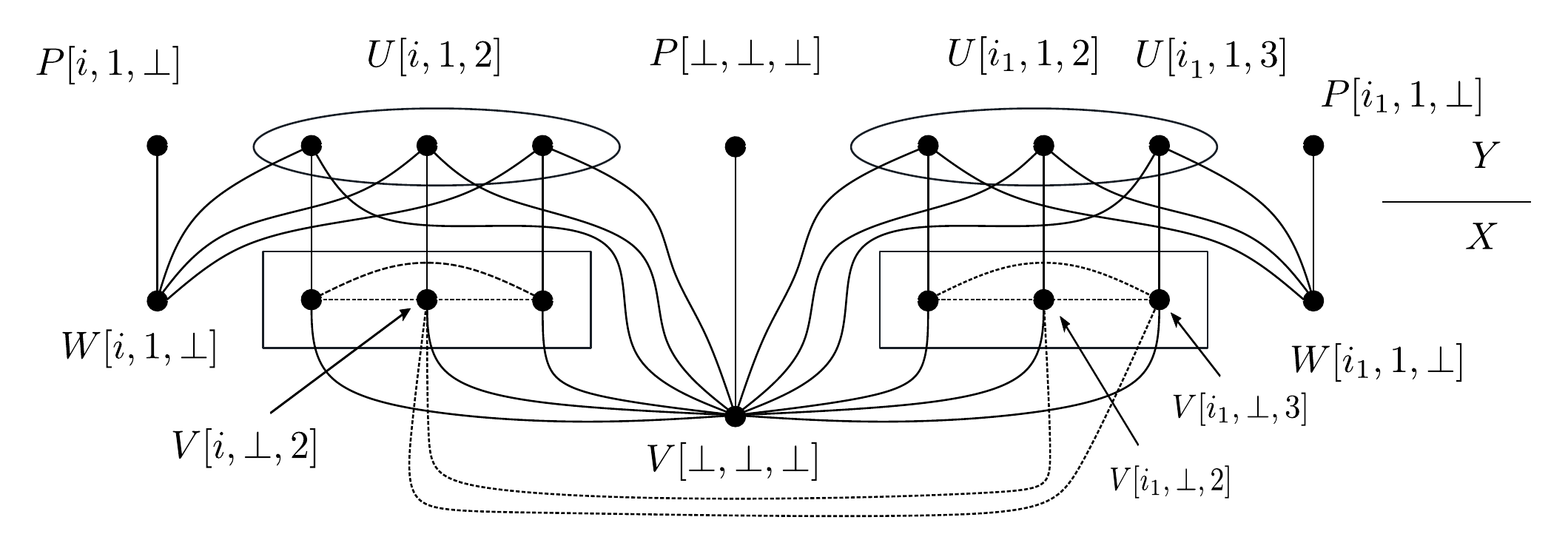}
    \end{center}
   \caption{Reduction from $(3 \times q)$-\textsc{Multicolored Clique} to \textsc{Contraction(\vc)} for  $\ell = 1$. Every vertex below the horizontal line (accompanied by $X, Y$) on right is in $X$ which is a minimum vertex cover of $G'$. Dashed edges shows edges in $G$. \label{fig:np-hard}}
\end{figure}

\medskip
 \noindent \textbf{The reduction:}
The reduction takes as input an instance $(G, q, \langle V_1, V_2, \dots, V_q \rangle)$ of $(3 \times q)$-\textsc{Multicolored Independent Set}, a positive integer $\ell$,  and returns an instance $(G', k, d)$ of \textsc{Contraction(\vc)} such that $k = d + \frac{\ell - 1}{\ell + 3} \cdot d$.
For notational convenience, rename the {three} vertices {in the} $i^{th}$ color class {of} $G$ {as}  $V[i, \perp, 1], V[i,  \perp, 2]$, and $V[i, \perp, 3]$ for every $i \in [q]$.
We use a similar notation to refer to new vertices added to $G$ in order to construct $G'$.
We use $i$ and $j$ as the running variables in the set $[q]$ and $[\ell]$, respectively.
See \autoref{fig:np-hard} for an illustration for the case when $\ell = 1$.
The reduction adds the following vertices to a copy of $G$ to construct $G'$:

\begin{itemize}
\item $W[i, j,  \perp]$ and $P[i, j, \perp]$ for every $i \in [q]$  and every $j \in [\ell]$,
\item $U[i, j, 1], U[i, j, 2], $ and $U[i, j, 3]$,   for every $i \in [q]$ and every $j \in [\ell]$, and
\item two vertices denoted by $V[\perp, \perp, \perp]$ and $P[\perp, \perp, \perp]$.
\end{itemize}

It adds the following edges{:}

\begin{itemize}
\item For every $i \in [q]$ and $j \in [\ell]$,  it adds the four edges incident on $W[i, j, \perp]$ whose other endpoints are $P[i, j, \perp]$,  $U[i, j, 1]$,  $U[i, j, 2]$, and $U[i, j, 3]$.
\item For every $i \in [q]$ and $j \in [\ell]$, it adds three matching edges whose endpoints are $\{V[i,\perp, 1], U[i, j, 1]\}$, $\{V[i, \perp, 2], U[i, j, 2]\}$, and $\{V[i, \perp, 3], U[i, j, 3]\}$.
\item For {every} $i \in [q]$ and $j \in [\ell]$, it adds three edges incident on $V[\perp, \perp, \perp]$ whose other endpoints are $V[i, \perp, 1]$, $V[i, \perp, 2]$,  and $V[i, \perp, 3]$.
It {adds} three more edges incident on $V[\perp, \perp, \perp]$ whose other endpoints are $U[i, j, 1]$, $U[i, j, 2]$,  and $U[i, j, 3]$.
\item It adds {an} edge with endpoints $V[\perp, \perp, \perp ]$ and $P[\perp,\perp, \perp]$.
\end{itemize}

This completes the construction of $G'$.
The reduction sets $d = (\ell + 3) \cdot q$ and $k = d + (\ell - 1) \cdot q$, and {returns}  $(G', k , d)$ as the instance of \textsc{Contraction(\vc)}.

We define sets $V, U, W$, and $P$ in the natural way, i.e., $V$ is the collection of all the vertices that have representation $V[i, j, \perp]$ for some $i\in [q]$ and $j \in [\ell]$.
We define {the} other sets {similarly}.
Note that $V[\perp, \perp, \perp] \not\in V$ and $P[\perp, \perp, \perp] \not\in P$.
By the construction,  every vertex in $\{P[\perp, \perp, \perp]\} \cup P$ is a pendant vertex.

For the sake of simplicity, we start by presenting an overview of the correctness of the reduction for the case where $\ell = 1$, i.e., $k = d$.
The formal proof is provided after the overview.
By \autoref{lemma:solution-edge-pair}, there is a solution $F$ of $(G', k, d)$ if and only if there exists a solution pair $\langle X_{s}, Y_s \rangle$ such that
$(i)$ $X_{\el} = (X \setminus X_{s}) \cup Y_s$ is a vertex cover of $G'$,
$(ii)$ $\rank(X_{\el}) \ge |F| = k$, and
$(iii)$ $|X_{\el}| \le |X| + |F| - d \le |X| + k - d = |X|$.
The reduction ensures that the size of $X$ is $k + 1$.
With this, the second and the third conditions force $X_{\el}$ to be a connected set of the same size as that of $X$.
For the example in \autoref{fig:np-hard}, \autoref{lemma:pendant-out-of-pair} implies that $V[\perp, \perp, \perp]$, $W[i, 1, \perp]$ and $W[i_1, 1, \perp]$ are in $X_{\el}$.
Hence, to provide connectivity between $V[\perp, \perp, \perp]$ and $W[i, 1, \perp]$, at least one of the vertices in $\{U[i, 1, 1], U[i, 1, 2], U[i, 1, 3]\}$ needs to be in $X_{\el}$.
However, as $|X_{\el}| = |X|$, at least one vertex in $\{V[i, \perp, 1], V[i, \perp, 2], V[i, \perp, 3]\}$ needs to be out of $X_{\el}$, i.e., in $X_s$.
As this is true for every color class, $X_{s}$ includes at least one vertex from it.
The first condition enforces $X_{s}$ to be an independent set.
This implies that $X_s$ can include at most one vertex from each color class.
Moreover, if $X_s$ includes $V[i, \perp, 2]$ then it cannot include $V[i_1, \perp, 2]$ or $V[i_1, \perp, 3]$.
These are precisely the conditions we want for encoding an instance of \textsc{Multicolored Independent Set}.
This concludes the overview of the reduction.

We formalize the above ideas in \autoref{lemma:np-hard-forward} and \autoref{lemma:np-hard-backward}.
Before that, in the next lemma we argue about the size of a minimum vertex cover of $G'$.

\begin{lemma}
\label{lemma:np-hard-vc}
The set $X := V \cup W \cup \{V[\perp, \perp, \perp] \}$ is a minimum vertex cover of $G'$.
\end{lemma}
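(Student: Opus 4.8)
The plan is to show that $X = V \cup W \cup \{V[\perp,\perp,\perp]\}$ is both a vertex cover and a minimum one, by exhibiting a matching of size $|X|$ in $G'$. First I would verify that $X$ is a vertex cover: going through the four groups of edges added by the reduction, I need to check that every edge has at least one endpoint in $X$. The edges incident on each $W[i,j,\perp]$ are covered since $W \subseteq X$; the matching edges $\{V[i,\perp,r], U[i,j,r]\}$ are covered since $V \subseteq X$; the edges incident on $V[\perp,\perp,\perp]$ (to both the $V$-vertices and the $U$-vertices) are covered since $V[\perp,\perp,\perp] \in X$; and the single pendant edge $V[\perp,\perp,\perp]P[\perp,\perp,\perp]$ is covered by $V[\perp,\perp,\perp]$. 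Equivalently, I would observe that $V(G') \setminus X = U \cup P \cup \{P[\perp,\perp,\perp]\}$ is an independent set, since the $U$-vertices are only adjacent to vertices in $W$, $V$, and $V[\perp,\perp,\perp]$ (all in $X$), and the $P$-vertices are pendant with their unique neighbor in $W$ or equal to $V[\perp,\perp,\perp]$.

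Next I would compute $|X|$. Counting the vertices in each part: $|V| = 3q$, $|W| = q\ell$, and the single vertex $V[\perp,\perp,\perp]$, giving $|X| = 3q + q\ell + 1$. To prove minimality, the cleanest route is to produce a matching $M$ in $G'$ with $|M| = |X|$, since $\vc(G') \ge |M|$ always holds, and this forces $X$ to be minimum. The natural candidates for the matching are: for each $i,j$ the pendant edge $W[i,j,\perp]P[i,j,\perp]$, which saturates all of $W$ and uses $q\ell$ edges; for each $i$ three matching edges saturating the three vertices $V[i,\perp,1], V[i,\perp,2], V[i,\perp,3]$, for instance using a fixed copy $j=1$ via $\{V[i,\perp,r], U[i,1,r]\}$, giving $3q$ edges; and finally the pendant edge $V[\perp,\perp,\perp]P[\perp,\perp,\perp]$ saturating $V[\perp,\perp,\perp]$, one more edge.

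The main obstacle I anticipate is ensuring these matching edges are pairwise vertex-disjoint, which requires care because the $U$-vertices are shared between the edges incident on $W$ and the matching edges to $V$. I would resolve this by checking that the edges $W[i,j,\perp]P[i,j,\perp]$ use only $W$- and $P$-vertices, the edges $\{V[i,\perp,r], U[i,1,r]\}$ use $V$- and $U$-vertices, and the edge $V[\perp,\perp,\perp]P[\perp,\perp,\perp]$ uses only $V[\perp,\perp,\perp]$ and $P[\perp,\perp,\perp]$; since these three families of endpoints live in disjoint vertex classes, the matching is valid. Summing, $|M| = q\ell + 3q + 1 = |X|$, and since every vertex cover meets each edge of a matching in a distinct vertex, $\vc(G') \ge |M| = |X|$. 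Combined with the fact that $X$ is a vertex cover, this shows $X$ is a minimum vertex cover of $G'$, completing the proof.
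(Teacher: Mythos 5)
Your proposal is correct and takes essentially the same route as the paper: both exhibit a matching of size $|X|$ consisting of the pendant edges at the vertices of $W \cup \{V[\perp,\perp,\perp]\}$ together with the three $V$--$U$ matching edges per color class (for a fixed copy $j$), which forces $\vc(G') \ge |X|$. Your write-up is in fact slightly more careful than the paper's, since you explicitly verify vertex-disjointness and tally $|M| = q\ell + 3q + 1 = |X|$, and your independence check of $V(G')\setminus X$ also correctly accounts for the edges of the original graph $G$ (which lie inside $V \subseteq X$).
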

\begin{proof}
By the construction of $G'$,  it is easy to verify that $X$ is a vertex cover of $G'$.
To prove that it is a minimum vertex cover, we show that there is a matching $M$ of size $|X|$ in $G'$.
Initialize $M = \emptyset$.
For every vertex in $\{V[\perp, \perp, \perp]\} \cup W$, include the edge in $M$ incident on its pendant neighbor.
For every $i \in [q]$, include the three edges whose endpoints are $\{V[i, \perp, 1], U[i, j, 1]\}$, $\{V[i, \perp, 2], U[i, j, 2]\}$, and $\{V[i, \perp, 3], U[i, j, 3]\}$.
It is easy to verify that $M$ is a matching of size $|X|$.
This implies that $X$ is a minimum vertex cover of $G'$.
\end{proof}


\begin{lemma}
\label{lemma:np-hard-forward}
If $(G, q, \langle V_1, V_2, \dots, V_q \rangle)$ is a {\sc Yes}-instance of $(3 \times q)$-{\sc Multicolored Independent Set},  then $(G', k, d)$ is a {\sc Yes}-instance of {\sc Contraction(\vc)}.
\end{lemma}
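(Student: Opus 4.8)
The plan is to turn a multicolored independent set directly into a solution pair and then invoke \autoref{lemma:solution-edge-pair}. Write the independent set as $S = \{V[i,\perp,c_i] : i \in [q]\}$, where $c_i \in \{1,2,3\}$ records the color chosen in class $i$. The natural move is to delete from the minimum vertex cover $X$ of \autoref{lemma:np-hard-vc} exactly these vertices and to reconnect the graph by inserting, for each copy $j \in [\ell]$, the matching partner $U[i,j,c_i]$. Concretely I would set
\[
X_s := \{V[i,\perp,c_i] : i \in [q]\}, \qquad Y_s := \{U[i,j,c_i] : i \in [q],\ j \in [\ell]\},
\]
and $X_{\el} := (X \setminus X_s) \cup Y_s$, and I would take $F$ to be a spanning tree of $G'[X_{\el}]$. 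Since $G'$ is connected (every vertex reaches $V[\perp,\perp,\perp]$), $X$ is a minimum vertex cover, and a one-line vertex count gives $k < \rank(G')$, the hypotheses of \autoref{lemma:solution-edge-pair} are met, and it suffices to check conditions $(i)$--$(iii)$ with $|F| = k$.

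First I would verify condition $(i)$, that $X_{\el}$ is a vertex cover; this is the only place where independence of $S$ is used. Going edge class by edge class, the only edges whose coverage could be lost by deleting $X_s$ are the original edges of $G$ inside $V$ and the matching edges $\{V[i,\perp,c_i], U[i,j,c_i]\}$. An original edge could become uncovered only if both its endpoints lay in $S$, which is impossible because $S$ is independent; and each broken matching edge is covered by its inserted partner $U[i,j,c_i] \in Y_s$. All the remaining edges (those incident to $W$, to $V[\perp,\perp,\perp]$, and the pendant edges) stay covered because $W \cup \{V[\perp,\perp,\perp]\} \subseteq X_{\el}$.

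Next I would dispatch conditions $(ii)$ and $(iii)$ by counting. From $|X_s| = q$ and $|Y_s| = \ell q$ one gets $|Y_s| - |X_s| = (\ell - 1)q = k - d = |F| - d$, so $(iii)$ holds with equality. For $(ii)$ I would show $G'[X_{\el}]$ is connected, with $V[\perp,\perp,\perp]$ as a hub: each retained vertex $V[i,\perp,t]$ (with $t \ne c_i$) and each inserted $U[i,j,c_i]$ is adjacent to $V[\perp,\perp,\perp]$, and each $W[i,j,\perp]$ hangs off its $U[i,j,c_i]$. Since $|X_{\el}| = |X| - q + \ell q = (2\ell+2)q + 1 = k+1$, connectivity yields $\rank(X_{\el}) = |X_{\el}| - 1 = k = |F|$, giving $(ii)$ with equality. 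With all three conditions verified, \autoref{lemma:solution-edge-pair} produces the edge set $F$ with $|F| = k$ and $\vc(G'/F) \le \vc(G') - d$, so $(G',k,d)$ is a \yes-instance.

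The delicate point is condition $(ii)$: the single insertion $U[i,j,c_i]$ must simultaneously repair the broken matching edge, keep the gadget $W[i,j,\perp]$ attached to the hub, and not overrun the budget, and it is precisely this triple duty that pins down the chosen values $d = (\ell+3)q$ and $k = d + (\ell-1)q$. Everything else is bookkeeping once independence of $S$ is invoked for the cover check.
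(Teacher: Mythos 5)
Your proposal is correct and follows essentially the same route as the paper: you define the identical solution pair $X_s = \{V[i,\perp,c_i]\}$, $Y_s = \{U[i,j,c_i]\}$, verify conditions $(i)$--$(iii)$ of \autoref{lemma:solution-edge-pair} via the same connectivity-and-counting argument (with $|X_{\el}| = k+1$, hence $\rank(X_{\el}) = k$ and $|Y_s|-|X_s| = (\ell-1)q = k-d$), and invoke that lemma to produce the contraction set $F$. The only difference is that you spell out the vertex-cover check and the hub-based connectivity argument, which the paper dismisses as ``easy to verify.''
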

\begin{proof}
Suppose that $Q$ is a multicolored independent set in $G$.
Let $\{V[i, \perp, z_i]\} = Q \cap V_i$ for $z_i \in \{1, 2, 3\}$.
By \autoref{lemma:np-hard-vc},  $X := V \cup W \cup \{V[\perp, \perp, \perp] \}$ is a minimum vertex cover of $G'$.
Define
$$X_s := \{V[i, \perp, z_i]\ |\ i \in [q] \},\ Y_s := \{U[i, j, z_i]\ |\  i \in [q]\ \land\ j \in [\ell]\}\ \text{and}\ X_s := X \setminus X_{\el}.$$
It is easy to verify that $X_{\el}$ is a vertex cover of $G'$.
As $G[X_{\el}]$ is a connected graph, $\rank(X_{\el}) = |X_{\el}| - 1 = |X| + (\ell - 1) \cdot q - 1$.
As  $|X| = (\ell + 3) \cdot q + 1  = d + 1$,   we get $\rank(X_{\el}) = d + 1 + (\ell - 1) \cdot q - 1 = k$.
Also, $|Y_s| -  |X_s| = (\ell - 1) \cdot q = k - d$.
This implies that the pair $\langle X_s, Y_s \rangle$ satisfies all the three conditions mentioned in the statement of \autoref{lemma:solution-edge-pair}.
As $k < \rank(G)$,  \autoref{lemma:solution-edge-pair} implies that there exists a set of {edges} $F$ of size at most $k$ in $G'$ such that $\vc(G'/F) \le \vc(G) - d$.
Hence, $(G', k, d)$ is a {\yes-instance} of \textsc{Contraction(\vc)}.
\end{proof}

\begin{lemma}
\label{lemma:np-hard-backward}
If $(G', k, d)$ is a {\sc Yes}-instance of {\sc Contraction(\vc)} then $(G, q, \langle V_1, V_2, \dots, V_q \rangle)$ is a {\sc Yes}-instance of $(3 \times q)$-{\sc Multicolored Independent Set}.
\end{lemma}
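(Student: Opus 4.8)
The plan is to prove the converse of \autoref{lemma:np-hard-forward}: starting from a solution of $(G', k, d)$, we extract a multicolored independent set in $G$. By \autoref{lemma:solution-edge-pair}, a solution $F$ corresponds to a solution pair $\langle X_s, Y_s \rangle$ with $X_{\el} = (X \setminus X_s) \cup Y_s$ a vertex cover of $G'$, $\rank(X_{\el}) \ge k$, and $|Y_s| - |X_s| \le k - d = (\ell - 1) \cdot q$. Since $|X| = (\ell+3) \cdot q + 1 = d + 1$, the rank condition forces $|X_{\el}| \ge \rank(X_{\el}) + c \ge k + c$, where $c$ is the number of connected components of $G'[X_{\el}]$; combined with $|X_{\el}| = |X| - |X_s| + |Y_s| \le |X| + (k-d) = k + 1$, this tightly constrains the structure. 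First I would use \autoref{lemma:pendant-out-of-pair} on every pendant-adjacent vertex to assume $X_s$ avoids $V[\perp,\perp,\perp]$ and all of $W$, so these vertices lie in $X_{\el}$.

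Next I would establish the counting that forces at least one $V$-vertex out of each color class. The key observation is a connectivity requirement: for each $i \in [q]$ and $j \in [\ell]$, the vertex $W[i,j,\perp]$ lies in $X_{\el}$, and to connect it to the rest of $X_{\el}$ (in particular to $V[\perp,\perp,\perp]$, which is the only natural hub), some vertex among $\{U[i,j,1], U[i,j,2], U[i,j,3]\}$ must be in $X_{\el} \cap Y_s$. Counting these forced $U$-vertices gives $|Y_s| \ge \ell \cdot q$ (one per pair $(i,j)$), and since $|Y_s| - |X_s| \le (\ell-1) \cdot q$ we get $|X_s| \ge \ell \cdot q - (\ell-1) \cdot q = q$. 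On the other hand, the vertex-cover condition $(i)$ forces $X_s$ to be an independent set, so within each color-class clique $X_s$ contains at most one vertex; with $q$ color classes this gives $|X_s| \le q$ together with the requirement that $X_s$ picks exactly one vertex $V[i,\perp,z_i]$ from each $V_i$. I would need to verify carefully that $X_s$ cannot profitably remove vertices outside the $V$-classes, which is where the rank/size bookkeeping of \autoref{lemma:solution-edge-pair} must be exploited.

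The final step is to argue that $\{V[i,\perp,z_i] \mid i \in [q]\}$ is a multicolored independent set in $G$. Independence in $G$ follows from the fact that $X_s = X \setminus X_{\el}$ is an independent set in $G'$ (guaranteed by condition $(i)$, as $X_{\el}$ is a vertex cover so its complement is independent), and the edges of $G$ are preserved as the ``dashed'' edges among the $V$-vertices in $G'$. Hence no two chosen representatives are adjacent in $G$, and exactly one is chosen per class, yielding the desired multicolored independent set.

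The main obstacle I anticipate is the bookkeeping in the second paragraph: showing rigorously that the rank and cardinality constraints leave \emph{no slack}, so that $X_s$ must consist of exactly one $V$-vertex per color class and nothing else. In particular, I would need to rule out alternative solution pairs that remove $U$- or $W$-vertices instead, or that leave some color class untouched while over-removing from another. This requires pinning down that $G'[X_{\el}]$ is connected (forcing $\rank(X_{\el}) = |X_{\el}| - 1$), which combined with $\rank(X_{\el}) \ge k$ and $|X_{\el}| \le k+1$ yields $|X_{\el}| = k+1$ and a single connected component; this rigidity is exactly what propagates the per-class constraint uniformly across all $q$ classes and all $\ell$ copies.
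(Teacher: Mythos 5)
Your proposal is correct and follows essentially the same route as the paper's proof: extract a solution pair via \autoref{lemma:solution-edge-pair}, use \autoref{lemma:pendant-out-of-pair} to force $W \cup \{V[\perp,\perp,\perp]\} \subseteq X_{\el}$, deduce connectivity of $G'[X_{\el}]$ from the rank/size count, count one forced $U[i,j,\cdot]$-vertex in $Y_s$ per pair $(i,j)$ to get $|X_s| \ge q$, and use independence of $X_s$ against the clique structure of the color classes to pin down exactly one $V$-vertex per class. The obstacle you anticipate is in fact already dissolved by your own setup: $X_s \subseteq X = V \cup W \cup \{V[\perp,\perp,\perp]\}$ cannot contain $U$-vertices by definition, and the pendant-vertex step already excludes $W$ and $V[\perp,\perp,\perp]$, so $X_s \subseteq V$ is immediate and no further bookkeeping is needed.
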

\begin{proof}
Suppose that $F'$ is a solution of $(G', k, d)$, i.e., $\vc(G'/F) \le \vc(G) - d$ and $|F'| \le k$.
As $k < \rank(G)$, we can assume, without loss of generality, that $|F'| = k$.
\autoref{lemma:solution-edge-pair} implies that there exists a solution pair $\langle X_s, Y_s \rangle$ that satisfies the three conditions mentioned in its statement.
Recall that every vertex in $\{V[\perp, \perp, \perp]\} \cup W$ is adjacent to some pendant vertex in $G'$.
\autoref{lemma:pendant-out-of-pair} implies that there exits a solution pair $\langle X_s, Y_s \rangle$ with {the} additional property that $X_s \cap (\{V[\perp, \perp, \perp]\} \cup W) = \emptyset$.
This implies $(\{V[\perp, \perp, \perp]\} \cup W) \subseteq X_{\el}  := (X \setminus X_s) \cup Y_s$.

We first argue that $q \le |X_s|$.
By the second condition in \autoref{lemma:solution-edge-pair},  $\rank(X_{\el}) \ge k$.
By the third condition in  \autoref{lemma:solution-edge-pair}, $|X_{\el}| \le  |X| + k - d$.
As $|X| = d + 1$, {it follows} that $|X_{\el}| \le k + 1$.
Hence, the number of vertices in $G'[X_{\el}]$ is at most $k + 1$, whereas the number of edges in a spanning forest of $G'[X_{\el}]$ is at least $k$.
This implies that $G'[X_{\el}]$ is connected.
Fix integers $i \in [q]$ and $j \in [\ell]$.
By the construction of $G'$,  every path between $V[\perp, \perp, \perp]$ to $W[i, j, \perp]$ contains at least one vertex in $\{U[i, j, 1], U[i, j, 2], U[i, j, 3]\}$.
As $(\{V[\perp, \perp, \perp]\} \cup W) \subseteq X_{\el}$,  and $G[X_{\el}]$ is connected,   $Y_s$ contains at least one vertex in the set.
As this is true for every $i \in [q]$ and $j \in [\ell]$, we have $|Y_s| \ge q \cdot \ell$.
By the third condition mentioned in \autoref{lemma:solution-edge-pair},  $|Y_s| - |X_s| \le k - d$.
Substituting $k - d = (\ell - 1) \cdot q$, we get $q \le |X_s|$.

As $X_s \cap (\{V[\perp, \perp, \perp]\} \cup W) = \emptyset$ and $X := V \cup W \cup \{V[\perp, \perp, \perp] \}$,  we have  $X_s \subseteq V$.
By the first condition mentioned in \autoref{lemma:solution-edge-pair},  $(X \setminus X_s) \cup Y_s$ is a vertex cover of $G'$.
As each $V_i$ is a clique in $G'$, {we have} that $|X_s \cap V_i| \le 1$ for every $i \in [q]$.
This, along with the fact {that} $q \le |X_s|$ {imply} that $|X_s \cap V_i| = 1$ for every $i \in [q]$.
Recall that $G'[V]$ is {isomorphic to} $G$.
As $G'[X_s]$ is an independent set in $G'$, {it follows} that $X_s$ is also an independent set in $G$.
It is also evident that it is multicolored.
This implies that $(G, q, \langle V_1, V_2, \dots, V_q \rangle)$ is a \yes-instance of $(3 \times q)$-\textsc{Multicolored Independent Set}.
\end{proof}


As mentioned before, the first and the second point in the statement of the theorem {follow} directly from \autoref{obs:bound-k-rank} and \autoref{obs:bound-d-k}, respectively.
\autoref{lemma:np-hard-forward} and \autoref{lemma:np-hard-backward} imply that the reduction is correct.
By the description of the reduction, it outputs the constructed instance in polynomial time.
Hence, the third point in the statement of \autoref{thm:np-hard} is correct, which concludes its proof.

\section{\W[1]-\Hard ness results}
\label{sec:w-hard}

In this section we prove \autoref{thm:w1-hard}. That is, we show that \textsc{Contraction(\vc)} is \W[1]-\Hard\ when parameterized by the solution size $k$ plus the measure $d$.
Moreover, unless {the} \ETH\ fails, it does not admit an algorithm running in time $f(k + d) \cdot n^{o(k + d)}$ for any computable function {$f: \mathbb{N} \mapsto \mathbb{N}$}.
To obtain these results, we introduce {the} \textsc{Edge Induced Forest} problem.
We define this problem formally in \autoref{subsec:edge-induced-forest}, and present a parameter preserving reduction from \textsc{Multicolored Independent Set} to it.
This reduction, along with known results about \textsc{Multicolored Independent Set}, imply the corresponding result for \textsc{Edge Induced Forest}.
The proof is presented in \autoref{thm:w-hardness-edge-induced-forest}.
In \autoref{subsec:w1-contraction-vc}, we present a parameter preserving reduction from \textsc{Edge Induced Forest} to \textsc{Contraction(\vc)}.
This reduction, along with \autoref{thm:w-hardness-edge-induced-forest}, imply the correctness of \autoref{thm:w1-hard}.

\subsection{Edge Induced Forest is \W[1]-\Hard}
\label{subsec:edge-induced-forest}

We define the following problem.

\defproblem{\textsc{Edge Induced Forest}}{{A graph} $G$ and an integer $\ell$.}{Does there exist a set $F$ of at least $\ell$ edges in $G$ such that $G[V(F)]$ is a forest?}

We note that a similar problem called \textsc{Induced Forest} has already been studied.
In this problem, the input is the same but the objective is to find a subset $X$ of \emph{vertices} of $G$ of size at least $\ell$ such that $G[X]$ is a forest.
The general result of Khot and Raman \cite{KhotR02} implies that \textsc{Induced Forest} is \W[1]-\Hard\ when parameterized by the size of the solution $\ell$.
As expected, we can prove a similar result for \textsc{Edge Induced Forest}.

\begin{theorem}
\label{thm:w-hardness-edge-induced-forest}
{\sc Edge Induced Forest}, parameterized by the size of the solution $\ell$,  is \emph{\W[1]-\Hard}.
Moreover, unless {the} \emph{\ETH} fails, it does not admit an algorithm running in time $f(\ell) \cdot n^{o(\ell)}$ for any computable function {$f: \mathbb{N} \mapsto \mathbb{N}$}.
\end{theorem}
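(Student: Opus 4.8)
The plan is to give a parameter preserving reduction from \textsc{Multicolored Independent Set} to \textsc{Edge Induced Forest}, so that \autoref{prop:ind-set-w-hard} transfers both the \W[1]-\Hardness\ and the \ETH\ lower bound. Given an instance $(G, q, \langle V_1, \dots, V_q \rangle)$ of \textsc{Multicolored Independent Set}, where we may assume each color class $V_i$ is a clique, I would build a graph $G'$ whose ``edge-induced forest'' solutions correspond exactly to multicolored independent sets. The guiding idea is that an independent set of size $q$ in $G$ (one vertex per class) should be the cheapest way to harvest many edges that induce a forest, while picking two vertices from the same class, or two adjacent vertices across classes, should create a cycle (or a subgraph that is not a forest on the chosen vertex set) and thus be forbidden.

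Concretely, I would attach a gadget to each vertex $v$ of $G$ — for instance a pendant tree or a small ``bouquet'' of leaves rooted at $v$ — so that selecting an edge incident to $v$ forces its entire private gadget into $V(F)$ and contributes a fixed number, say $t$, of acyclic edges. Setting the target $\ell := q \cdot t$ (possibly with an additive correction) would then make any solution pick up the gadgets of exactly $q$ vertices, one per class. The forest condition must do the combinatorial work: if two chosen vertices lie in the same clique $V_i$ or are adjacent in $G$, then the edge between them together with the gadget edges should close a cycle inside $G'[V(F)]$, violating the forest requirement. The key design decision is to ensure that $G'[V(F)]$ contains the induced edges of $G$ among selected vertices, so that adjacency in $G$ translates into a cycle in $G'[V(F)]$ — this is what forces the selected representatives to form an independent set.

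I would then prove equivalence in the two standard directions. For the forward direction, given a multicolored independent set $\{v_1, \dots, v_q\}$, take $F$ to be the union of the gadget edges of the $v_i$; since the $v_i$ are pairwise non-adjacent, $G'[V(F)]$ is a disjoint union of the (acyclic) gadgets, hence a forest of the prescribed size. For the backward direction, from an edge-induced forest $F$ with $|F| \ge \ell$ I would argue, via a counting/budget argument on the gadget sizes together with \autoref{obs:rank-bound-vertices}, that $F$ must consist of exactly one full gadget per color class, and that the corresponding chosen vertices are pairwise non-adjacent (else a cycle appears), yielding a multicolored independent set. Finally, since $\ell$ is a linear function of $q$ and the construction is polynomial, the reduction is parameter preserving and maps $n^{o(\ell)}$ to $n^{o(q)}$, so both conclusions follow.

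The main obstacle will be calibrating the gadgets so that the forest constraint cleanly rules out exactly the ``bad'' selections while never accidentally forbidding a legitimate solution or allowing a cheaper spurious one. In particular I must guarantee that the only way to reach the budget $\ell$ is by committing to $q$ complete gadgets (so that fractional or mixed selections cannot sneak in extra acyclic edges), and that the edges of $G$ itself are faithfully present in the induced subgraph $G'[V(F)]$ so that adjacency reliably produces a cycle. Getting these numeric thresholds and the induced-subgraph bookkeeping exactly right — rather than the high-level gadget idea — is where the real care is needed.
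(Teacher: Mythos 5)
Your high-level plan (reduce from \textsc{Multicolored Independent Set}, make selected vertices harvest acyclic edges, and let the forest condition forbid adjacencies) is the same starting point as the paper's proof, but the specific gadget design you propose has a structural flaw that no amount of numeric calibration can fix. If each vertex $v$ of $G$ carries a \emph{private} acyclic gadget (a pendant tree or bouquet rooted at $v$), then for two selected vertices $u,v$ that are adjacent in $G$, the induced subgraph $G'[V(F)]$ consists of two vertex-disjoint trees joined by the single edge $uv$ --- which is still a forest. A single edge between two disjoint acyclic pieces can never close a cycle; for that you need two independent $u$--$v$ paths. The same problem kills the same-class constraint: two (though not three) vertices chosen from the same clique $V_i$, each with its private tree, still induce a forest. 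Consequently the backward direction fails outright: one can, for instance, pick two vertices from each of $q/2$ classes, collect $q$ full gadgets plus the clique edges joining the pairs, meet any budget of the form $\ell = q\cdot t$, and still induce a forest, without there being any multicolored independent set.

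The missing idea --- and the crux of the paper's construction --- is a \emph{shared hub}: the paper adds a single universal vertex $\alpha$ adjacent to all of $V(G)$, plus $q+1$ pendant vertices attached to $\alpha$, and sets $\ell = 2q+1$. Because $\alpha$ is universal, any solution with $\ell$ edges must contain $\alpha$ (otherwise the solution lives inside the cliques $V_i$, and pigeonhole forces three selected vertices into one clique, hence a triangle), and then every selected vertex of $G$ is connected to $\alpha$ inside $G'[V(F)]$. Now any edge between two selected vertices of $G$ --- be it a clique edge inside some $V_i$ or an adjacency of $G$ across classes --- closes a triangle through $\alpha$, which is exactly the mechanism your private gadgets cannot provide. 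The pendants at $\alpha$ merely pad the budget so that at least $q$ of the edges of $F$ must run from $\alpha$ to distinct color classes. If you replace your per-vertex gadgets by this single hub (or any device ensuring that all selected vertices are already connected to one another in the induced subgraph), your two directions and the parameter accounting ($\ell$ linear in $q$) go through essentially as you wrote them.
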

\begin{proof}
We present a simple parameter preserving reduction from \textsc{Multicolored Independent Set}.
The reduction takes as input an instance $(G, q, \langle V_1, V_2, \dots, V_q \rangle)$ of \textsc{Multicolored Independent Set}, and constructs another graph $G'$ from $G$ by adding a universal vertex $\alpha$ to $G$.
Formally, it adds a vertex $\alpha$ to $V(G)$, and adds edge $u\alpha$ to $E(G)$ for every vertex {$u$} in $V(G) \setminus \{\alpha\}$ to obtain $G'$.
It adds $q + 1$ pendant vertices adjacent to $\alpha$.
Formally,  for every $i \in [q + 1]$, it adds a vertex $x_i$ to $V(G')$, and an edge $x_i\alpha$ to $E(G')$.
Let $P$ be the collection of all the pendant vertices added in this step.
It sets $\ell = 2 \cdot q + 1$, and returns the instance $(G', \ell)$ of \textsc{Edge Induced Forest} as the constructed instance.
This completes the description of the reduction.

We now argue that the reduction is safe.
In the forward direction, suppose that $Q$ is a multicolored independent set in $G$.
Define $F := \{x_i \alpha\ |\ \forall\ x_i\in P\} \cup \{u_i \alpha\ |\ u_i \in Q \cap V_i\ \forall\ i \in [q] \}$.
It is easy to verify that $F$ is a solution of $(G', \ell)$.

In the reverse direction, suppose that $F$ is a solution of $(G', \ell)$, i.e., $G'[V(F)]$ is a forest and $|F| \ge \ell$.
We first argue that $\alpha \in V(F)$.
Assume, for the sake of contradiction, that $\alpha$ is not in $V(F)$.
This implies {that} $F$ contains $2 \cdot q + 1$ many edges in $G' - \{\alpha\}$.
Note that every vertex in $P$ is an isolated vertex in $G' - \{\alpha\}$.
Hence,  $V(F) \subseteq \bigcup_{i \in [q]} V_i$.
As $|F| \le V(F)$, {it follows that} there exists $i \in [q]$ such that $|V(F) \cap V_i| \ge 3$.
However, as $V_i$ is a clique in $G'$, this contradicts the fact that $G'[V(F)]$ is a forest.
Hence, $\alpha \in V(F)$.

As $\alpha \in V(F)$, it is safe to assume that $F$ contains all the edges in $F_P := E_{G'}(\{\alpha\}, P)$.
Since $|P| = q + 1$, $F$ contains at least $q$ many edges whose endpoints are in $V(G') \setminus P$.
The following two statements are direct consequences of the facts that $\alpha$ is an universal vertex, $V_i$ is a clique in $G'$ for any $i \in [q]$, and $G'[F]$ is a forest:
$(i)$ For any $i \in [q]$, we have $|V(F) \cap V_i| \le 1$, and in particular $|F \cap E_{G'}(V_i)| = 0$.
$(ii)$ For any $i \neq j \in [q]$, $|F \cap E_{G'}(V_{i}, V_{j})| = 0$.
This implies that every edge in $F \setminus F_P$ has $\alpha$ as one of its endpoints.
As there are at least $q$ edges $F \setminus F_P$, $|V(F) \cap V_i| = 1$ for every $i \neq j \in [q]$.

We define a subset $Q$ of $V(G)$ as
$Q := \{u \in V(G)\ |\ u\alpha \in F \text{ and } u \in V_i \text{ for some } i \in [q]\}.$
As for every $i \in [q]$, we have $|V(F) \cap V_i| = 1$, this implies $|Q \cap V_i| = 1$.
We argue that $Q$ is a multicolored independent set in $G$.
Consider any two indices $i \neq j \in [q]$, and let $u_i, u_j$ be the unique vertices in $Q \cap V_i$ and $Q \cap V_j$, respectively.
If $u_iu_j \in E(G)$ then $u_iu_j \in E(G')$ as $E(G) \subseteq E(G')$.
However,  as $u_i\alpha, u_j \alpha \in F$, this contradicts the fact that $G'[F]$ is a forest.
Hence,  vertices $u_i$ and $u_j$ are not adjacent in $G$.
Since $i, j$ are arbitrary indices in $[q]$, this is true for any $i \neq j \in [q]$, and therefore $Q$ is a multicolored independent set in $G$.

This implies that $(G, q, \langle V_1, V_2, \dots, V_q \rangle)$ is a \yes-instance of \textsc{Multicolored Independent Set} if and only if $(G', \ell)$ {is a \yes-instance} of \textsc{Edge Induced Forest}.
By the description of the reduction, it outputs the constructed instance in polynomial time.
The \W[1]-\Hardness\ of the problem follows from \autoref{prop:ind-set-w-hard}.
It is also easy to see that if \textsc{Edge Induced Forest} admits an algorithm with running time $f(\ell) \cdot n^{o(\ell)}$ {for some computable function $f: \mathbb{N} \mapsto \mathbb{N}$}, then \textsc{Multicolored Independent Set} also admits an algorithm with running time $f(q) \cdot n^{o(q)}$, which contradicts \autoref{prop:ind-set-w-hard}.
\end{proof}

\subsection{Contraction(\vc) is \W[1]-\Hard}
\label{subsec:w1-contraction-vc}

In this subsection we present a parameter preserving reduction from \textsc{Edge Induced Forest} to \textsc{Contraction(\vc)}.

\medskip
\noindent \textbf{The reduction:}
The reduction takes as input an instance $(G, \ell)$ of  \textsc{Edge Induced Forest} and returns an instance $(G', k, d)$ of \textsc{Contraction(\vc)}.
It constructs a graph $G'$ from $G$ as follows.
\begin{itemize}
\item It initializes $V(G') = E(G') = \emptyset$.
\item For every vertex $u$ in $V(G)$, it adds two vertices $z_u, p_u$ to $V(G')$ and the edge $z_up_u$ to $E(G')$.
\item For every edge $uv$ in $E(G)$, it adds {the vertex set} $\{y^{a}_{uv}$,  $y^{b}_{uv}$,  $y^{c}_{uv}$,  $w^{1}_{uv}$,  $w^2_{uv}$,  $p^{1}_{uv}$,  $p^2_{uv}\}$ to $V(G')$.
It adds edge $\{z_u y^{c}_{uv},  z_v y^{c}_{uv}\}$ to $E(G')$.
These edges encode adjacency relations in $G$.
It also adds edges $\{y^{a}_{uv}y^{b}_{uv}$,  $y^a_{uv}y^c_{uv}$,  $y^{b}_{uv}w^1_{uv}$,  $y^{b}_{uv}w^2_{uv}$,  $w^1_{uv} p^1_{uv}$,  $w^2_{uv} p^2_{uv} \}$ to $E(G')$.
These edges are part of a gadget which is private to edge $uv$.
\end{itemize}
This completes the construction of $G'$.
The reduction sets $k = 4 \cdot \ell$, $d = 3 \cdot \ell$, and returns $(G', k, d)$ as the constructed instance.
This completes the description of the reduction. Note that, indeed, $k < \rank(G')$ and $d \le k < 2d$ (more precisely, $k-d = \frac{d}{3}$).
See \autoref{fig:w-hardness-cont-vc} for an illustration.

\begin{figure}[t]
  \begin{center}
    \includegraphics[scale=0.65]{./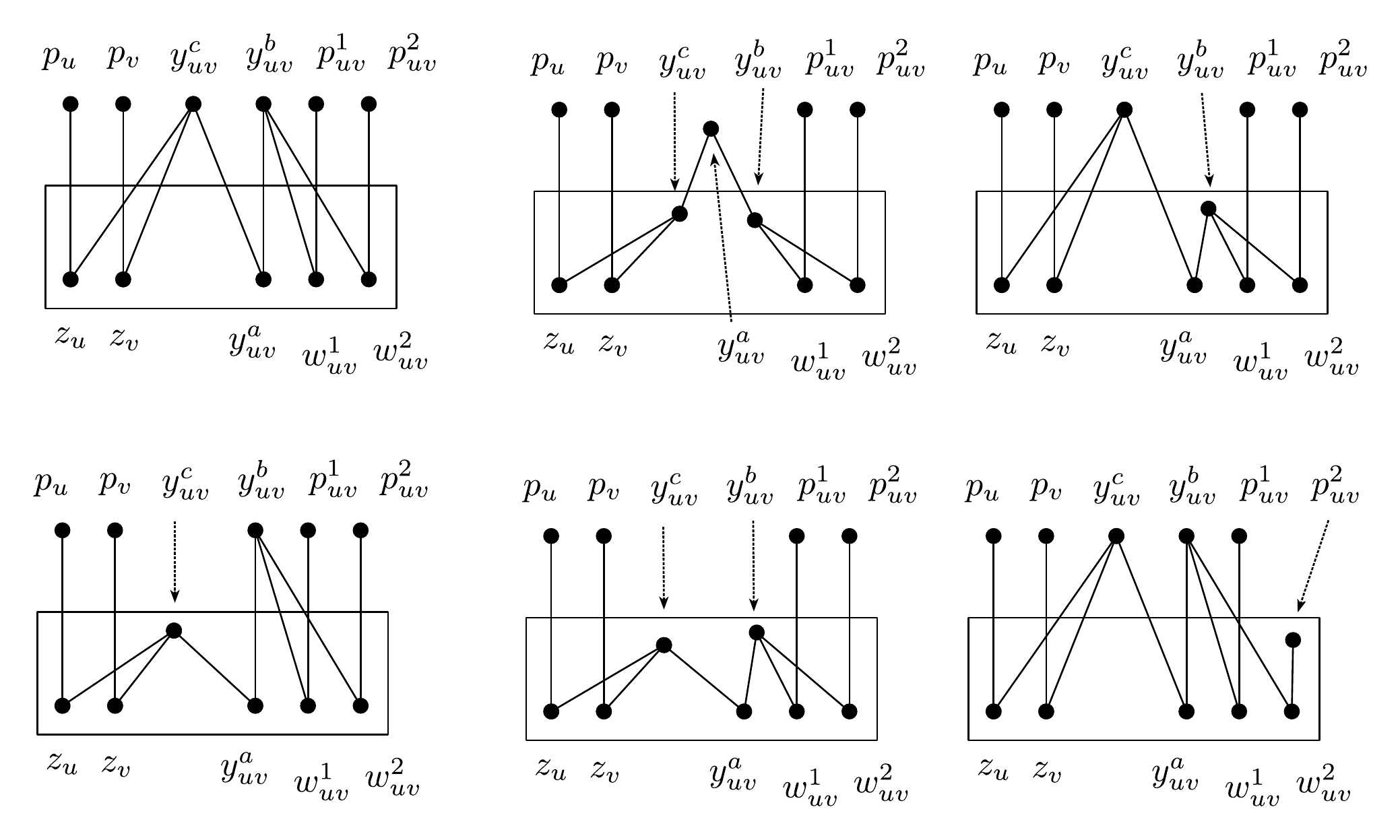}
    \end{center}
   \caption{{The} top-left figure illustrates an encoding of edge $uv$ in $G$ while reducing from an instance of \textsc{Edge Induced Forest} to an instance of \textsc{Contraction(\vc)}. The remaining five {figures correspond} to the partition of $Y_s$ mentioned in the proof of \autoref{lemma:w-hard-backward}. \label{fig:w-hardness-cont-vc}}
\end{figure}

Before proving the correctness of the reduction, we first note some properties of the graph $G'$.
We define the following sets:
\begin{itemize}
\item $Z := \{z_u \in V(G')\ |\ u \in V(G)\}$,
\item $Y := Y^a \cup Y^b \cup Y^c$ where $Y^{a} := \{y^{a}_{uv} \in V(G')\ |\ uv \in E(G)\}$,  $Y^{b} := \{y^{b}_{uv} \in V(G')\ |\ uv \in E(G)\}$, and $Y^{c} := \{y^{c}_{uv} \in V(G')\ |\ uv \in E(G)\}$,
\item $W := \{w^1_{uv}, w^2_{uv}\ |\ uv \in E(G)\}${, and}
\item $P := \{p_u \in V(G)\ |\ u \in V(G)\} \cup \{p^1_{uv}, p^2_{uv}\ |\ uv \in E(G)\}$.
\end{itemize}
Note that $\langle Z, Y, W, P \rangle$ is a partition of $V(G')$, each vertex in $P$ is a pendant vertex, and each vertex in $Z \cup W$ is adjacent to a pendant vertex in $P$.
Moreover, $\rank(G') > k$.
Note that $X$ is an independent set in $G'$.
In the next lemma, we argue that it is also a minimum vertex cover of $G'$, which implies that, as claimed in the statement of \autoref{thm:w1-hard},  $G'$ is a bipartite graph with a bipartition $\langle X, Y \rangle$ such that $X$ is a minimum vertex cover of $G'$.

\begin{lemma}
\label{lemma:reduction-w-hard-min-vc}
The set $X := Z \cup W \cup Y^a$ is a minimum vertex cover of $G'$.
\end{lemma}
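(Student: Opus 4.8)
The plan is to follow the standard two-step strategy for certifying a minimum vertex cover: first verify that $X = Z \cup W \cup Y^a$ is a vertex cover of $G'$, and then exhibit a matching $M$ of $G'$ with $|M| = |X|$. Since every vertex cover must contain at least one endpoint of each edge of a matching, this gives $|X| = |M| \le \vc(G') \le |X|$, so that $X$ is a minimum vertex cover.

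For the first step, I would argue that $V(G') \setminus X = Y^b \cup Y^c \cup P$ is an independent set, equivalently that every edge introduced by the reduction has an endpoint in $X$. This is a routine case analysis over the edge types of the gadget: the edges $z_up_u$, $z_uy^{c}_{uv}$, and $z_vy^{c}_{uv}$ are covered by $Z$; the edges $y^{a}_{uv}y^{b}_{uv}$ and $y^{a}_{uv}y^{c}_{uv}$ are covered by $Y^a$; and the edges $y^{b}_{uv}w^1_{uv}$, $y^{b}_{uv}w^2_{uv}$, $w^1_{uv}p^1_{uv}$, and $w^2_{uv}p^2_{uv}$ are covered by $W$. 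The same enumeration simultaneously confirms that each vertex of $Z$, $W$, and $Y^a$ has all of its neighbours outside $X$, so $X$ is also independent; this is exactly what yields the claimed bipartition $\langle X, V(G')\setminus X\rangle$ of \autoref{thm:w1-hard}.

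For the second step I would build an explicit matching $M$ saturating $X$, noting first that $|X| = |Z| + |W| + |Y^a| = |V(G)| + 2\cdot|E(G)| + |E(G)|$. Concretely, match each $z_u$ to its private pendant $p_u$ via $z_up_u$; match each $w^1_{uv}$ to $p^1_{uv}$ and each $w^2_{uv}$ to $p^2_{uv}$; and match each $y^{a}_{uv}$ to $y^{b}_{uv}$ via the edge $y^{a}_{uv}y^{b}_{uv}$. These edges are pairwise vertex-disjoint: the pendants $p_u, p^1_{uv}, p^2_{uv}$ and the vertices $y^{b}_{uv}$ used on the non-$X$ side are all distinct and each is used exactly once, while every vertex of $X$ appears in exactly one chosen edge. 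Hence $|M| = |V(G)| + 3\cdot|E(G)| = |X|$, and combined with the fact that $X$ is a vertex cover of size $|X|$, this shows $X$ is a minimum vertex cover.

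I do not expect any substantial obstacle in this lemma, as it reduces entirely to bookkeeping over the gadget edges. The only point requiring mild care is confirming that the three families of matching edges remain pairwise disjoint—in particular that matching $y^{a}_{uv}$ inside the edge-gadget does not collide with the edges used for $Z$ and $W$—which holds precisely because each edge-gadget contributes its own private vertices $y^{b}_{uv}, p^1_{uv}, p^2_{uv}$ and each vertex $z_u$ has its own private pendant $p_u$.
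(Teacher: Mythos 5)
Your proposal is correct and follows essentially the same strategy as the paper: verify that $X$ is a vertex cover and exhibit a matching of size $|X|$ (namely $\{z_up_u\} \cup \{w^1_{uv}p^1_{uv}, w^2_{uv}p^2_{uv}\} \cup$ one edge per gadget saturating $y^a_{uv}$). The only difference is that you match $y^{a}_{uv}$ to $y^{b}_{uv}$ while the paper matches $y^{a}_{uv}$ to $y^{c}_{uv}$; both choices yield a valid matching saturating $X$, so this is an immaterial variation.
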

\begin{proof}
By the construction of $G'$,  it follows that $X$ is a vertex cover of $G'$.
To prove that it is a minimum vertex cover, we show that there is a matching of size {$|X|$} in $G$.
Consider the following set of edges $M := \{z_up_u\ |\ u \in V(G)\} \cup \{w^1_{uv} p^1_{uv}, w^2_{uv} p^2_{uv}\ |\ uv \in E(G) \} \cup \{y^{a}_{uv}y^{c}_{uv}\ |\ uv \in E(G)\}$.
It is easy to verify that $M$ is a matching in $G'$ of size $|X|$.
Hence, any vertex cover has size at least $|X|$.
This implies that $X$ is a minimum vertex cover of $G'$.
\end{proof}

\begin{lemma}
\label{lemma:w-hard-forward}
If $(G,\ell)$ is a {\sc Yes}-instance of {\sc Edge Induced Forest}, then $(G', k, d)$ is a {\sc Yes}-instance of {\sc Contraction(\vc)}.
\end{lemma}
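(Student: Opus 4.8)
The plan is to avoid exhibiting an edge set to contract directly and instead invoke \autoref{lemma:solution-edge-pair}: I will produce a \emph{solution pair} $\langle X_s, Y_s \rangle$ relative to the minimum vertex cover $X = Z \cup W \cup Y^a$ of $G'$ guaranteed by \autoref{lemma:reduction-w-hard-min-vc}. Since $k < \rank(G')$ (indeed $\rank(G') > k = 4\ell$ by construction), and assuming $G'$ is connected (which holds for the instances produced in the proof of \autoref{thm:w1-hard}), it suffices to find $X_s \subseteq X$ and $Y_s \subseteq V(G') \setminus X$ satisfying the three conditions of \autoref{lemma:solution-edge-pair} with target value $|F| = k$: that $X_{\el} := (X \setminus X_s) \cup Y_s$ is a vertex cover, that $\rank(X_{\el}) \ge k$, and that $|Y_s| - |X_s| \le k - d$.

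First I would extract a clean forest from the certificate. Let $F$ be a solution of $(G, \ell)$, so $G[V(F)]$ is a forest and $|F| \ge \ell$; as $F \subseteq E(G[V(F)])$, the graph $(V(F), F)$ is itself a forest, so I may fix a subforest $F' \subseteq F$ with exactly $|F'| = \ell$ edges. I then set
\[ X_s := \{\, y^{a}_{uv} : uv \in F' \,\} \quad\text{and}\quad Y_s := \{\, y^{b}_{uv}, y^{c}_{uv} : uv \in F' \,\}, \]
so that $|X_s| = \ell$ and $|Y_s| = 2\ell$, which immediately gives the third condition $|Y_s| - |X_s| = \ell = k - d$. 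For the first condition I would verify that $X_{\el}$ is a vertex cover: the only vertices removed from $X$ are the $y^{a}_{uv}$ with $uv \in F'$, whose only incident edges $y^{a}_{uv}y^{b}_{uv}$ and $y^{a}_{uv}y^{c}_{uv}$ remain covered because $y^{b}_{uv}, y^{c}_{uv} \in Y_s \subseteq X_{\el}$; every other edge of $G'$ retains an endpoint in $Z \cup W \cup (Y^a \setminus X_s) \subseteq X_{\el}$.

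The remaining and most delicate step is the rank condition $\rank(X_{\el}) \ge k$, which is exactly where the acyclicity of $F'$ is used. I would first list the edges surviving in $G'[X_{\el}]$: for each $uv \in F'$ precisely the four edges $z_u y^{c}_{uv}$, $z_v y^{c}_{uv}$, $y^{b}_{uv}w^{1}_{uv}$, $y^{b}_{uv}w^{2}_{uv}$ are present (the two edges at the deleted $y^{a}_{uv}$ vanish, and all edges to $P$ are gone since $P \cap X_{\el} = \emptyset$), giving $4\ell$ edges in total. These split into two disjoint families: the stars $\{y^{b}_{uv}, w^{1}_{uv}, w^{2}_{uv}\}$, one tree on three vertices per $uv \in F'$; and the graph on $Z \cup \{y^{c}_{uv} : uv \in F'\}$ obtained from the forest $(V(F'), F')$ by subdividing each edge $uv$ through $y^{c}_{uv}$, which is again a forest. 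Hence $G'[X_{\el}]$ is itself a forest with exactly $4\ell$ edges, so $\rank(X_{\el}) = 4\ell = k$. With all three conditions established, \autoref{lemma:solution-edge-pair} yields an edge set $F^{\star}$ with $|F^{\star}| \le k$ and $\vc(G'/F^{\star}) \le \vc(G') - d$, so $(G', k, d)$ is a \yes-instance. I expect this rank count to be the main obstacle, since it requires tracking how the gadgets (private to each edge of $G$) interact only through the shared vertices of $Z$, and checking that the acyclicity of $F'$ is precisely what keeps the count pinned at $4\ell$ instead of dropping below $k$.
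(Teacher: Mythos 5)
Your proposal is correct and follows essentially the same route as the paper: both construct the solution pair $X_s = \{y^a_{uv} : uv \in F\}$, $Y_s = \{y^b_{uv}, y^c_{uv} : uv \in F\}$ (with $|F|$ trimmed to exactly $\ell$), verify the three conditions of \autoref{lemma:solution-edge-pair} with the same decomposition of $G'[X_{\el}]$ into the $y^b$--$w^1,w^2$ stars and the subdivision of the forest on $Z$, and conclude via that lemma. Your version is, if anything, slightly more careful than the paper's (explicit edge enumeration, and flagging the connectivity hypothesis of \autoref{lemma:solution-edge-pair}), but it is the same argument.
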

\begin{proof}
{Let} $F$ be a solution of $(G, \ell)$ i.e., $G[V(F)]$ is a forest and $|F| \ge \ell$.
We assume, without loss of generality, that $|F| = \ell$.
By \autoref{lemma:reduction-w-hard-min-vc},  the set $X := Z \cup W \cup Y^a$ is a minimum vertex cover of $G'$.
Note that $X$ is also an independent set in $G'$.
We denote the independent set $V(G') \setminus X$ by $Y$.

We construct a solution pair $\langle X_s,  Y_s \rangle$ using $F$.
Define $X_s := \{y^a_{uv} \in Y\ |\ uv \in F\}$, and $Y_s := Y^b_s \cup Y^c_s$ where $Y^b_s := \{y^b_{uv} \in Y\ |\ uv \in F\}$, and $Y^c_s := \{y^c_{uv} \in X\ |\ uv \in F\}$.
It is easy to verify that $X_{\el} = (X \setminus X_s) \cup Y_s = Z \cup W \cup Y^b_s \cup Y^c_s$ is a vertex cover of $G'$ and its size is $|X| - \ell + 2 \cdot \ell = |X| + k - d$.
By the construction of $G'$, one can obtain graph $G'[Z \cup Y^c_s]$ by subdividing every edge in $G[F]$.
Hence, $G'[Z \cup Y^c_s]$ is a forest with $2\cdot \ell$ edges and some isolated vertices.
Also, $G'[Y^b_s \cup W]$ is a forest with at least $2 \cdot \ell$ edges (two edges corresponding to each vertex in $Y^b_s$).
As the vertices in $Z \cup Y^c$ and $Y^b \cup W$ are not adjacent, $G'[Z \cup Y^c_s \cup Y^b_s \cup W]$ is a forest with $4 \ell$ edges.
This implies that $\rank(X_{\el}) = 4 \cdot \ell$.
Hence, $\langle X_s, Y_s \rangle$ satisfies the three conditions mentioned in \autoref{lemma:solution-edge-pair}.
As $k < \rank(G)$, \autoref{lemma:solution-edge-pair} implies that there is a subset $F'$ of $E(G')$ such that $\vc(G'/F') \le \vc(G) - d$.
Hence, $(G', k, d)$ is a \yes-instance of \textsc{Contraction(\vc)}.
\end{proof}

We first present a brief overview of the proof of the correctness in the backward direction.
By \autoref{lemma:solution-edge-pair}, there is a solution $F$ of $(G', k, d)$ if and only if there exists a solution pair $\langle X_{s}, Y_s \rangle$ such that
$(i)$ $X_{\el} = (X \setminus X_{s}) \cup Y_s$ is a vertex cover of $G'$,
$(ii)$ $\rank(X_{\el}) \ge |F| = k = 4 \cdot \ell$, and
$(iii)$ $|Y_s| - |X_s| \le k - d = \ell$.
Note that as $X$ and $Y = V(G) \setminus X$ are independent sets in $G'$, every edge in $E(X_{\el})$ is incident on exactly one vertex in $Y_s$.
We can interpret the second condition as a \emph{value function} and the third condition as a \emph{cost function}.
In other words, our objective is to find sets $X_s, Y_s$ such that their cost, i.e., $|Y_s| - |X_s|$, is at most $\ell$ whereas their value, i.e., the rank of edges in $E(X_{\el})$ that are incident on $Y_s$, is at least $4 \cdot \ell$.
\autoref{lemma:pendant-out-of-pair} implies that the vertices of the form $z_u, w^1_{uv}$, and $w^{2}_{uv}$ are in $X_{\el}$.
The first condition implies that only the five configurations shown in \autoref{fig:w-hardness-cont-vc} are possible (the top-left is \emph{not} a configuration).
Starting from top-middle and moving row-wise, the individual value and cost of these configurations are $(4, 1)$, $(3, 1)$, $(3, 1)$, $(6, 2)$, and $(1, 1)$, respectively.
To meet both the value and budget constraints, every vertex in $X_s, Y_s$ needs to be the of first type.
This implies there are $\ell$ vertices in $X_s$ that are of the form $y^{a}_{uv}$, and $Y_s$ contains the corresponding vertices of the form $y^{b}_{uv}$ and $y^{c}_{uv}$.
We argue that the edges corresponding to vertices in $Y^{c}_{uv}$ form a solution of $(G, \ell)$ and formalize these ideas in the next lemma.

\begin{lemma}
\label{lemma:w-hard-backward}
If $(G', k, d)$ is a {\sc Yes}-instance of {\sc Contraction(\vc)}, then $(G,\ell)$ is a {\sc Yes}-instance of {\sc Edge Induced Forest}.
\end{lemma}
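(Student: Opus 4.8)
The plan is to run the forward direction in reverse. Using \autoref{lemma:solution-edge-pair} I pass from a solution $F'$ of $(G',k,d)$ to a solution pair $\langle X_s, Y_s\rangle$ for the minimum vertex cover $X = Z \cup W \cup Y^a$ (\autoref{lemma:reduction-w-hard-min-vc}), and then decode $\langle X_s, Y_s\rangle$ into an edge set $F$ of $G$ with the required properties. Since $k<\rank(G')$ I may assume $|F'|=k=4\ell$. First I would invoke \autoref{lemma:pendant-out-of-pair} once for each vertex of $Z\cup W$ (each is adjacent to a pendant of $P$) to obtain a solution pair with $X_s\cap(Z\cup W)=\emptyset$, hence $X_s\subseteq Y^a$ and $Z\cup W\subseteq X_{\el}$. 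Because both $X$ and $V(G')\setminus X$ are independent, every edge of $G'[X_{\el}]$ runs between $X\setminus X_s$ and $Y_s$, so I can read condition $(ii)$ as a lower bound $\rank(X_{\el})\ge 4\ell$ on the \emph{value} and condition $(iii)$ as an upper bound $|Y_s|-|X_s|\le \ell$ on the \emph{cost}.

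Next I would localize the accounting to the private gadget of each edge $uv$. If $y^a_{uv}\in X_s$ then, for $X_{\el}$ to be a vertex cover, both $y^b_{uv}$ and $y^c_{uv}$ must lie in $Y_s$; if $y^a_{uv}\notin X_s$, the inclusions of $y^b_{uv},y^c_{uv}$ and of the private pendants remain optional. Enumerating the admissible local choices yields precisely the five configurations depicted in \autoref{fig:w-hardness-cont-vc}, whose (value, cost) pairs are $(4,1)$, $(3,1)$, $(3,1)$, $(6,2)$, and $(1,1)$, where the value is the rank added by the inserted vertices and an inserted $y^c_{uv}$ contributes $2$ exactly when it merges two distinct components of the current subgraph on $Z$. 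The key inequality is that every gadget satisfies $\text{value}\le 4\cdot\text{cost}$, with equality only for the configuration $y^a_{uv}\in X_s,\ y^b_{uv},y^c_{uv}\in Y_s$ at full value $4$. Summing over gadgets gives $4\ell\le \rank(X_{\el})\le 4\,(|Y_s|-|X_s|)\le 4\ell$, so all inequalities are tight: the cost equals $\ell$, and every touched gadget uses this single configuration at its full value. Consequently there is a set $F\subseteq E(G)$ with $|F|=\ell$ such that $X_s=\{y^a_{uv}\mid uv\in F\}$ and $Y_s=\{y^b_{uv},y^c_{uv}\mid uv\in F\}$.

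It remains to prove that $(G,\ell)$ is a \yes-instance, i.e.\ that $G[V(F)]$ is a forest; this is the step I expect to be the main obstacle. Tightness forces each $y^c_{uv}$ with $uv\in F$ to attain value $2$, meaning it joins $z_u$ and $z_v$ across two distinct components at the moment of insertion. Since $G'[Z\cup\{y^c_e\mid e\in F\}]$ is exactly the subdivision of $(V(F),F)$ and all of $Z$ lies in $X_{\el}$, attaining the total value $2\ell$ on the $y^c$-part is equivalent to $(V(F),F)$ being acyclic, which already shows that the selected edges form a forest. The delicate point is to upgrade this to the \emph{induced} statement: I would argue that an induced cycle on $V(F)$ using a non-selected edge $e=xy$ cannot be hidden, because $z_x,z_y\in X_{\el}$ for all such vertices and every edge of $G$ carries its own $y^c$-gadget, so that the value analysis on the $Z$-part must already account for such a chord and would force $\rank(X_{\el})$ strictly below $4\ell$, contradicting tightness. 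Making this contradiction precise—cleanly tying the rank-maximality of the $Z$-part to the induced subgraph $G[V(F)]$ rather than merely to the chosen edge set $F$—is the crux; once it is in place, $F$ is a solution of $(G,\ell)$ and the remaining bookkeeping mirrors \autoref{lemma:w-hard-forward}.
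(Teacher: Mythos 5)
Your bookkeeping is, in substance, exactly the paper's proof: what you call the five configurations with values and costs $(4,1),(3,1),(3,1),(6,2),(1,1)$ is packaged in the paper as the partition of $Y_s$ into $Y[1,1,1], Y[0,1,0], Y[0,0,1], Y[0,1,1], Y[0,0,0]$, and your tightness argument is its two inequalities; both arguments correctly conclude that only the first configuration survives, that $|X_s|=\ell$, and that the selected edge set $F$ is acyclic \emph{as an edge set}. But the step you flag as the crux is not a presentational delicacy that can be tightened --- it is a genuine hole, and the repair you sketch cannot work. If $xy \in E(G)$ has both endpoints in $V(F)$ but $xy \notin F$, then $y^c_{xy} \notin Y_s$, hence $y^c_{xy} \notin X_{\el}$, so the two edges $z_x y^c_{xy}$ and $z_y y^c_{xy}$ simply do not occur in $G'[X_{\el}]$: a chord of $G[V(F)]$ is invisible to the rank analysis, and nothing forces $\rank(X_{\el})$ below $4\ell$. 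Concretely, let $G$ be a triangle on $\{a,b,c\}$ and $\ell = 2$, so $k=8$, $d=6$. Take $F = \{ab, bc\}$, $X_s = \{y^a_{ab}, y^a_{bc}\}$, $Y_s = \{y^b_{ab}, y^c_{ab}, y^b_{bc}, y^c_{bc}\}$. Then $X_{\el}$ is a vertex cover of $G'$, the graph $G'[X_{\el}]$ is a forest with exactly $8 = 4\ell$ edges (the path $z_a,\, y^c_{ab},\, z_b,\, y^c_{bc},\, z_c$ plus the two stars at $y^b_{ab}$ and $y^b_{bc}$), and $|Y_s|-|X_s| = 2 = k-d$; by \autoref{lemma:solution-edge-pair}, $(G',8,6)$ is therefore a \yes-instance of {\sc Contraction(\vc)}. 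Yet every set of at least two edges of the triangle spans all three vertices and induces the triangle, so $(G,\ell)$ is a \no-instance of {\sc Edge Induced Forest}. The implication you are trying to prove fails on this instance.

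What you should take away is that you have located an error in the paper rather than a gap in your own write-up. The paper's proof of \autoref{lemma:w-hard-backward} closes with the assertion that $G'[Z \cap N[Y \cap Y^c]]$ ``can be obtained by subdividing every edge in $G[V(F)]$''; this is false. The graph induced by $Z$ together with the \emph{selected} $y^c$-vertices is the subdivision of the graph $(V(F),F)$, because the subdivision vertices $y^c_{xy}$ of non-selected edges of $G[V(F)]$ lie outside $X_{\el}$. So the paper establishes precisely what you established --- that $F$ is acyclic as an edge set, which for the constructed instances amounts to $\rank(G) \ge \ell$, a polynomial-time checkable property --- and no more; the induced statement is never proved, and the triangle example above shows it is not provable. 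The fix cannot come from a sharper counting or a cleverer tightness argument on the same construction; it requires changing the reduction itself so that the gadgets of non-selected edges with both endpoints in $V(F)$ interact with the solution pair (and hence with the rank or the budget), which is exactly the interaction the current construction lacks.
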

\begin{proof}
Suppose that $F'$ is a solution of $(G', k, d)$, i.e., $\vc(G'/F') \le \vc(G') - d$ and $|F'| \le k$.
As $k < \rank(G)$, we can assume, without loss of generality, that $|F'| = k$.
\autoref{lemma:solution-edge-pair} implies that there exists a solution pair $\langle X_s, Y_s \rangle$ that satisfies the three conditions mentioned in its statement.
Recall that every vertex in $Z \cup W$ is adjacent to some pendant vertex in $G'$.
\autoref{lemma:pendant-out-of-pair} implies that there exists a solution pair $\langle X_s, Y_s \rangle$ with {the} additional property that $X_s \cap (Z \cup W) = \emptyset$.
As $X_s \subseteq X = Z \cup W \cup Y^a$, this implies that $X_s \subseteq Y^a$.

We argue that $|X_s| = \ell$.
We partition the vertices in $Y_s$ {into} the following five sets.
\begin{itemize}
\item $Y[1, 1, 1] := \{y^b_{uv}, y^c_{uv} \in Y_s\ |\ (y^a_{uv} \in X_s)\ \land\ (y^b_{uv} \in Y_s)\ \land\ (y^c_{uv} \in Y_s) \}$.
\item $Y[0, 1, 0] := \{y^b_{uv} \in Y_s\ |\ (y^a_{uv} \not\in X_s)\ \land\ (y^b_{uv} \in Y_s)\ \land\ (y^c_{uv} \not\in Y_s) \}$.
\item $Y[0, 0, 1] := \{y^c_{uv} \in Y_s\ |\ (y^a_{uv} \not\in X_s)\ \land\ (y^b_{uv} \not\in Y_s)\ \land\ (y^c_{uv} \in Y_s) \}$.
\item $Y[0, 1, 1] := \{y^b_{uv}, y^c_{uv} \in Y_s\ |\ (y^a_{uv} \not\in X_s)\ \land\ (y^b_{uv} \in Y_s)\ \land\ (y^c_{uv} \in Y_s) \}$.
\item $Y[0, 0, 0] := Y_s \cap P$.
\end{itemize}

We can define the  sets $Y[1, 0, 0]$, $Y[1, 0, 1]$, and $Y[1, 1, 0]$ in a similar way.
Note that $y^a_{uv} \in X_s$ implies that $y^b_{uv}, y^c_{uv} \in Y_s$.
Hence, we do not need to consider the sets $Y[1, 0, 0]$, $Y[1, 0, 1]$, and $Y[1, 1, 0]$.
This also implies $2 \cdot |X_s| = |Y[1, 1, 1]|$.
Hence, to argue that $|X_s| = \ell$,  it is sufficient to prove that $|Y[1, 1, 1]| = 2 \cdot \ell$.
As the solution pair $\langle X_s, Y_s \rangle$ satisfies the third condition, i.e., $|Y_s| - |X_s| \le k - d$, we have
$$|Y[1, 1, 1]| + |Y[0, 1, 0]| + |Y[0,0,1]| + |Y[0,1,1]| + |Y[0,0,0]| - |X_S| \le \ell.$$
Substituting $|Y[1, 1, 1,]| = 2 \cdot |X_s|$, and multiplying by two, we get the following relation.
\begin{equation}
\label{eq:size-upper-bound}
|Y[1, 1, 1]| + 2 \cdot |Y[0, 1, 0]| + 2 \cdot |Y[0,0,1]| + 2 \cdot |Y[0,1,1]| + 2 \cdot |Y[0,0,0]| \le 2 \cdot \ell.
\end{equation}
Define $X_{\el} := (X \setminus X_s) \cup Y_s$.
By the definition, $|E(X_{\el})| \ge \rank(X_{\el})$.
As the solution pair $\langle X_s, Y_s \rangle$ satisfies the second condition, i.e., $\rank((X \setminus X_s) \cup Y_s) \ge 4 \cdot \ell$, we have $|E(X_{\el})| \ge \rank(X_{\el}) \ge 4 \cdot \ell$.
As $X, Y$ both are independent sets in $G'$, every edge in $E(X_{\el})$ {has} one of its {endpoints} in $X \setminus X_s$ and {the other} one in $Y_s$.
It is easy to verify {(}see \autoref{fig:w-hardness-cont-vc}{)} that the number of edges incident on each vertex in $Y[1, 1, 1]$, $Y[0, 1, 0]$, $Y[0,0,1]$, $Y[0,1,1]$, $Y[0,0,0]$ {is} $2$, $3$, $3$, $3$, and $1$, respectively.
Substituting these values we get
$$2 \cdot |Y[1, 1, 1]| + 3 \cdot |Y[0, 1, 0]| + 3 \cdot |Y[0,0,1]| + 3 \cdot |Y[0,1,1]| + |Y[0,0,0]| \ge 4 \cdot \ell.$$
Dividing the inequality by two yields the following relation.
\begin{equation}
\label{eq:rank-lower-bound}
|Y[1, 1, 1]| + \frac{3}{2} \cdot |Y[0, 1, 0]| + \frac{3}{2} \cdot |Y[0,0,1]| + \frac{3}{2} \cdot |Y[0,1,1]| + \frac{1}{2} \cdot |Y[0,0,0]| \ge 2 \cdot \ell.
\end{equation}
Equation~(\ref{eq:size-upper-bound}) and (\ref{eq:rank-lower-bound}) imply that the only feasible case is when $|Y[1, 1, 1]| = 2 \cdot \ell$ and all other sets {have} cardinality zero.
This implies $|X_s| = \ell$.
Also, $|E(X_{\el})| = 2 \cdot |Y[1, 1, 1]| = 4 \cdot \ell$.
As $\rank(X_{\el}) \ge 4 \cdot \ell$, {it follows} that $G'[X_{\el}]$ is a forest with $4 \cdot \ell$ edges.
It is easy to verify that the graph induced on $Z \cap N[Y_s \cap Y^c]$ is a forest with $2 \cdot \ell$ edges.

We now construct a solution of $(G, \ell)$ using the set $X_s$,  more precisely $Y_s \cap Y^c$.
Define $F := \{uv \in E(G)\ |\ y^c_{uv} \in Y_s\}$.
As $|X_s| = \ell$, we have $|F| = \ell$.
It remains to argue that $G[V(F)]$ is a forest.
By the construction of $G'$, one can obtain $G'[Z \cap N[Y \cap Y^c]]$ by subdividing every edge in $G[V(F)]$.
As the former graph is a forest, we can conclude that $G[V(F)]$ is also a forest.
Hence, $F$ is a solution of $(G, \ell)$.
This implies that if $(G', k, d)$ is a {\yes-instance} of \textsc{Contraction(\vc)} then $(G, \ell)$ is a {\yes-instance} of \textsc{Edge Induced Forest}.
\end{proof}

We are ready to present the proof of \autoref{thm:w1-hard}.

\begin{proofThm2}
Consider the reduction presented in this subsection.
\autoref{lemma:w-hard-forward} and \autoref{lemma:w-hard-backward} imply that the reduction is safe.
By the description of the reduction, it outputs the constructed instance in polynomial time.
The \W[1]-\Hardness\ of \textsc{Contraction(\vc)} follows from \autoref{thm:w-hardness-edge-induced-forest}.
As $k = 4 \cdot \ell$ and $d = 3 \cdot \ell$, if \textsc{Contraction(\vc)} admits an algorithm with running time $f(k + d) \cdot n^{o(k + d)}$, then \textsc{Edge Induced Forest} also admits an algorithm with running time $f(\ell) \cdot n^{o(\ell)}$, which contradicts \autoref{thm:w-hardness-edge-induced-forest}.
\end{proofThm2}

\section{Algorithm for \textsc{Contraction(\vc)}}
\label{sec:algorithm}

In this section we prove \autoref{thm:algorithm}.
We present an algorithm that takes as input an instance $(G, k, d)$ of \textsc{Contraction(\vc)}, and returns either \yes\ or \no,
whose high-level description is as follows (cf. \autoref{fig:diagram-algo}):

\begin{itemize}
\item If $k = \rank(G)$, then it uses the algorithm mentioned in \autoref{lemma:k-equal-rank-algo}.
\item If $k < \rank(G)$ and $2d \le k$, then it uses the algorithm mentioned in \autoref{lemma:k-larger-2k-algo}.
\item If $k < \rank(G)$ and $d \le k < 2d$, then it uses the algorithm mentioned in \autoref{lemma:k-lesser-2k-algo}.
\end{itemize}

Note that, since we can safely assume that $d \le k \leq \rank(G)$, the above three cases are exhaustive.
We handle each of {these cases} in the next three subsections (note that the first two are much easier than the last one). 
\autoref{sub-sec:entire-algo} contains the correctness proof of \autoref{thm:algorithm}.
Throughout this section, we assume that $G$ is a connected graph.
We justify this assumption in \autoref{sub-sec:entire-algo}.

\subsection{First case: $k = \rank(G)$}
\label{sec:algo-first-case}

It is sufficient to prove the following lemma to handle this case.

\begin{lemma}
\label{lemma:k-equal-rank-algo}
There exists an algorithm that, given as input an instance $(G, k, d)$ of {\sc Contraction(\vc)} with a guarantee that $k = \emph{\rank}(G)$, runs in time $1.2738^{d}\cdot n^{\calO(1)}$, and correctly determines whether it is a {\sc Yes}-instance.
\end{lemma}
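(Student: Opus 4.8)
The plan is to observe that this case has essentially been solved already by \autoref{obs:bound-k-rank}, which tells us that under the guarantee $k = \rank(G)$, the instance $(G, k, d)$ is a \yes-instance if and only if $d \le \vc(G)$. Thus the entire task reduces to deciding whether $\vc(G) \ge d$, and the running time budget $1.2738^{d}\cdot n^{\calO(1)}$ is exactly what \autoref{prop:find-vc} affords us for a vertex-cover computation capped at a threshold on the order of $d$.

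Concretely, I would first dispose of the trivial case $d = 0$, in which $d \le \vc(G)$ holds automatically (since $\vc(G) \ge 0$) and the algorithm reports \yes. For $d \ge 1$, I would invoke the algorithm of \autoref{prop:find-vc} with threshold $\ell = d - 1$ to decide whether $\vc(G) \le d - 1$. Since $\vc(G)$ and $d$ are integers, we have the equivalence
\[
d \le \vc(G) \iff \vc(G) \ge d \iff \neg\bigl(\vc(G) \le d - 1\bigr).
\]
Hence, if the subroutine reports $\vc(G) \le d - 1$, then $\vc(G) < d$ and, by \autoref{obs:bound-k-rank}, $(G, k, d)$ is a \no-instance; otherwise $\vc(G) \ge d$ and $(G, k, d)$ is a \yes-instance. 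Correctness then follows immediately by combining this case analysis with \autoref{obs:bound-k-rank}.

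For the running time, the call to \autoref{prop:find-vc} with $\ell = d - 1$ costs $1.2738^{\,d-1}\cdot n^{\calO(1)} \le 1.2738^{d}\cdot n^{\calO(1)}$, and all remaining steps (the case split and the final comparison) are polynomial, so the total time is $1.2738^{d}\cdot n^{\calO(1)}$ as required. There is no genuine obstacle in this case: the only point that needs care is the off-by-one in the threshold, namely that testing $\vc(G) \ge d$ via the ``at most $\ell$'' decision procedure of \autoref{prop:find-vc} must be done with $\ell = d - 1$ (after handling $d = 0$ separately), so that the answer is negated correctly and the exponent in the running time stays $d$ rather than, say, $\rank(G)$ or $n$.
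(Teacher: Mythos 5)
Your proposal is correct and follows essentially the same route as the paper: invoke the vertex-cover algorithm of \autoref{prop:find-vc} with threshold $d-1$, answer \no\ if $\vc(G) \le d-1$ and \yes\ otherwise, with correctness given by \autoref{obs:bound-k-rank}. Your write-up is in fact slightly more careful than the paper's, since you handle the degenerate case $d=0$ explicitly and cite \autoref{obs:bound-k-rank} for correctness (the paper's proof mistakenly points to \autoref{obs:bc-rank-equiv} there, an apparent typo).
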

\begin{proof}
Consider an algorithm that for input $(G, k, d)$, runs the algorithm mentioned in \autoref{prop:find-vc} as a subroutine with $G$ and $d - 1$ as its input.
If the subroutine concludes that $\vc(G) \le d - 1$, then the algorithm returns \no, otherwise it returns \yes.
This concludes the description of the algorithm. Its correctness and running time follow from \autoref{obs:bc-rank-equiv} and \autoref{prop:find-vc}, respectively.
\end{proof}

\subsection{Second case: $k < \rank(G)$ and $2d \le k$}
\label{sec:algo-second-case}

As in the previous subsection, it is sufficient to prove the following lemma.

\begin{lemma}
\label{lemma:k-larger-2k-algo}
There exists an algorithm that, given as input an instance $(G, k, d)$ of {\sc Contraction(\vc)} with guarantees that $k < \emph{\rank}(G)$ and $2d \le k$, runs in time $1.2738^{d} \cdot n^{\calO(1)}$, and correctly determines whether it is a {\sc Yes}-instance.
\end{lemma}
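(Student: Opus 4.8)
The plan is to reduce this case to a vertex-cover computation, exactly as in the first case but with the threshold shifted by one. By \autoref{obs:bound-d-k}, which applies precisely under the hypotheses that $G$ is connected, $k < \rank(G)$, and $2d \le k$, we know that $(G, k, d)$ is a \yes-instance if and only if $d < \vc(G)$, equivalently $\vc(G) \ge d + 1$, equivalently $\vc(G) \not\le d$. So the algorithm merely needs to decide whether $\vc(G) \le d$.

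First I would invoke the \FPT\ algorithm of \autoref{prop:find-vc} with the graph $G$ and the integer $d$ as input. This runs in time $1.2738^{d} \cdot n^{\calO(1)}$ and correctly determines whether $\vc(G) \le d$. The algorithm returns \no\ if the subroutine reports $\vc(G) \le d$ (since then $d \ge \vc(G)$, so by \autoref{obs:bound-d-k} the instance is a \no-instance), and returns \yes\ otherwise (since then $\vc(G) > d$, i.e. $d < \vc(G)$, so the instance is a \yes-instance).

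Correctness is immediate from the equivalence supplied by \autoref{obs:bound-d-k} together with the correctness of \autoref{prop:find-vc}; the running time is dominated by the single call to that subroutine, giving $1.2738^{d} \cdot n^{\calO(1)}$ overall.

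There is no real obstacle here: the entire content of the lemma is packaged into \autoref{obs:bound-d-k}, whose proof already establishes the characterization by iteratively contracting short paths between vertices of a minimum vertex cover to decrease $\vc$ by one while respecting the budget $k \ge 2d$. The only point requiring a moment of care is that \autoref{obs:bound-d-k} assumes $G$ is connected, which is fine because we assume connectivity throughout this section (as justified in \autoref{sub-sec:entire-algo}).

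\begin{proof}
By \autoref{obs:bound-d-k}, since $G$ is connected, $k < \rank(G)$, and $2d \le k$, the instance $(G, k, d)$ is a \yes-instance if and only if $d < \vc(G)$, i.e., if and only if $\vc(G) \not\le d$. The algorithm runs the subroutine of \autoref{prop:find-vc} on input $G$ and $d$. If the subroutine reports that $\vc(G) \le d$, the algorithm returns \no; otherwise it returns \yes. Correctness follows from the characterization in \autoref{obs:bound-d-k} and the correctness of \autoref{prop:find-vc}, and the running time is $1.2738^{d} \cdot n^{\calO(1)}$, as claimed.
\end{proof}
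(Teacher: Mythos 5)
Your proof is correct and follows essentially the same route as the paper: decide whether $\vc(G) \le d$ via \autoref{prop:find-vc} and answer accordingly, with correctness resting on the characterization that, under the given guarantees, $(G,k,d)$ is a \yes-instance if and only if $d < \vc(G)$. The only (cosmetic) difference is that you invoke \autoref{obs:bound-d-k} directly, whereas the paper's proof of the lemma re-derives that same argument inline; your version is, if anything, tidier.
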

\begin{proof}
Consider an algorithm that for input $(G, k, d)$, runs the algorithm mentioned in \autoref{prop:find-vc} as a subroutine with $G$ and $d$ as its input.
It considers the following three cases depending on the value of $\vc(G)$.
$Case~(i)\ (\vc(G) < d)$:
It concludes that $(G, k, d)$ is a \no-instance.
$Case~(ii)\ (\vc(G) = d)$:
It concludes that $(G, k, d)$ is a \no-instance.
$Case~(iii)\ (\vc(G) > d)$:
It concludes that $(G, k, d)$ is a \yes-instance.
This completes the description of the algorithm.

We now argue the correctness of the algorithm.
As the vertex cover number of any graph is a non-negative integer, if $\vc(G) < d$ then the input is a \no-instance.
Note that to eliminate all edges in a {connected} graph by contracting edges, one needs to contract all the edges in a spanning tree.
Hence,  if $\vc(G) = d$ then the only feasible solution of $(G, k, d)$ is a spanning tree of $G$.
However, as $k < \rank(G)$,  the algorithm correctly concludes that it is a \no-instance.
For the third case, consider a subroutine that finds two vertices in a minimum vertex cover that are at distance at most two and contracts a shortest path between these two vertices.
The existence of such vertices is guaranteed by the fact that $G$ is a connected graph.
Note that this path is of length one or two.
In each iteration of the process, $k$ drops by at most two and $\vc(G)$ drops by one.
As $2d \le k$, if $\vc(G) > d$ then the subroutine can repeat the process $d$ times.
Hence, the algorithm correctly concludes that the input instance is a \yes-instance in the third step.

The running time of the algorithm follows from its description and  \autoref{prop:find-vc}.
\end{proof}

\subsection{Third case: $k < \rank(G)$ and $d \le k < 2d$}
\label{sec:algo-third-case}

The objective of this subsection is to prove the following lemma.

\begin{lemma}
\label{lemma:k-lesser-2k-algo}
There exists an algorithm that, given as input an instance $(G, k, d)$ of {\sc Contraction(\vc)} with guarantees that $k < \emph{\rank}(G)$ and $d \le k < 2d$, runs in time $2^{\calO(d)} \cdot n^{k - d + \calO(1)}$, and correctly determines whether it is a {\sc Yes}-instance.
\end{lemma}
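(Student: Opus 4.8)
The plan is to prove this case by assembling a pipeline that repeatedly reformulates the problem until it becomes a partition problem solvable by dynamic programming, using \autoref{lemma:solution-edge-pair} as the backbone. First I would record what that lemma buys us: after fixing a minimum vertex cover $X$ of $G$, a solution exists if and only if there is a vertex cover $X_{\el} = (X \setminus X_s) \cup Y_s$ with $|X_{\el}| \le |X| + (k-d)$ and $\rank(X_{\el}) \ge k$. So the whole task becomes: swap some vertices out of $X$ and bring in at most $k-d$ more so that the subgraph induced on the resulting cover carries a spanning forest with at least $k$ edges. Since $d \le k < 2d$, the surplus $k-d$ is strictly below $d$, and this is precisely the regime where an $n^{k-d}$ factor is forced, by \autoref{thm:w-hardness-edge-induced-forest}.

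The first milestone is to obtain, in time $2^{\calO(d)} \cdot n^{\calO(1)}$ (note $2^{\calO(k)} = 2^{\calO(d)}$ here), a minimum vertex cover of controlled internal structure together with a safe early exit. I would aim for a dichotomy: either certify a \yes-instance directly, or produce a minimum vertex cover $X$ with $\rank(G[X]) < d$. The early-exit direction exploits that a vertex cover of large internal rank already yields, via \autoref{obs:bc-rank-equiv} and the fact that each contraction drops the rank by exactly one, an enlarged cover meeting conditions $(ii)$--$(iii)$ of \autoref{lemma:solution-edge-pair}; to make this effective I would compute $\oct(G)$ via \autoref{prop:find-oct} and, in the small-OCT case, extract a minimum vertex cover via \autoref{prop:find-vc-para-oct}. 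Given such an $X$ with $\rank(G[X]) < d$, only $\calO(d)$ vertices of $X$ are incident to an $X$-internal edge, so I can branch over $2^{\calO(d)}$ \emph{annotated} choices fixing the fate of those vertices. Throughout, \autoref{lemma:pendant-out-of-pair} lets me keep pendant-adjacent vertices inside $X_{\el}$, which I would use to reduce to a bipartite instance with parts $\langle X, Y \rangle$ and $X$ a minimum vertex cover, so that every edge inside $X_{\el}$ has exactly one endpoint in $Y_s$.

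On this bipartite instance I would recast the rank requirement as a cut constraint: by \autoref{obs:bc-rank-equiv}, controlling $\rank(X_{\el})$ is the same as controlling the monochromatic ranks of a bipartition, turning the question into a constrained MaxCut-type selection of $Y_s$ (and the induced $X_s$) that maximizes the rank of the chosen bipartite subgraph subject to $|Y_s| - |X_s| \le k-d$. Here I split on whether $k = d$ or $k > d$. When $k > d$, I would guess, in $n^{k-d}$ ways, the at most $k-d$ ``surplus'' vertices of $Y_s$ that are not paired one-to-one with $X_s$; after fixing them, the residual instance behaves like a $k=d$ instance. This guessing is the single source of the $n^{k-d}$ factor, and it is unavoidable.

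The hard part will be the $k=d$ case. Here condition $(iii)$ forces $|Y_s| = |X_s|$, hence $|X_{\el}| = |X|$, while minimality of $X$ forces $X_{\el}$ to be itself a minimum vertex cover; combined with $\rank(X_{\el}) \ge k$, the selection becomes extremely rigid, and together with a matching saturating $X$ it essentially pins down which edges must appear. I would orient this saturating matching to pass to a directed auxiliary graph and prove equivalence to a directed constrained MaxCut. The crux is that the rigidity of the $k=d$ constraints forces certain directed cycles, which I can \emph{merge} into single super-vertices to obtain a directed \emph{acyclic} graph; a topological order of this DAG then gives a linear scan along which a dynamic program can track the partially built forest with a state of size $2^{\calO(k)} = 2^{\calO(d)}$. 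Establishing that the merging loses no valid solution and that acyclicity is genuinely forced by $k=d$ is the delicate step; once it holds, the dynamic program runs in $2^{\calO(d)} \cdot n^{\calO(1)}$, and multiplying by the earlier $2^{\calO(d)}$ branching and the $n^{k-d}$ guessing yields the claimed bound $2^{\calO(d)} \cdot n^{k-d+\calO(1)}$.
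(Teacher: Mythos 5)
Your proposal is correct and follows essentially the same route as the paper's proof: the same early-exit dichotomy (via \autoref{prop:find-oct} and \autoref{prop:find-vc-para-oct}) yielding either a \yes-certificate or a minimum vertex cover of rank less than $d$, the same $2^{\calO(d)}$ annotation branching, the reformulation as a constrained MaxCut, the $n^{k-d}$ guessing of surplus vertices of $Y_s$ to reduce to the $k=d$ case, and the same matching-orientation, cycle-merging, and topological-order dynamic program for $k=d$. The only cosmetic deviation is that the paper achieves the bipartite normal form via an explicit reduction rule that contracts/deletes the edges inside $X$ (\autoref{rr:anno-contr-edges-X}) rather than via \autoref{lemma:pendant-out-of-pair}, which does not change the argument.
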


\noindent We refer readers to \autoref{sec:intro}, in particular \autoref{fig:diagram-algo}, for an overview of the algorithm presented in this subsection.

\subsubsection{Simplifying an instance of \textsc{Contraction(\vc)}}
\label{sub-sub-sec:simplify-instance}

We prove the following lemma, which will allow us to assume henceforth that we are equipped with a minimum vertex cover of the input graph with small rank.

\begin{lemma}
\label{lemma:large-oct-large-rank-vc}
There exists an algorithm that, given as input an instance $(G, k, d)$ of {\sc Contraction(\vc)} with guarantees that $k < \emph{\rank}(G)$ and $d \le k < 2d$, runs in time $2.6181^{k} \cdot n^{\calO(1)}$, and either correctly concludes that $(G, k,d)$ is a {\sc Yes}-instance, or computes a minimum vertex cover $X$ of $G$ such that $\emph{\rank(X)} < d$.
\end{lemma}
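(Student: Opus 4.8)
The plan is to design an algorithm that, for the given regime $k < \rank(G)$ and $d \le k < 2d$, either certifies a \yes-instance cheaply or else produces a minimum vertex cover of small rank. The key structural observation is the following dichotomy based on $\oct(G)$. First I would run the algorithm of \autoref{prop:find-oct} with parameter $\ell = k$ to decide whether $\oct(G) \le k$ in time $2.6181^{k} \cdot n^{\calO(1)}$.

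\textbf{Case 1: $\oct(G) > k$.} In this case I claim $(G,k,d)$ is a \yes-instance, so the algorithm can simply return \yes. The intuition is that a large odd cycle transversal forces a large rank in every minimum vertex cover, which in turn provides enough room to contract $k$ edges and save $d$ vertices. More precisely, I would invoke \autoref{obs:oct-bound-bc} and \autoref{obs:bc-rank-equiv}: since $\oct(G) \le \bc(G)$, having $\oct(G) > k$ means $\bc(G) > k$, so \emph{every} partition $(V_L, V_R)$ of $V(G)$ satisfies $\rank(V_L) + \rank(V_R) > k$. Applying this to the partition given by a minimum vertex cover $X$ and its complement $Y = V(G) \setminus X$ (where $\rank(Y) = 0$ since $Y$ is independent) yields $\rank(X) > k \ge d$. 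With a minimum vertex cover of rank greater than $k$ in hand, I would then exhibit a solution pair $\langle X_s, Y_s \rangle$ directly: taking $X_s = Y_s = \emptyset$ gives $X_{\el} = X$ with $\rank(X_{\el}) > k \ge |F|$ and $|Y_s| - |X_s| = 0 \le k - d = |F| - d$, so the three conditions of \autoref{lemma:solution-edge-pair} are met, certifying a \yes-instance.

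\textbf{Case 2: $\oct(G) \le k$.} Now I would compute a minimum vertex cover $X$ of $G$ using \autoref{prop:find-vc-para-oct}, which runs in time $1.6181^{\oct(G)} \cdot n^{\calO(1)} \le 1.6181^{k} \cdot n^{\calO(1)}$, well within the claimed budget. It remains to argue that this $X$ (or some minimum vertex cover) has $\rank(X) < d$. If $\rank(X) \ge d$ then, exactly as in Case 1, the empty solution pair already satisfies the conditions of \autoref{lemma:solution-edge-pair}, so $(G,k,d)$ is a \yes-instance and the algorithm can safely return \yes. Hence the algorithm returns \yes\ whenever $\rank(X) \ge d$, and otherwise outputs the computed $X$, which now satisfies $\rank(X) < d$ as required.

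The overall running time is dominated by the $\oct$ test, giving $2.6181^{k} \cdot n^{\calO(1)}$, matching the statement. I expect the main subtlety to lie not in the running time but in the logic of the two escape hatches: the algorithm must return \yes\ in \emph{every} situation where the computed (or any) minimum vertex cover has large rank, and must only commit to outputting a vertex cover once it is guaranteed to have rank below $d$. The cleanest way to organize this is to note that $\rank(X) \ge d$ for a minimum vertex cover $X$ is, via \autoref{lemma:solution-edge-pair} with an empty solution pair, a sufficient condition for a \yes-instance; thus the contrapositive guarantees that whenever the algorithm proceeds to output $X$, it genuinely has $\rank(X) < d$. I would therefore structure the proof so that both the large-$\oct$ branch and the large-rank sub-branch funnel into the same \yes-certificate argument based on \autoref{lemma:solution-edge-pair}.
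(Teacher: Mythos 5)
Your proposal is correct and follows essentially the same route as the paper's proof: run the algorithm of \autoref{prop:find-oct} with parameter $k$, conclude \yes\ when $\oct(G) > k$ via \autoref{obs:oct-bound-bc} and \autoref{obs:bc-rank-equiv} (which force $\rank(X) > k$ for any minimum vertex cover $X$), otherwise compute a minimum vertex cover with \autoref{prop:find-vc-para-oct} and conclude \yes\ if its rank is at least $d$, returning it otherwise. The only cosmetic difference is that you certify the \yes-instances by invoking the backward direction of \autoref{lemma:solution-edge-pair} with an empty solution pair, whereas the paper argues directly that contracting $d$ edges of a spanning forest inside the vertex cover reduces $\vc(G)$ by $d$ --- these are the same argument.
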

\begin{proof}
Consider an algorithm that, given $(G, k, d)$ as input, runs the algorithm mentioned in \autoref{prop:find-oct} as a subroutine with $G$ and $k$ as its input.
If $\oct(G) > k$, then it concludes that $(G, k, d)$ is a \yes-instance.
If $\oct(G) \le k$, then it uses the algorithm mentioned in \autoref{prop:find-vc-para-oct} to compute a minimum vertex cover $X$ of $G$.
If $\rank(X) \ge d$, then it concludes that $(G, k, d)$ is a \yes-instance.
Otherwise, it returns $X$ as the desired vertex cover.
This completes the description of the algorithm.

We argue the correctness of the algorithm.
Consider the case where $\oct(G) > k$.
Recall that we denote by $\bc(G)$ the minimum number of edges in $G$ that need to be contracted to make it a bipartite graph.
By \autoref{obs:oct-bound-bc}, $\oct(G) > k$ implies that $\bc(G) > k$.
Hence, by \autoref{obs:bc-rank-equiv}, for any partition $(V_L, V_R)$ of $V(G)$, we have $\rank(V_L) + \rank(V_R) > k$.
Consider a partition $(V_L, V_R)$ of $V(G)$ such that $V_L$ is a minimum vertex cover of $G$.
As $V_R$ is an independent set, $\rank(V_R) = 0$.
This implies $\rank(V_L) > k$.
Hence, we can reduce $\vc(G)$ by $d$ by contracting $d$ (which is at most $k$) edges whose both endpoints are in $V_L$.
Consider the case when the algorithm finds a minimum vertex cover $X$ of $G$ such that $\rank(X) \ge d$.
Once again, we can reduce $\vc(G)$ by $d$ by contracting $d$ edges of a spanning forest of $G[X]$.
Hence, in both these cases, the algorithm correctly concludes that the input is a \yes-instance.
Otherwise, the algorithm returns a minimum vertex cover $X$ of $G$ such that $\rank(X) < d$.

The running time of the algorithm follows from its description and \autoref{prop:find-oct}.
\end{proof}

\subsubsection{Reducing to \textsc{Annotated Contraction(\vc)}}
\label{sub-sec:annotated-contr}

An input of the \textsc{Annotated Contraction(\vc)} problem consists of an instance $(G, k,d)$ of \textsc{Contraction(\vc)}, a minimum vertex cover $X$ of $G$, and two disjoint subsets $X_L, X_R$ of $X$.
We are interested in a vertex cover $X_{\el}$ of $G$ whose size is not much larger than that of $X$ but has rank at least $k$.
To construct $X_{\el}$ from $X$, we need to find a \emph{solution pair} $\langle X_s, Y_s\rangle$ such that vertices in $X_s$ are `moved out' of $X$, and vertices in $Y_s$ are `moved in'.
Given $\langle X_L, X_R \rangle$, we add a restriction on a possible solution pair $\langle X_s, Y_s\rangle$.
Namely, we are interested in $X_s$ that contains $X_R$ and is disjoint from $X_L$.
The following is the formal definition of the problem.

\defproblem{\textsc{Annotated Contraction(\vc)}}{An instance $(G, k, d)$ of \textsc{Contraction(\vc)},  a minimum vertex cover $X$ of $G$, and a tuple $\langle X_L, X_R \rangle$ such that $X_L, X_R $ are disjoint subsets of $X$.}{{Do} there exist sets $X_s \subseteq X$ and $Y_s \subseteq Y\ (= V(G) \setminus X)$ such that
$(i)$ $(X \setminus X_s) \cup Y_s$ is a vertex cover of $G$,
$(ii)$ $\rank((X \setminus X_s) \cup Y_s) \ge k$,
$(iii)$ $ |Y_s| - |X_s|  \le k - d$, and
$(iv)$ $X_L \cap  X_s = \emptyset$ and $X_R \subseteq X_s$?}


The first three conditions correspond to the three conditions mentioned in \autoref{lemma:solution-edge-pair}.
Given an instance $(G, k, d)$ of \textsc{Contraction(\vc)}, using \autoref{lemma:solution-edge-pair} we construct `\FPT-many' instances of \textsc{Annotated Contraction(\vc)} such that the original instance is a \yes-instance if and only if at least one of the newly created instances is a \yes-instance.
We remark that there is a small technical caveat while using \autoref{lemma:solution-edge-pair}.
Consider an instance $(G, k, d)$ of  \textsc{Contraction(\vc)}, and let $F$ be {a} solution.
\autoref{lemma:solution-edge-pair} implies that there are subsets $X_s \subseteq X$ and $Y_s \subseteq V(G) \setminus X$ such that
$(i)$ $(X \setminus X_s) \cup Y_s$ is a vertex cover of $G$,
$(ii)$ $\rank((X \setminus X_s) \cup Y_s) \ge |F|$, and
$(iii)$ $ |Y_s| - |X_s|  \le |F| - d$.
However, the statement of \textsc{Annotated Contraction(\vc)} specifies the integer $k$ and not the actual size of a minimum solution $F$.
For example, if there exists a solution $F$ of size, say, $k/2$, then \autoref{lemma:solution-edge-pair} ensures that $\rank((X \setminus X_s) \cup Y_s) \ge k/2$, however $\rank((X \setminus X_s) \cup Y_s)$ can be smaller than $k$.
To {overcome} this, we assume that $(G, k - 1, d)$ is a \no-instance of \textsc{Contraction(\vc)}.
This implies that if there is a subset $F$ of $E(G)$ of size \emph{at most} $k$ such that $\vc(G/F) \le \vc(G) - d$, then $F$ is of size \emph{exactly} $k$.
We summarize below all the assumptions on the input instance.
\begin{guarantee}
\label{guarantee:special-instance}
Consider an instance $(G, k, d)$ of {\sc Contraction(\vc)} that satisfies the following conditions.
\begin{itemize}
\item $G$ is a connected graph, $k < \emph{\rank}(G)$,  and $d \le k$.
\item A minimum vertex cover $X$ of $G$ is provided as an additional part of the input.
\item $\emph{\rank}(X) < d$.
\item $(G, k - 1, d)$ is a \no-instance of {\sc Contraction(\vc)}.
\end{itemize}
\end{guarantee}
Unless stated otherwise, we denote the independent set $V(G) \setminus X$ by $Y$.


Consider an instance $(G, k, d)$ of {\sc Contraction(\vc)} with \autoref{guarantee:special-instance}.
We construct $2^{\calO(d)}$ many instances of \textsc{Annotated Contraction(\vc)} such that $(G, k, d)$ is a \yes-instance if and only if at least one of these newly created instances is a \yes-instance.
Informally, let $F$ be the set of edges in a spanning forest of $G[X]$.
As $\rank(X) < d$, we have $|F| < d$.
We iterate over all `valid' partitions $\langle X_L, X_R \rangle$ of $V(F)$.
We construct an instance of \textsc{Annotated Contraction(\vc)} for each such {a} partition.
We formalize this intuition and prove its correctness in the following lemma.

\begin{lemma}
\label{lemma:contr-VC-to-annot-contr-vc}
Suppose that there is an algorithm that solves {\sc Annotated Contraction(\vc)} in time $f(n, k, d)$.
Then, there exists an algorithm that given as input an instance $(G, k, d)$ of {\sc Contraction(\vc)} with \autoref{guarantee:special-instance}, runs in time $ 3^d \cdot n^{\calO(1)} \cdot f(n, k, d)$, and correctly determines whether it is a \textsc{\sc Yes}-instance.
\end{lemma}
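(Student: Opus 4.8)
The plan is to use \autoref{lemma:solution-edge-pair} to reduce $(G,k,d)$ to a bounded number of \textsc{Annotated Contraction(\vc)} instances, one for each way an (unknown) solution pair can interact with the edges lying inside $G[X]$. The role of the last item of \autoref{guarantee:special-instance}, namely that $(G,k-1,d)$ is a \no-instance, is exactly to align the integer $k$ appearing in the three numeric conditions of \textsc{Annotated Contraction(\vc)} with the quantity $|F|$ in \autoref{lemma:solution-edge-pair}: if $(G,k,d)$ is a \yes-instance, then since no solution of size at most $k-1$ exists while some solution of size at most $k$ does, every minimum solution has size \emph{exactly} $k$.

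First I would fix a spanning forest $F_0$ of $G[X]$; by the guarantee $\rank(X)<d$ it has $m:=\rank(X)\le d-1$ edges and, having no isolated vertices, $|V(F_0)|\le 2m<2d$. The set $V(F_0)$ is precisely the set of vertices of $X$ that are non-isolated in $G[X]$, i.e.\ the only vertices of $X$ whose membership in a solution pair is constrained by edges inside $X$. I would then enumerate every partition $\langle X_L,X_R\rangle$ of $V(F_0)$ for which $X_R$ is an independent set of $G[V(F_0)]$, and for each output the instance $\big((G,k,d),X,\langle X_L,X_R\rangle\big)$ of \textsc{Annotated Contraction(\vc)}; the overall algorithm answers \yes\ iff the assumed $f(n,k,d)$-time algorithm answers \yes\ on at least one of them.

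For correctness, in the forward direction I would take a solution $F$ of size exactly $k$ (which exists by the guarantee) and apply the $(\Rightarrow)$ direction of \autoref{lemma:solution-edge-pair} with $|F|=k$, obtaining a solution pair $\langle X_s,Y_s\rangle$ with $\rank(X_{\el})\ge k$ and $|Y_s|-|X_s|\le k-d$. Setting $X_R:=V(F_0)\cap X_s$ and $X_L:=V(F_0)\setminus X_s$ gives a partition that the algorithm enumerates: since $(X\setminus X_s)\cup Y_s$ is a vertex cover and $Y_s\cap X=\emptyset$, no edge of $G[X]$ can have both endpoints in $X_s$, so $X_R$ is independent; and $\langle X_s,Y_s\rangle$ witnesses conditions $(i)$--$(iv)$ for this annotation. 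Conversely, a \yes\ answer on some annotated instance yields a solution pair satisfying $(i)$--$(iii)$ with thresholds $k$ and $k-d$; applying the $(\Leftarrow)$ direction of \autoref{lemma:solution-edge-pair} to a sub-forest of $G[X_{\el}]$ having $|Y_s|-|X_s|+d$ edges (this number is at most $k$ by $(iii)$, and at most $\rank(X_{\el})$ by $(ii)$, and $k<\rank(G)$ guarantees it is a proper sub-forest of a spanning forest of $G$) produces a set $F$ with $|F|\le k$ and $\vc(G/F)\le\vc(G)-d$, so $(G,k,d)$ is a \yes-instance.

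The hard part will be bounding the number of enumerated instances by $3^d$ rather than the naive $2^{|V(F_0)|}\le 4^{d}$. The valid partitions inject into the independent sets of the forest $F_0$, so it suffices to show that a forest with $m$ edges and no isolated vertices has at most $3^m$ independent sets. I would prove this by induction, strengthening the claim to ``a forest with $s$ isolated vertices and $m$ edges has at most $2^s3^m$ independent sets'': peeling off a leaf $v$ with neighbour $u$ and using $i(F)=i(F-v)+i(F-v-u)$, the case $\deg(u)=1$ contributes a clean factor $3$, while for $\deg(u)\ge 2$ the graph $F-v-u$ gains at most $\deg(u)-1$ isolated vertices but loses $\deg(u)$ edges, so by induction $i(F-v-u)\le 2^{s}3^{m-1}\cdot(2/3)^{\deg(u)-1}\le 2^s3^{m-1}$ and the two terms sum to at most $2\cdot 2^s3^{m-1}\le 2^s3^m$. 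Since $F_0$ has no isolated vertices and $m\le d-1$, this gives at most $3^{d-1}<3^d$ annotated instances; each is produced in polynomial time and solved in time $f(n,k,d)$, yielding the claimed bound $3^{d}\cdot n^{\calO(1)}\cdot f(n,k,d)$.
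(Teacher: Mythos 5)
Your proposal is correct, and its skeleton is the same as the paper's: you use the guarantee that $(G,k-1,d)$ is a \no-instance to force any solution to have size exactly $k$, you invoke \autoref{lemma:solution-edge-pair} in both directions, and you branch on how a hypothetical solution pair meets the vertices of a spanning forest of $G[X]$, whose size is controlled by $\rank(X)<d$. Where you genuinely differ is in how the branching is organized and counted. The paper enumerates, for each subset $F'_x$ of the forest edges (both endpoints forced to the left), a choice of ``right'' endpoint for every remaining edge, so the number of instances is $\sum_{i=0}^{d}\binom{d}{i}2^{d-i}=3^d$ with no further argument; inconsistent guesses are discarded as trivial \no-instances. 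You instead enumerate vertex-side partitions $\langle X_L,X_R\rangle$ of $V(F_0)$ with $X_R$ independent, and your $3^d$ bound rests on the auxiliary claim that a forest with $m$ edges and no isolated vertices has at most $3^m$ independent sets --- your induction (with the strengthened $2^s3^m$ form) is correct, and the bound is tight for a disjoint union of $m$ edges, but it is an extra combinatorial ingredient the paper does not need. The two enumerations generate essentially the same family of annotated instances, since a consistent per-edge assignment is precisely a partition of $V(F_0)$ whose right side is independent in the forest. Your correctness argument is also sound, including the reverse direction's choice of a sub-forest with exactly $|Y_s|-|X_s|+d$ edges: this quantity lies in $[d,k]$ because condition $(i)$ together with the minimality of $X$ forces $|Y_s|\ge|X_s|$, and it is at most $\rank(X_{\el})$ by condition $(ii)$, so the required forest exists and $k<\rank(G)$ makes \autoref{lemma:solution-edge-pair} applicable. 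In short, your route buys a cleaner instance set (no inconsistent annotations ever produced) at the price of a standalone counting lemma, while the paper gets the count for free from the binomial identity.
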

\begin{proof}
Let $\calA$ be an algorithm that, given an instance $((G, k, d), X, \langle X_L, X_R \rangle)$ of \textsc{Annotated Contraction(\vc)}, runs in time $f(n, k, d)$, and correctly determines whether it is a \yes-instance.
We describe an algorithm that solves \textsc{Contraction(\vc)} using $\calA$ as a subroutine.

The algorithm takes as input an instance $(G, k, d)$ of \textsc{Contraction(\vc)} and returns either \yes\ or \no.
By \autoref{guarantee:special-instance}, the input also consists of a minimum vertex cover $X$ of rank less than $d$.
Let $F_x$ {be the edge set of} a spanning forest of $G[X]$.
For every subset $F'_x$ of $F_x$, the algorithm constructs multiple instances of \textsc{Annotated Contraction(\vc)} as specified in the next paragraph.
The algorithm uses Algorithm~$\calA$ to check if at least one of these newly created instances is a \yes-instance.
If it is the case, then the algorithm returns \yes, otherwise it returns \no.

\sloppy Consider a subset $F'_x$ of $F_x$.
Let $\calP$ be the collection of partitions $\langle X_{L, F'_x}, X_{R, F'_x} \rangle$ of $V(F_x \setminus F'_x)$ such that for every edge $e$ in $F_x \setminus F'_x$, exactly one of its endpoints is in $X_{L, F'_x}$ and the other {one} is in $X_{R, F'_x}$.
For every partition $\langle X_{L, F'}, X_{R, F'} \rangle$ in $\calP$, the algorithm does as follows:
If $X_{R, F'_x}$ is \emph{not} an independent set in $G$, then the algorithm constructs a trivial \no-instance.
Otherwise, it adds $(G, k, d, X, \langle X_{L}, X_{R}\rangle)$ to the collection of instances of \textsc{Annotated Contraction(\vc)}.
Here, $X_L = X_{L, F'} \cup V(F\setminus F')$ and $X_R = X_{R, F'}$.
This completes the description of the algorithm.

We now argue the correctness of the algorithm.
Suppose that $(G, k, d)$ is a \yes-instance.
Recall that, by \autoref{guarantee:special-instance}, $(G, k - 1, d)$ is a \no-instance.
Hence, there exists a subset $F \subseteq E(G)$ of size exactly $k$ such that $\vc(G/F) \le \vc(G) - d$.
By \autoref{lemma:solution-edge-pair}, there are subsets $X_s \subseteq X$ and $Y_s \subseteq V(G) \setminus X$ such that
$(i)$ $(X \setminus X_s) \cup Y_s$ is a vertex cover of $G$,
$(ii)$ $\rank((X \setminus X_s) \cup Y_s) \ge |F|$, and
$(iii)$ $|Y_s| - |X_s|  \le |F| - d$.
As $|F| = k$, there is a solution pair $\langle X_s, Y_s\rangle$ that satisfies the first three conditions.
To see that the solution pair {also} satisfies the last condition mentioned in the problem statement, let $F'_x$ be the subset of $F_x$ such that $V(F \setminus F'_x) \cap (X \setminus X_s) = \emptyset$.
As $X_s$ is an independent set {in} $G$, for every edge $e$ in $F_x \setminus F'_x$, exactly one of its endpoints is in $X_s$.
As the algorithm constructs a new instance for every such a partition, at least one of the newly created instances is a \yes-instance.

The algorithm returns \yes\ only when Algorithm~$\calA$ returns \yes\ on one of the newly created instances.
By the correctness of Algorithm~$\calA$, at least one of the newly created instances is a \yes-instance.
Hence, there exists sets $X_s \subseteq X$ and $Y_s \subseteq Y$ such that $(i)$ $(X \setminus X_s) \cup Y_s$ is a vertex cover of $G$,
$(ii)$ $\rank((X \setminus X_s) \cup Y_s) \ge k$, and
$(iii)$ $ |Y_s| - |X_s|  \le k - d$.
By \autoref{lemma:solution-edge-pair}, there exists a subset $F \subseteq E(G)$ of size $k$ such that $\vc(G/F) \le \vc(G) - d$.
Hence, $(G, k, d)$ is a \yes-instance.
This concludes the proof of correctness of the algorithm.

For every $i \in \{0, 1, 2, \dots, d\}$, the algorithm iterates over all subsets of edges of size $i$.
It can construct the partition by guessing the right endpoint of the remaining $d - i$ edges.
For every partition, it creates an instance and executes Algorithm~$\calA$.
Hence, the total running time of the algorithm is $\calO(\sum_{i=0}^{d}\binom{d}{i} \cdot 2^{d - i} \cdot (f(n, d, k) + n^2)) = \calO(3^d \cdot (f(n, k ,d) + n^2))$.
This concludes the proof of the lemma.
\end{proof}

\label{comment:assumption-met}

As mentioned in the overview of the introduction, to solve an instance of  \textsc{Annotated Contraction(\vc)}, we reduce it to an equivalent instance of {the} \textsc{Constrained MaxCut} problem.
To present such a reduction, it is convenient to work with an instance $((G, k, d), X, \langle X_L, X_R \rangle)$ of  \textsc{Annotated Contraction(\vc)} where $X$ is an independent set.
We present a reduction rule that eliminates edges with both {endpoints} in $X$.
The reduction rule states that it is safe to contract edges with both endpoints  in $X_L$, and that it is safe to delete edges with one endpoint in $X_L$ and another endpoint in $X_R$.
Recall that if there is an edge with both endpoints in $X_R$, then the input is a trivial \no-instance.
Note that $\langle X_L, X_R \rangle$ is not a partition of $X$.
However, as we only need the following reduction rule for the instances obtained by the algorithm mentioned in \autoref{lemma:contr-VC-to-annot-contr-vc}, we can assume that $X \setminus (X_L \cup X_R)$ is an independent set in $G$.

\begin{reduction rule}
\label{rr:anno-contr-edges-X} Consider an instance $((G, k, d), X, \langle X_L, X_R \rangle)$ of  {\sc Annotated Contraction(\vc)}.
Let $F_1 = E(X_L, X_R)$ and $F_2$ be the set of all edges in a spanning forest of $G[X_L]$.
\begin{itemize}
\item Delete the edges in $F_1$.
\item Contract the edges in $F_2$ and reduce both $k, d$ by $|F_2|$.
\end{itemize}
Return the instance $((G', k', d'), X', \langle X'_L, X_R \rangle)$ where $G' = (G - F_1)/F_2$, $k' = k - |F_2|$, $d' = d - |F_2|$, $X' = V(G[X]/F_2)$, and $X'_L = V(G[X_L]/F_2)$.
\end{reduction rule}

\begin{lemma}
\label{lemma:rr-anno-contr-edges-X}
\autoref{rr:anno-contr-edges-X} is safe. Therefore, it is safe to assume that we are given an instance $((G, k, d), X, \langle X_L, X_R \rangle)$ of  {\sc Annotated Contraction(\vc)} such that $X$ is an independent set and a minimum vertex cover of $G$.
\end{lemma}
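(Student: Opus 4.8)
The plan is to establish the two directions of the equivalence between the input instance $I = ((G,k,d), X, \langle X_L, X_R \rangle)$ and the output instance $I' = ((G',k',d'), X', \langle X'_L, X_R \rangle)$ of \textsc{Annotated Contraction(\vc)}, and then read off the structural consequence. The engine of both directions is condition~$(iv)$ of the problem: any solution pair $\langle X_s, Y_s\rangle$ satisfies $X_L \subseteq X \setminus X_s \subseteq X_{\el}$ and $X_R \subseteq X_s$, so $X_L$ lies entirely inside the enlarged cover $X_{\el} = (X\setminus X_s)\cup Y_s$ while $X_R$ lies entirely outside it. Two consequences drive everything: $F_2 \subseteq E(X_L) \subseteq E(X_{\el})$ is a forest sitting inside $X_{\el}$, and every edge of $F_1 = E(X_L, X_R)$ has its $X_R$-endpoint outside $X_{\el}$, so $F_1 \cap E(X_{\el}) = \emptyset$ and deleting $F_1$ leaves $G[X_{\el}]$ untouched.

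For the forward direction I would keep the \emph{same} pair $\langle X_s, Y_s\rangle$: since $X_s \cap X_L = \emptyset$ and $Y_s \subseteq Y$, neither set is altered by contracting $F_2$, so both survive in $G'$ and the image of $X_{\el}$ is exactly the new enlarged cover. Condition~$(i)$ is preserved because contracting an edge whose both endpoints lie in a vertex cover, and deleting edges, both preserve the cover property. For condition~$(ii)$, deleting $F_1$ does not change $\rank(X_{\el})$ (as $F_1 \cap E(X_{\el}) = \emptyset$), and contracting the forest $F_2$ lowers it by \emph{exactly} $|F_2|$, so the image of $X_{\el}$ has rank at least $k - |F_2| = k'$. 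Condition~$(iii)$ is immediate from $k' - d' = k - d$, and condition~$(iv)$ is inherited. The backward direction is the mirror image: condition~$(iv)$ for $I'$ forces $X'_s \cap X'_L = \emptyset$, so $X'_s$ avoids the contracted vertices and lifts to a unique $X_s \subseteq X \setminus X_L$; setting $Y_s := Y'_s$, the only subtle point is recovering condition~$(i)$. I would prove it by showing that the complement $V(G)\setminus X_{\el} = X_s \cup (Y \setminus Y_s)$ is independent in $G$: it is independent in $G'$ by hypothesis, it is disjoint from $X_L$ (the only vertices affected by uncontracting $F_2$), and each deleted edge of $F_1$ has its $X_L$-endpoint outside this set; hence neither re-adding $F_1$ nor uncontracting $F_2$ creates an edge within the complement. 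Conditions~$(ii)$ and $(iii)$ follow exactly as before, now using that uncontracting the forest $F_2$ inside $X_{\el}$ raises the rank by $|F_2|$.

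It remains to justify the concluding sentence. Independence of $X'$ in $G'$ is structural: by the way these instances are produced in \autoref{lemma:contr-VC-to-annot-contr-vc}, the only edges of $G$ inside $X$ lie inside $X_L$, between $X_L$ and $X_R$, or inside $X_R$ (the vertices of $X \setminus (X_L \cup X_R)$ being isolated in $G[X]$). Contracting $F_2$ turns $X'_L$ into an independent set, deleting $F_1$ removes the $X_L$--$X_R$ edges, and an edge inside $X_R$ would make the instance a trivial \no-instance. Thus $X'$ is an independent vertex cover and $G'$ is bipartite with bipartition $\langle X', Y'\rangle$. For minimality I would first record the clean half: $|X'| = |X| - |F_2|$, and since each contraction lowers the vertex cover number by at most one while $X'$ remains a cover, $X'$ is a \emph{minimum} vertex cover of $G/F_2$.

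The step I expect to be the main obstacle is the deletion of $F_1$: its edges lie \emph{inside} the cover, and deleting cover-internal edges can in principle decrease $\vc$, so the delicate part is to certify that $X'$ remains minimum when passing from $G/F_2$ to $G' = (G/F_2) - F_1$. The route I would take is to exhibit a matching of $G'$ saturating $X'$ into $Y'$, equivalently to verify Hall's condition $|N_{G'}(S)| \ge |S|$ for every $S \subseteq X'$, drawing on the minimality of $X$ in $G$ (which supplies $|N_G(T)| \ge |T|$ for all $T \subseteq X$) together with the numerical constraints of \autoref{guarantee:special-instance}, in particular $\rank(X) < d$. Reconciling the $X$-internal part of these neighborhoods with the requirement that enough of each neighborhood falls into $Y$ is the crux, and is where I anticipate the argument needing the most care.
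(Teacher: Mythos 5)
Your proof of the first assertion --- that \autoref{rr:anno-contr-edges-X} is safe --- is correct and is essentially the paper's own proof: in both directions one keeps the very same solution pair, using condition $(iv)$ to guarantee that $V(F_2)\subseteq X_L$ lies inside the enlarged cover $X_{\el}$ (so contracting the forest $F_2$ changes $\rank(X_{\el})$ by exactly $|F_2|$) and that every edge of $F_1$ has its $X_R$-endpoint outside $X_{\el}$ (so deleting $F_1$ affects neither the cover property nor the rank), finishing with $k-d=k'-d'$. Your route to condition $(i)$ in the reverse direction, via independence of the complement, is a harmless rephrasing of the paper's direct covering argument.

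Where you and the paper part ways is the ``Therefore'' clause, and here your diagnosis is better than you feared. The paper's proof simply does not address that clause: it proves the equivalence and asserts the rest. Your independence argument is correct and matches the discussion preceding the rule. For minimality, the half you prove (contraction preserves minimality, since $|X'|=|X|-|F_2|$ and one contraction lowers $\vc$ by at most one) is right, and the half you flag as the crux is not merely delicate --- it is \emph{false} under all of the paper's stated hypotheses. Concretely, let $G$ be the $K_4$ on $\{a,b,c,y\}$ together with two further vertices $x_1,y_1$ and edges $x_1y$, $x_1y_1$. Then $X=\{a,b,c,x_1\}$ is a minimum vertex cover ($\vc(G)=4$), $\rank(X)=2$, $\rank(G)=5$, and $(G,k,d)=(G,3,3)$ satisfies \autoref{guarantee:special-instance} (in particular $(G,2,3)$ is trivially a \no-instance, as a contraction lowers $\vc$ by at most one). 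The tuple $X_L=\{a,b\}$, $X_R=\{c\}$ is among the instances generated by \autoref{lemma:contr-VC-to-annot-contr-vc} (take the spanning forest $\{ab,ac\}$ of $G[X]$, keep $ab$, and guess $c$ to the right; other choices of forest give symmetric examples). Applying the rule contracts $ab$ into a vertex $w$ and deletes $ac,bc$, leaving $G'$ with edge set $\{wy,cy,x_1y,x_1y_1\}$ and $X'=\{w,c,x_1\}$; but $\{y,x_1\}$ is a vertex cover of $G'$ of size two, so $X'$ is \emph{not} a minimum vertex cover, and Hall's condition fails on $\{w,c\}$ (their only neighbour is $y$), so no matching of $G'$ saturates $X'$. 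Hence no Hall-type argument can close your gap, and the failure propagates downstream to \autoref{rr:only-matching}, which presupposes exactly such a saturating matching. In short: relative to what the paper actually proves, your proposal is complete and takes the same approach; the step you could not finish is a genuine defect in the lemma's second sentence itself (it would have to be weakened, e.g.\ to bipartiteness, or the later steps repaired to detect the failure of Hall's condition), not a deficiency of your argument.
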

\begin{proof}
As $V(F_2) \subseteq X_L$, for any two subsets $X_s,Y_s$ such that $X_L \cap  X_s = X_L \cap Y_s = \emptyset$, we have $\rank((X \setminus X_s) \cup Y_s) \ge k$ if and only if $\rank((X' \setminus X_s) \cup Y_s) \ge k'$.
Here, $X' = V(G[X]/F_2)$.
Also, by the construction, $k - d = k' - d'$.

$(\Rightarrow)$ Suppose that $((G, k, d), X, \langle X_L, X_R \rangle)$ is a \yes-instance.
Then, there exists a solution pair $\langle X_s, Y_s\rangle$ that satisfies the four conditions mentioned in the definition of the problem.
We argue that $\langle X_s, Y_s\rangle$ is also a solution of $(G', k', d', X', \langle X'_L, X_R \rangle)$.
As $(X \setminus X_s) \cup Y_s$ is a vertex cover of $G$, it is also a vertex cover {of} $G - F_1$.
As $(X' \setminus X_s) \cup Y_s$ is obtained from $(X \setminus X_s) \cup Y_s$ by contracting the edges in $F_2$, whose both endpoints are in $X \setminus X_s$, the set $(X' \setminus X_s) \cup Y_s$ is a vertex cover of $(G - F_1)/F_2$.
As $V(F_2) \subseteq X_L$ and $X_L \cap  X_s = X_L \cap Y_s = \emptyset$, $\rank((X \setminus X_s) \cup Y_s) \ge k$ implies that $\rank((X' \setminus X_s) \cup Y_s) \ge k - |F_2| = k'$.
Also, by the construction, $k - d = k' - d'$.
Hence, $|Y_s| - |X_s| \le k' - d'$.
It is easy to verify that $X'_L \cap X_s = \emptyset$ and $X_R \subseteq X_s$.
Hence, $\langle X_s, Y_s \rangle$ satisfies all the four conditions with respect to instance $((G', k', d'), X', \langle X'_L, X_R \rangle)$.
This implies that $((G', k', d'), X', \langle X'_L, X_R \rangle)$ is a \yes-instance.

$(\Leftarrow)$ Suppose that $((G', k', d'), X', \langle X'_L, X_R\rangle)$ is a \yes-instance.
Then, there exists a solution pair $\langle X'_s, Y'_s \rangle$ that satisfies the four conditions mentioned in the definition of the problem.
Any edge in $G - F_1$ which is not present in $G'$ is incident on some vertex in $V(F_2)$.
By the first condition, the set $(X' \setminus X'_s) \cup Y'_s$ is a  vertex cover of $G'$.
As $(X' \setminus X_s) \cup Y_s$ is obtained from $(X \setminus X_s) \cup Y_s$ by contracting the edges in $F_2$ whose both endpoints are in $X \setminus X_s$, $(X \setminus X'_s) \cup Y'_s$ is a vertex cover of $G - F_1$.
This  implies that if $\rank((X' \setminus X'_s) \cup Y'_s) \ge k'$, then $\rank((X \setminus X'_s) \cup Y'_s) \ge k' + |F_2| = k$.
For every edge in $F_1$, one of its {endpoints} is incident on $X_L \subseteq X$.
Hence, $(X \setminus X'_s) \cup Y'_s$ is a vertex cover of $G$ and its rank is at least $k$.
As $k - d = k' - d'$, we have $|Y'_s| - |X'_s| \le k - d$.
It is easy to verify that $X_L \cap X'_s$ and $X_R \subseteq X'_s$.
Hence, $\langle X'_s, Y'_s \rangle$ satisfies all the four conditions with respect to the instance $((G, k, d), X, \langle X_L, X_R \rangle)$.
This implies that $((G, k, d), X, \langle X_L, X_R \rangle)$ is a \yes-instance.
\end{proof}

\subsubsection{Reducing to \textsc{Constrained MaxCut}}
\label{sub-sec:const-vc-to-const-max-cut}

We find the following {reformulation} of \textsc{Annotated Contraction(\vc)} convenient to present an algorithm to solve it.

\defproblem{\textsc{Constrained MaxCut}}{An instance $(G, k, d)$ of \textsc{Contraction(\vc)},  a minimum vertex cover $X$ of $G$, and a tuple $\langle X_L, X_R \rangle$ such that $X_L, X_R$ are disjoint subsets of  $X$.}{Does there exist a partition $\langle V_L, V_R \rangle$ of $V(G)$ such that
$(i)$ $E(V_L \cap Y, V_R \cap X) = \emptyset$,
$(ii)$ $\rank(E(V_L \cap X, V_R \cap Y)) \ge k$,
$(iii)$ $ |V_R \cap Y| - |V_R \cap X|  \le k - d$, and
$(iv)$ $X_L \subseteq V_L$ and $X_R \subseteq V_R$?}

Note that in \textsc{Annotated Contraction(\vc)} we are seeking {for} a pair of subsets, whereas in \textsc{Constrained MaxCut} we are looking for a partition of $V(G)$.
Such a formulation allows us to handle vertices that we have decided to keep out of {a} solution pair.
Note that the input instances for both of these problems are {the} same.
Hence, due to \autoref{lemma:rr-anno-contr-edges-X}, it is safe to assume that $X$ is a minimum vertex cover and an independent set in $G$.
{In the next lemma we show that both problems are in fact equivalent.} 

\begin{lemma}
\label{lemma:red-const-contr-vc-maxcut}
An instance $((G, k, d), X, \langle X_L, X_R \rangle)$ is a {\sc Yes}-instance of {\sc Annotated Contraction(\vc)} if and only if it is a {\sc Yes}-instance of {\sc Constrained MaxCut}.
\end{lemma}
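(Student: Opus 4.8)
The plan is to set up an explicit bijection between solution pairs of \textsc{Annotated Contraction(\vc)} and the partitions of $V(G)$ feasible for \textsc{Constrained MaxCut}, and then to verify that the four defining conditions translate into one another term by term. By \autoref{lemma:rr-anno-contr-edges-X} I may assume that $X$ is an independent set and a minimum vertex cover of $G$; in particular $G$ is bipartite with bipartition $\langle X, Y \rangle$, a fact that I will use throughout. Given a solution pair $\langle X_s, Y_s \rangle$ with $X_s \subseteq X$ and $Y_s \subseteq Y$, I associate to it the partition defined by $V_R := X_s \cup Y_s$ and $V_L := V(G) \setminus V_R$, so that $V_R \cap X = X_s$, $V_R \cap Y = Y_s$, $V_L \cap X = X \setminus X_s$, and $V_L \cap Y = Y \setminus Y_s$. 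Since $\langle X, Y \rangle$ partitions $V(G)$, this assignment is a bijection, its inverse sending a partition $\langle V_L, V_R \rangle$ to the pair $X_s := V_R \cap X$, $Y_s := V_R \cap Y$. It therefore suffices to check that $\langle X_s, Y_s \rangle$ satisfies conditions $(i)$--$(iv)$ of \textsc{Annotated Contraction(\vc)} if and only if $\langle V_L, V_R \rangle$ satisfies the corresponding conditions of \textsc{Constrained MaxCut}.

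Conditions $(iii)$ and $(iv)$ are immediate from the definitions of $X_s$ and $Y_s$: we have $|V_R \cap Y| - |V_R \cap X| = |Y_s| - |X_s|$, so the two versions of $(iii)$ coincide; and using $X_L, X_R \subseteq X$, the inclusion $X_L \subseteq V_L$ is equivalent to $X_L \cap X_s = \emptyset$, while $X_R \subseteq V_R$ is equivalent to $X_R \subseteq X_s$, giving condition $(iv)$. For condition $(i)$, observe that $(X \setminus X_s) \cup Y_s$ is a vertex cover of $G$ precisely when its complement $X_s \cup (Y \setminus Y_s)$ is an independent set; since $G$ is bipartite with parts $X$ and $Y$, and both $X_s \subseteq X$ and $Y \setminus Y_s \subseteq Y$ are independent, this holds exactly when there is no edge between $X_s = V_R \cap X$ and $Y \setminus Y_s = V_L \cap Y$, that is, exactly when $E(V_L \cap Y, V_R \cap X) = \emptyset$.

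The only condition requiring a genuine argument is $(ii)$, and this is where I expect the main (though still modest) obstacle to lie, since the two problems phrase the rank constraint over different objects: a vertex set in one case and an edge set in the other. I would show that $\rank((X \setminus X_s) \cup Y_s) = \rank(E(V_L \cap X, V_R \cap Y))$. Indeed, because $G$ is bipartite with parts $X$ and $Y$, every edge of the induced subgraph $G[(X \setminus X_s) \cup Y_s]$ has one endpoint in $X \setminus X_s = V_L \cap X$ and the other in $Y_s = V_R \cap Y$; hence its edge set is exactly $E(V_L \cap X, V_R \cap Y)$. The graph whose rank defines $\rank(E(V_L \cap X, V_R \cap Y))$ is obtained from $G[(X \setminus X_s) \cup Y_s]$ by discarding its isolated vertices, and since deleting an isolated vertex decreases both the number of vertices and the number of connected components by exactly one, the rank is unchanged. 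Thus the two rank quantities are equal, and the two versions of condition $(ii)$ are equivalent. Combining the equivalences obtained for $(i)$--$(iv)$ under the bijection establishes the claim in both directions.
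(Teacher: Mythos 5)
Your proof is correct and follows essentially the same route as the paper: the paper's proof uses exactly the correspondence $V_R = X_s \cup Y_s$ in one direction and $X_s = V_R \cap X$, $Y_s = V_R \cap Y$ in the other, leaving the verification of the four conditions as ``easy to verify.'' Your write-up simply fills in those details (notably the rank equality via bipartiteness and the harmlessness of isolated vertices), which is a faithful elaboration of the same argument rather than a different approach.
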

\begin{proof}
$(\Rightarrow)$ Let $\langle X_s,  Y_s \rangle$ be a solution pair of $((G, k, d), X, \langle X_L, X_R \rangle)$ for the \textsc{Annotated Contraction(\vc)} problem.
Define $V_R = X_s \cup Y_s$ and $V_L = V(G) \setminus V_R$.
It is easy to verify that $\langle V_L, V_R \rangle$ satisfies all the four conditions mentioned in the problem statement of \textsc{Constrained MaxCut}.
This implies that $((G, k, d), X, \langle X_L, X_R \rangle)$  is a \yes-instance of \textsc{Constrained MaxCut}.

$(\Leftarrow)$ Let $\langle V_L, V_R \rangle$ be a desired partition of $V(G)$.
Define $X_s = V_R \cap X$ and $Y_s = V_R \cap Y$.
Once again, it is easy to verify that the solution pair $\langle X_s, Y_s \rangle$ satisfies all the four conditions mentioned in problem statement of \textsc{Annotated Contraction(\vc)}.
This implies that $((G, k, d), X, \langle X_L, X_R \rangle)$ is a \yes-instance of \textsc{Annotated Contraction(\vc)}.
\end{proof}

Consider an instance $((G, k, d), X, \langle X_L, X_R \rangle)$ of \textsc{Constrained MaxCut}.
We consider the following two cases: $(1)$ $k = d$,  and $(2)$ $d < k < 2d$.
(Recall that we are in the case where $k < 2d$.)
The first case, as we will see, allows us to impose additional restrictions on the vertices that are in $V_R$.
It also helps us to set up some conditions such that, if they are satisfied while running the algorithm, then it can terminate and safely conclude that {the} input is a \yes-instance.
In the second case,  even for $k = d + 1$, we do not have these privileges.
We deal with each of the two cases separately.
\autoref{lemma:k-neq-d-to-k-eq-d} states that if an input instance is of the second type,  then we can construct a collection of $2^{\calO(d)} \cdot n^{k - d}$ many instances of the first type such that input instance is a \yes-instance if and only if at least one of these newly created instances is a \yes-instance.
We remark that this is the only place, in the whole algorithm, where {a} $n^{k - d}$-factor {appears} in the running time.
Recall that \autoref{thm:w1-hard} {implies} that this factor is unavoidable.
In the next subsection, we present an algorithm to solve the instances that are of the first type.

\begin{lemma}
\label{lemma:k-neq-d-to-k-eq-d}
Suppose that there is an algorithm that, given an instance $((G, k, d), X, \langle X_L, X_R \rangle)$ of {\sc Constrained MaxCut} with a guarantee that $k = d$, runs in time $f(n, k, d)$ and correctly determines whether it is a {\sc Yes}-instance.
Then, there is an algorithm that solves {\sc Constrained MaxCut} in time $f(n, k, d) \cdot 2^{\calO(d)} \cdot n^{k - d+1}$.
\end{lemma}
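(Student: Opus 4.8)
The plan is as follows. Since the hypothesised algorithm already disposes of the case $k=d$, it suffices to treat the case $d<k<2d$, so in particular $0<k-d<d$. I would describe how to produce, in time $2^{\calO(d)}\cdot n^{k-d+1}$, a family $\calI$ of $2^{\calO(d)}\cdot n^{k-d}$ instances of \textsc{Constrained MaxCut}, each satisfying $k=d$, such that the input instance $((G,k,d),X,\langle X_L,X_R\rangle)$ is a \yes-instance if and only if at least one member of $\calI$ is. Feeding each member to the assumed $k=d$ algorithm then yields the claimed running time, the extra factor $n$ absorbing the polynomial cost of constructing each instance.

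To guide the guessing, fix a hypothetical solution $\langle V_L,V_R\rangle$ and write $A=V_L\cap X$, $B=V_R\cap Y$, and $X_s=V_R\cap X$, so that conditions $(ii)$ and $(iii)$ of \textsc{Constrained MaxCut} read $\rank(E(A,B))\ge k$ and $|B|-|X_s|\le k-d$. Two observations drive the construction. First, a witness for condition $(ii)$ only needs $k$ edges, and since $k<2d$ any such set of $k$ edges forming a forest touches at most $2k=\calO(d)$ vertices; hence the way $B$ sits inside this rank-witnessing forest is described by $2^{\calO(d)}$ bits of combinatorial data. Second, using $|X_s|=|X|-|A|$ and $\rank(E(A,B))=|A|+|B|-c$, where $c$ is the number of connected components of the bipartite graph on vertex set $A\cup B$ with edge set $E(A,B)$, the surplus can be rewritten as $|B|-|X_s|=|A|+|B|-|X|=\rank(E(A,B))+c-|X|$, so that it is governed by the component structure of this forest rather than by individual vertices. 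I would therefore guess a set $S\subseteq Y$ of at most $k-d$ vertices of $B$ responsible for the surplus (there are $\calO(n^{k-d})$ choices) together with the $2^{\calO(d)}$ bits describing how $S$ attaches to the rest of the witnessing forest, and turn each guess into an instance with $k'=d'$ by committing $S$ to $V_R$ and decrementing the rank target and the surplus budget by the amounts prescribed by the guessed attachment, until the residual surplus budget is $0$.

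For correctness I would, in the forward direction, start from a solution minimising $|V_R\cap Y|$ (so that no vertex of $B$ wastes surplus budget on zero rank contribution), read off $S$ and its attachment pattern, and verify that the residual partition witnesses the corresponding $k=d$ instance; in the backward direction I would re-insert $S$ and check, via \autoref{lemma:solution-edge-pair} and \autoref{lemma:red-const-contr-vc-maxcut}, that the four conditions of the original instance are restored, using \autoref{lemma:pendant-out-of-pair} to normalise which vertices lie on the left. The main obstacle is exactly the bookkeeping compressed into the previous paragraph: a surplus vertex of $B$ need \emph{not} be a leaf of the rank-witnessing forest (along a long alternating $A$--$B$ path only a single vertex of $B$ is a leaf, yet the surplus can be larger), so one cannot simply delete $k-d$ vertices and lower the rank target by $k-d$. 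Making the two decrements — one accounting for the surplus count, one for the rank — mutually consistent, while simultaneously preserving the invariant that $X$ remains a minimum vertex cover of each constructed graph, is the delicate step, and it is precisely what forces the $2^{\calO(d)}$ guess of the attachment pattern instead of one uniform modification.
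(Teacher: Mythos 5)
Your high-level skeleton matches the paper's: guess the at most $k-d$ surplus vertices of $Y$ (this is the $n^{k-d}$ factor), guess $2^{\calO(d)}$ bits of local information, and hand the resulting $k'=d'$ instances to the assumed algorithm. But the step that actually constitutes the lemma --- how a guess is turned into a \emph{valid} instance of {\sc Constrained MaxCut} with $k'=d'$ --- is missing, and you say so yourself (``the delicate step''). Two concrete obstacles remain unresolved. First, ``committing $S$ to $V_R$'' has no mechanism in the problem's formalism: the annotation $\langle X_L, X_R\rangle$ constrains only vertices of $X$, so no choice of annotation by itself forces vertices of $Y$ to the right side. Second, your plan to ``decrement the rank target by the amounts prescribed by the guessed attachment'' requires the rank contribution of $S$ to be determined by local data, which it is not (as you observe, a surplus vertex need not be a leaf of the witnessing forest, and deleting $S$ can merge or split components of the cut forest); you defer this to the $2^{\calO(d)}$ attachment guess but never specify what that data is, how the two decrements are computed from it, or why the resulting instances are equivalent to the original one.

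The paper's proof resolves both points with one trick that avoids all rank bookkeeping. For each guessed set $Y^i \subseteq Y$ of size $q \le k-d$, it does \emph{not} remove $Y^i$ or lower the rank target by $Y^i$'s contribution; instead it adds a new pendant vertex adjacent to each $y \in Y^i$, places all these pendants in the new $X_R$ (so condition $(i)$ of the problem forces $Y^i$ into $V_R$), and sets \emph{both} budgets of the new instance to $d+q$. The surplus is cancelled because each forced $y \in Y^i$ in $V_R$ is now balanced by its forced pendant in $V_R \cap X$, while every edge incident to $Y^i$ stays in the graph and still counts toward the rank; the only exponential-size guess is a left/right partition of $N(Y^i)$, whose size is at most $k$ unless the instance is a trivial \yes-instance (by \autoref{obs:rank-bound-vertices}). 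So the decrement-consistency problem you flag is not solved by a cleverer guess --- it is dissolved by changing what gets modified: annotate and augment, do not excise. Without this (or some equivalent) construction, your argument does not go through. A minor additional point: your backward direction routes through \autoref{lemma:solution-edge-pair} and \autoref{lemma:pendant-out-of-pair}, but the lemma is a statement purely about {\sc Constrained MaxCut}; one only needs to map a solution of a constructed instance back to a partition of $V(G)$ (by discarding the pendants), with no need to return to the contraction formulation.
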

\begin{proof}
Let $\calA$ be an algorithm that, given an instance $((G, k, d), X, \langle X_L, X_R \rangle)$ of \textsc{Constrained MaxCut} with the guarantee that $k = d$,  runs in time $f(n, k, d)$, and correctly determines whether it is a \yes-instance.
We describe an algorithm that solves \emph{any} instance of \textsc{Constrained MaxCut} using $\calA$ as a subroutine.

The algorithm takes as input an instance $((G, k, d), X, \langle X_L, X_R \rangle)$ of \textsc{Constrained MaxCut} and either returns \yes\ or \no.
For every input instance, it constructs a collection $\calI = \{(G^{i}, k^{i}, d^{i}), X^{i}, \langle X^{i}_L, X^{i}_R \rangle)\}$ of $2^{\calO(d)} \cdot n^{k - d}$ many instances,  as described below, such that $k^{i} = d^{i}$ for every $i \in [|\calI|]$.
It uses Algorithm $\calA$ to check if at least one of these instances is {a} \yes-instance.
If it is the case then it returns \yes, otherwise it returns \no.

The algorithm constructs the new instances as follows.
First it guesses an integer $q$ in $\{0, 1, \dots, k - d\}$ such that if $((G, k, d), X, \langle X_L, X_R \rangle)$ is a \yes-instance then $q$ is the smallest integer for which $((G, d + q, d), X, \langle X_L, X_R \rangle)$ is a \yes-instance.
Note that the solution size, i.e., the number of edges allowed to be contracted in the second instance, is $d + q$.
Let $\calY_{q} = \{Y^i \subseteq Y\ |\ |Y^i| = q\}$.
For every set $Y^i$ in the collection, the algorithm constructs at most $2^{\calO(k)}$ many new instances.
If $\rank(E(Y^i, N(Y^i))) \ge k$, then the algorithm constructs a trivial \yes-instance.
It constructs a graph $G^i$ as follows:
For every vertex $y \in Y^i$, it adds a vertex $x$ and makes it adjacent to $y$.
Alternately, every vertex $y$ in $Y^{i}$ is adjacent to a pendant vertex in $G^i$.
Let $X^i_{p}$ be the collection of all the pendant vertices added while constructing $G^i$ from $G$.
For every partition $\langle X^i_{\ell}, X^i_r \rangle$ of $N(Y^i)$, the algorithm constructs a new instance $((G^{i}, k^{i}, d^{i}), X^{i}, \langle X^{i}_L, X^{i}_R \rangle)$ where $G^i$ is as described above, $k^{i} := d + q$, $d^{i} := d + q$, $X^{i} = X \cup X^i_p$, $X^i_L := X_L \cup X^i_{\ell}$, and $X^i_R := X_R \cup X^i_r \cup X^i_p$.
This concludes the description of the algorithm.

We now argue the correctness of the algorithm.
Suppose that $((G, k, d), X, \langle X_L, X_R \rangle)$ is a \yes-instance.
Hence, there exists an integer $q$ in $\{0, 1, \dots, k - d\}$ such that $((G, d + q, d), X, \langle X_L, X_R \rangle)$ is a \yes-instance.
We assume, without loss of generality, that $q$ is the smallest such an integer.
Consider the case when there exists a subset $Y'$ of $Y$ such that $|Y'| = q$ and $\rank(E(Y', N(Y'))) \ge k$.
In this case, the algorithm constructs a trivial \yes-instance, and thus correctly concludes that the input is a \yes-instance.
Now consider the case when for every subset $Y'$ of $Y$ of size $q$, $\rank(E(Y', N(Y'))) < k$.
Suppose that $\langle V_L, V_R \rangle$ is a partition of $V(G)$ that satisfies all the conditions mentioned in the statement of the problem.
By the third condition, $|V_R \cap Y| - |V_R \cap X| = q$.
Fix a subset $Y^{i}$ of $V_R \cap Y$ of size~$q$.
The third condition ensures that such a set exists.
Consider a partition $\langle X^i_{\ell}, X^i_{r} \rangle$ of $N(Y^i)$ where $X^i_{\ell} = N(Y^i) \cap V_L$ and $X^i_{r} = N(Y^i) \cap V_R$.
As the algorithm constructs an instance for every subset $Y'$ of $Y$ of size $q$, and for every partition of $N(Y')$, it constructs an instance, say $((G^{i}, k^{i}, d^{i}), X^{i}, \langle X^{i}_L, X^{i}_R \rangle)$, corresponding to $Y^i$ and a partition $\langle X^i_{\ell}, X^i_{r} \rangle$ of $N(Y^i)$.
We argue that this is a \yes-instance.

Let $X^i_p$ be the set of pendant vertices added to $G$ to construct $G^i$.
Note that $|X^i_p| = q$.
Hence, $V(G^i) = V(G) \cup X^i_p$.
Consider a partition $\langle V^i_L, V^i_R \rangle$ of $V(G^i)$ where $V^i_L = V_L$ and $V^i_R = V_R \cup X^{i}_p$.
It is easy to verify that this partition satisfies all the four conditions mentioned in the problem statement.
This implies that $((G^{i}, k^{i}, d^{i}), X^{i}, \langle X^{i}_L, X^{i}_R \rangle)$ is a \yes-instance.
By the correctness of Algorithm~$\calA$, it correctly concludes that it is a \yes-instance.
Hence, in this case the algorithm returns \yes\ on at least one of the newly created instances.
Hence, if the input  is a \yes-instance, then the algorithm correctly concludes that it is a \yes-instance.

We now argue that if the algorithm returns \yes, then the input instance is indeed a \yes-instance.
Consider an input instance $((G, k, d), X, \langle X_L, X_R \rangle)$.
By the description of the algorithm, it returns \yes\ if and only if one of the newly created instances is a \yes-instance.
Suppose that one of these instances is a trivial \yes-instance.
The algorithm constructs such an instance only if it finds a subset $Y'$ of $Y$ which is of size at most $k - d$, and $\rank(N(Y'), Y') \ge k$.
In this case, $\langle V_L = V(G) \setminus Y', V_R = Y' \rangle$ satisfies all the conditions mentioned in the problem statement.
Hence, the input instance is a \yes-instance.
Otherwise, suppose that the algorithm concludes that a non-trivial instance $((G^{i}, k^{i}, d^{i}), X^{i}, \langle X^{i}_L, X^{i}_R \rangle)$ is a \yes-instance using Algorithm~$\calA$.
Suppose that $\langle V^i_L, V^i_R \rangle$ is a partition of $V(G^i)$ that satisfies all the four conditions in the problem statement.
Let $X^i_p$ be the collection of pendant vertices added while constructing $G^{i}$ from $G$.
Recall that $X^i_p \subseteq X^i_r$ and $|X^i_p| \le k - d$.
It is easy to verify that $\langle V_L = V^i_L, V_R = V^i_R \setminus X^i_p \rangle$ is the desired partition of $V(G)$ that satisfies all the four conditions mentioned in the problem statement.
This implies that the input is a \yes-instance.
This concludes the proof of correctness of the algorithm.

It remains to argue about the running time of the algorithm.
For the input instance $((G, k, d), X, \langle X_L, X_R \rangle)$, there are at most $(k - d) + 1$ choices for $q$.
For each $q$, the size of $\calY_q$, the collection of subsets of $Y$ of size exactly $q$, is at most $n^{q}$.
If for a set $Y'$ in $\calY_q$, $\rank(Y', N(Y')) \ge k$, then the algorithm creates only one instance.
Otherwise, $\rank(Y', N(Y')) < k$.
By \autoref{obs:rank-bound-vertices}, the number of vertices in $N(Y')$ is bounded by $k$.
In this case, the algorithm constructs $2^{\calO(k)}$ many instances.
The overall running time of the algorithm follows from the running time of Algorithm $\calA$ and the fact that $k \le 2d$.
\end{proof}

\subsubsection{Simplifying an instance of \textsc{Constrained MaxCut} when $k = d$}
\label{subsubsec-simplify-k-eq-d}

As mentioned before, in this subsection we present an algorithm to solve an instance $((G, k, d), X, \langle X_L, X_R \rangle)$ of \textsc{Constrained MaxCut} with a guarantee that $k = d$.
We first present a reduction rule to simplify these instances under the presence of a matching saturating $X$, and prove its correctness using the fact that $k = d$.

\begin{reduction rule}
\label{rr:only-matching}
Consider an instance $((G, k, d), X, \langle X_L, X_R \rangle)$ of {\sc Constrained MaxCut} such that $k = d$ and $X$ is an independent set in $G$.
Let $M$ be a matching in $G$ saturating $X$.
\begin{itemize}
\item If there exists $x \in X \setminus X_L$ such that $N(x) \setminus V(M) \neq \emptyset$, then add $x$ to $X_L$.
\item If there exists $x \in X_L$ such that $N(x) \setminus V(M) \neq \emptyset$, then delete all vertices in $N(x) \setminus V(M)$.
\end{itemize}
Return instance $((G', k, d), X, \langle X'_L, X_R \rangle)$ where $G' = G - (N(x) \setminus V(M))$ and $X'_L = X_L \cup \{x\}$.
\end{reduction rule}

\begin{lemma}
\label{lemma:rr-only-matching}
\autoref{rr:only-matching} is safe.
\end{lemma}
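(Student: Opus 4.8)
The plan is to exploit the hypothesis $k=d$ together with the matching $M$ to pin down \emph{exactly} which vertices of $Y$ can lie on the $V_R$ side of any solution. Recall that, by \autoref{lemma:rr-anno-contr-edges-X}, $X$ is an independent set and a minimum vertex cover, so $G$ is bipartite with parts $X$ and $Y:=V(G)\setminus X$, and since $M$ saturates $X$ every $x'\in X$ has a unique partner $y_{x'}\in Y$. The key structural observation I would establish first is: in \emph{any} solution $\langle V_L,V_R\rangle$ of \textsc{Constrained MaxCut} with $k=d$, writing $X_s=V_R\cap X$, condition $(i)$ forces $N(X_s)\subseteq V_R\cap Y$; in particular the partners $\{y_{x'}: x'\in X_s\}$, which are distinct and lie in $N(X_s)$, are contained in $V_R\cap Y$, so $|V_R\cap Y|\ge |X_s|$. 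Combined with condition $(iii)$, which for $k=d$ reads $|V_R\cap Y|\le |V_R\cap X|=|X_s|$, this forces equality and $V_R\cap Y=\{y_{x'}:x'\in X_s\}$. Hence in every solution $V_R\cap Y$ consists \emph{solely} of matched vertices, every unmatched vertex of $Y$ lies in $V_L$, and no vertex of $X_s$ is adjacent to an unmatched vertex.

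With this at hand, the first operation (moving a vertex $x$ with $N(x)\setminus V(M)\neq\emptyset$ into $X_L$) is safe: if $x\in X_s$ in some solution, then its unmatched neighbour would belong to $N(X_s)\subseteq V_R\cap Y$, contradicting the observation above. Thus $x\in V_L$ in every solution, so restricting attention to solutions with $x\in X_L$ discards none of them, and manifestly creates none. I would also record that $M$ still saturates $X$ in $G'=G-(N(x)\setminus V(M))$ and that $X$ remains a minimum vertex cover of $G'$, so the structural observation applies verbatim to the output instance.

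For the deletion of $D:=N(x)\setminus V(M)$, the forward direction is immediate: by the observation, any solution $\langle V_L,V_R\rangle$ of the input has $D\subseteq V_L$, so $\langle V_L\setminus D, V_R\rangle$ is a partition of $V(G')$; conditions $(iii)$ and $(iv)$ are untouched, $(i)$ can only improve when vertices leave $V_L\cap Y$, and the edge set counted in $(ii)$, namely $E(V_L\cap X, V_R\cap Y)$, involves no vertex of $D$, so its rank is unchanged and still at least $k$. The backward direction is the crux and the main obstacle: given a solution $\langle V'_L,V'_R\rangle$ of the output, I would re-insert $D$ into $V_L$ and need condition $(i)$ to survive, i.e.\ that no $y\in D$ has a neighbour in $V'_R\cap X$. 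This is where the interplay with the first operation is essential: every neighbour of an unmatched vertex $y\in D$ has $y$ as an unmatched neighbour and is therefore placed in $X_L$, hence forced into $V'_L$; consequently $N(y)\cap(V'_R\cap X)=\emptyset$ and $(i)$ holds. (This is precisely why the deletion must be accompanied by $X_L$ absorbing \emph{all} vertices that have unmatched neighbours; without that, the lift of an output solution can violate $(i)$.) The remaining conditions transfer verbatim, since the vertices and edges of the cut counted in $(ii)$--$(iv)$ avoid $D$, yielding a solution of the input.

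The single delicate point, which I would flag as the heart of the argument, is exactly this backward check of $(i)$: it is the only place where safety genuinely relies on $k=d$ (through the matched-partner characterisation) and on the fact that every $X$-vertex incident to an unmatched vertex has been moved to $X_L$. Everything else is bookkeeping, namely verifying that $M$-saturation and the minimality of the vertex cover $X$ are preserved by both the annotation change and the vertex deletion, so that the structural observation can be reused and the rule can be composed to a fixpoint (eventually yielding an instance with $|X|=|Y|$, as used in the sequel).
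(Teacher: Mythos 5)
Your proof is correct and follows essentially the same route as the paper's: the matched-partner characterisation of $V_R \cap Y$ derived from $k=d$ together with conditions $(i)$ and $(iii)$, the contradiction argument showing $x$ lies in $V_L$ in every solution (justifying its move into $X_L$), and the restriction/lift argument for the deletion of $N(x)\setminus V(M)$. Your explicit treatment of the backward direction of the deletion --- that condition $(i)$ survives re-inserting the deleted vertices into $V_L$ precisely because every neighbour of an unmatched vertex has already been forced into $X_L$, and hence into $V'_L$ --- spells out the one step the paper's proof compresses into ``it is easy to see,'' and correctly flags that the deletion is only safe when the first operation is given priority.
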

\begin{proof}
Suppose that $((G, k, d), X, \langle X_L, X_R \rangle)$ is a \yes-instance, and let $\langle V_L, V_R \rangle$ be the desired partition of $V(G)$ that satisfies all the conditions in the problem statement.
By the first condition, $N(X \cap V_R) \subseteq Y \cap V_R$.
Let $M'$ be the subset of edges of $M$ that saturates all vertices in $X \cap V_R$.
As $V(M') \cap Y \subseteq N(X \cap V_R)$, we have $(V(M') \cap Y) \subseteq (Y \cap V_R)$.
As $M$ is a matching, $|X \cap V_R| = |M'| = |V(M') \cap Y|$.
However, as $k = d$, the third condition implies $|Y \cap V_R| - |X \cap V_R| = 0$.
Hence, $(Y \cap V_R) \setminus (V(M') \cap Y)$ is an empty set.

Consider the first case.
If $x \in V_R$ then, by the first condition, $N(x) \setminus M$ is in $Y \cap V_R$.
However, this contradicts the fact that $(Y \cap V_R) \setminus (V(M') \cap Y)$ is an empty set.
This implies that if $((G, k, d), X, \langle X_L, X_R \rangle)$ is a \yes-instance, then $((G', k, d), X, \langle X'_L, X_R \rangle)$ is also a \yes-instance.
The reverse direction is vacuously true.

Consider the second case.
If $N(x) \setminus V(M)$ is in $V_R$, this contradicts the fact that $(Y \cap V_R) \setminus (V(M') \cap Y)$ is an empty set.
Hence, for any partition $\langle V_L, V_R \rangle$, $N(x) \setminus V(M)$ is in $V_L$.
It is easy to see that $\langle V_L, V_R \rangle$ is a solution of $((G, k, d), X, \langle X_L, X_R \rangle)$ if and only if $\langle V_L \setminus (N(x) \setminus M), V_R \rangle$ is a solution of $((G', k, d), X, \langle X'_L, X_R \rangle)$.
\end{proof}

\begin{figure}[t]
  \begin{center}
    \includegraphics[scale=0.6]{./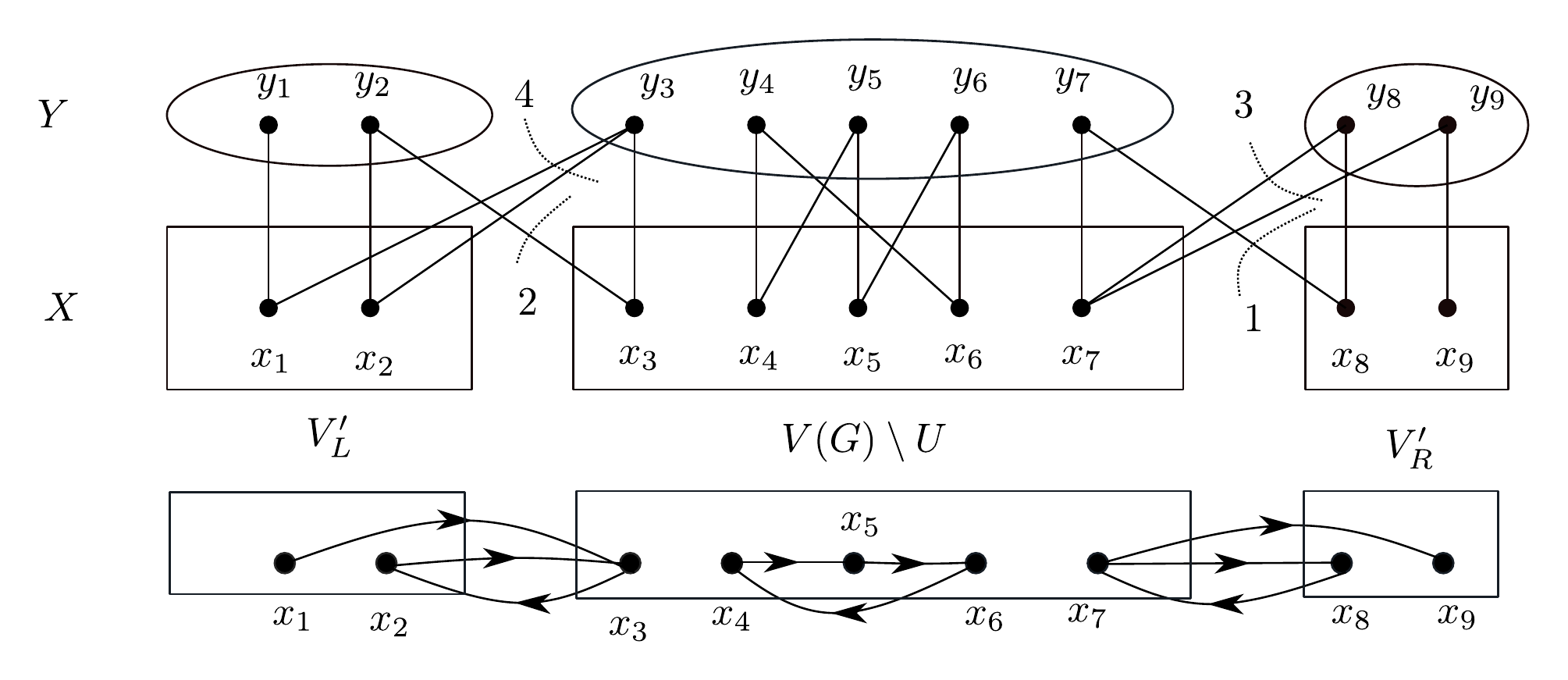}
    \end{center}
   \caption{An example to illustration a reduction from \textsc{Constrained MaxCut} to \textsc{Contrained Directed MaxCut}. We do not show all the edges in $G$ for the sake of clarity. \label{fig:vc-to-const-dir-max-cut}}
\end{figure}

We now present an informal description of the algorithm {to solve \textsc{Constrained MaxCut} with a guarantee that $k = d$}.
Consider an instance $((G, k, d), X, \langle X_L, X_R \rangle)$ of \textsc{Constrained MaxCut} on which \autoref{rr:only-matching} is not applicable.
See \autoref{fig:vc-to-const-dir-max-cut} for an illustration.
Let $M = \{x_iy_i\ |\ i \in [9]\}$ be a matching saturating the vertices in $X$.
Note that, in this case, $|X| = |Y| = |M|$.
Consider a subset $U$ of $V(G)$ which can be `well-partitioned' into $\langle V'_L, V'_R \rangle$.
Informally, this means that $\langle V'_L, V'_R \rangle$ can be extended to obtain a partition $\langle V_L, V_R \rangle$ of $V(G)$ such that it is a solution of $((G, k, d), X, \langle X_L, X_R \rangle)$.
We can think of the vertices in $U$ as `processed vertices'.
For example, consider $U = \{x_1, y_1, x_2, y_2, x_8, y_8, x_9, y_9\}$ in \autoref{fig:vc-to-const-dir-max-cut},
and let $\langle V'_L, V'_R \rangle$ be a `well-partition' of $U$ where $V'_L = \{x_1, y_1, x_2, y_2\}$ and $V'_R = \{x_8, y_8, x_9, y_9\}$.
Our objective is to extend $V'_L, V'_R$ to obtain $V_L, V'_R$ by processing more vertices, i.e., by adding them to either $V'_L$ or $V'_R$.

As $G$ is connected and $X, Y$ are independent sets in $G$,  at least one of the following four sets is non-empty:
$(1)$ $E(Y \setminus U,  V'_R \cap X)$,
$(2)$ $E(X \setminus U,  V'_L \cap Y)$,
$(3)$ $E(X \setminus U,  V'_R \cap Y)$, and
$(4)$ $E(Y \setminus U,  V'_L \cap X)$.
As we are aiming for a partition $\langle V_L, V_R \rangle$ for which $E(V_L \cap Y, V_R \cap X) = \emptyset$, in the first case it is safe to move the endpoints of the edges in $E(Y \setminus U,  V_R \cap X)$ that are in $Y \setminus U$ to $V_R$.
For example, it is safe to move $y_7$ to $V'_R$.
Similarly, in the second case, it is safe to move the endpoints of edges in $E(X \setminus U,  V_L \cap Y )$ that are in $X \setminus U$ to $V_L$.
For example, it is safe to move $x_3$ to $V'_L$.
As $\langle V_L, V_R \rangle$ also needs to satisfy $|Y \cap V_R| = |X \cap V_R|$, such a move also forces other vertices that are adjacent to these vertices via edges in $M$ to move.
For example, $x_7$ and $y_3$ are forced to move to $V'_R$ and $V'_L$, respectively.

In the third case, if $\rank(E(X \setminus U,  V_R \cap Y)) \ge k$ then $\langle V'_L \cup (V(G) \setminus U), V'_R \rangle$ is the desired partition.
Otherwise,  $\rank(E(X \setminus U,  V_R \cap Y)) < k$.
It is easy to see that in this
case, the number of vertices in $X \setminus U$ that are incident on edges in $E(X \setminus U,  V_R \cap Y)$ is at most $k$.
We can guess how the desired partition $\langle V_L, V_R \rangle$ intersects with these endpoints and extend the set of processed vertices in $2^{\calO(k)}$ many ways.
Similarly, in the fourth case, if $\rank(E(Y \setminus U,  V_L \cap X)) \ge k$, then $\langle V'_L, V'_R \cup (V(G) \setminus U)\rangle$ is a partition that satisfies all the desired conditions.
Otherwise, we can extend to a set of processed vertices in $2^{\calO(k)}$ many ways.

To implement the idea mentioned in the above paragraph, we need some bound on the total number of sets of `processed vertices' we need to consider.
In order to do that, we exploit the properties of the desired partition.
Consider the set $\{x_4, y_4, x_5, y_5, x_6, y_6\}$  in \autoref{fig:vc-to-const-dir-max-cut}.
Because of the arguments used in the first and the second cases, either this  set is entirely contained in $V_L$ or in $V_R$.
To find such cycles, we introduce a directed version of the problem called \textsc{Constrained Digraph MaxCut}.
The input of the problem contains a digraph and the objective is to find a partition of the vertex set such that all the arcs across this partition are in the same direction, and the rank of these arcs is at least $k$.
For our case, consider the digraph $D$ obtained from $G$ by directing every edge from $X$ to $Y$ and then `merging' all edges in the matching $M$.
Recall that $M$ is a matching saturating $X$.
Here, we do not delete parallel or anti-parallel edges while merging an arc in a directed graph.
See \autoref{fig:vc-to-const-dir-max-cut} for an example.
In $D$, sets like these correspond to a directed cycle.
And as mentioned before, vertices in these directed cycles move together.
Hence, we can obtain a directed acyclic graph by merging these cycles into a vertex.
The topological ordering of this resulting graph gives a natural order to process the vertices in $G$.
We formalize these ideas in the next subsection.

\subsubsection{Reducing to \textsc{Constrained Digraph MaxCut}}
\label{sub-sec:cont-max-cut-dir-max-cut}

In this section, we consider directed graphs that can have parallel arcs.
For a digraph $D$, we define its \emph{underlying undirected graph} $G$ as the graph obtained from $D$ by forgetting the directions of the arcs.
Formally, $V(G) = V(D)$ and $E(G) = \{uv\ |\ (u, v) \in A(D) \}$.
We define the \emph{rank} of a digraph, and the rank of a subset of its vertices or arcs using its underlying undirected graph.
Formally, $\rank(D) = \rank(G)$, for a subset $S \subseteq V(D)$, $\rank(S) = \rank(G[S])$, and for a subset $B \subseteq A(D)$, $\rank(B) = \rank(G[V(B)])$.

\defproblem{\textsc{Constrained Digraph MaxCut}}{A digraph $D$, a tuple $\langle X_L, X_R \rangle$ of disjoint subsets of $X$, and an integer $k$.}{Does there exist a partition $(V_L, V_R)$ of $V(G)$ such that
$(i)$ $A(V_R, V_L) = \emptyset$,
$(ii)$ $\rank(A(V_L, V_R)) \ge k$, and
$(iii)$ $X_L \subseteq V_L$ and $X_R \subseteq V_R$?
}

We say that a partition $\langle V_L, V_R\rangle$ is a \emph{solution} of $(D, \langle X_L, X_R \rangle, k)$ if it satisfies all the three conditions in the statement of the problem.
We present a reduction that, given an instance $((G, k, d), X, \langle X_L, X_R \rangle)$ of \textsc{Constrained MaxCut}, returns an instance $(D, \langle X_L, X_R \rangle, k)$ of \textsc{Constrained Directed MaxCut}.

\medskip

\noindent\textbf{The reduction:}
\sloppy
The reduction takes as input an instance $((G, k, d), X, \langle X_L, X_R \rangle)$ of \textsc{Constrained MaxCut} on which \autoref{rr:only-matching} is not applicable.
It starts with a copy of the graph $G$ and constructs a digraph $D$.
The reduction finds (in polynomial time) a matching $M$ in $G$ that saturates all vertices in $X$.
For every $xy \in E(G)$, where $x \in X$ and $y \in Y$, it deletes edge $xy$ and adds arc $(x, y)$ (i.e., it directs edges from $X$ to $Y$).
For every arc $(x, y)$ in $M$, it does as follows:
For every in-neighbour $x_1$ of $y$, it adds arc $(x_1, x)$.
It then deletes vertex $y$.
This completes the construction of digraph $D$.
The reduction returns $(D, \langle X_L, X_R \rangle, k)$ as the instance of \textsc{Constrained Digraph MaxCut}.
This completes the description of the reduction.

\begin{lemma}
\label{lemma:redu-maxcut-digraph-maxcut}
$((G, k, d), X, \langle X_L, X_R \rangle)$ is a \yes-instance of {\sc Constrained MaxCut} if and only if $(D, \langle X_L, X_R \rangle, k)$ as a {\sc Yes}-instance of {\sc Constrained Digraph MaxCut}.
\end{lemma}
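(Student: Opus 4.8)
The plan is to exhibit an explicit correspondence between the solutions of the two problems and then check the (four, resp.\ three) defining conditions one by one. First I would record the structural consequences of the hypotheses. By \autoref{lemma:rr-anno-contr-edges-X} we may assume $X$ is independent, so $G$ is bipartite with parts $\langle X, Y\rangle$; since \autoref{rr:only-matching} is not applicable and $M$ saturates $X$, every vertex of $X$ has all its neighbours inside $V(M)$. Hence (isolated vertices of $Y$ play no role and can be dropped into $V_L$) the matching $M$ pairs $X$ bijectively with $Y$; writing $y_i = M(x_i)$, the digraph $D$ is precisely the graph obtained from $G$ by orienting every edge from $X$ to $Y$ and then contracting each matching edge $x_iy_i$ onto $x_i$, so that an edge $x_1y_i$ of $G$ becomes the arc $(x_1,x_i)$ of $D$ and $V(D)=X$.

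The crux is a single structural fact that I would establish first and use in both directions: in any solution $\langle V_L,V_R\rangle$ of \textsc{Constrained MaxCut} with $k=d$, each matched pair lies on the same side, i.e.\ $x_i\in V_R \iff y_i\in V_R$. This is exactly the reasoning already present in \autoref{lemma:rr-only-matching}: condition $(i)$ forces $N(V_R\cap X)\cap Y\subseteq V_R\cap Y$, so the matches of $V_R\cap X$ give $|V_R\cap Y|\ge |V_R\cap X|$, while condition $(iii)$ with $k=d$ gives $|V_R\cap Y|\le |V_R\cap X|$; equality then forces $V_R\cap Y$ to consist exactly of the matches of $V_R\cap X$.

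Given this, I would set up the correspondence $\langle V_L,V_R\rangle\mapsto\langle V_L\cap X,\ V_R\cap X\rangle$, with inverse $\langle V_L^D,V_R^D\rangle\mapsto\langle V_L^D\cup M(V_L^D),\ V_R^D\cup M(V_R^D)\rangle$, and verify the conditions. Condition $(iv)$ of \textsc{Constrained MaxCut} is literally condition $(iii)$ of the digraph problem, since $X_L,X_R\subseteq X$. Condition $(iii)$ holds automatically on both sides because matched pairs move together, giving $|V_R\cap Y|=|V_R\cap X|$. For condition $(i)$, I would observe that an arc $(x_1,x_i)\in A(V_R^D,V_L^D)$ corresponds (via the contraction, using $x_i\in V_L\Rightarrow y_i\in V_L$) to an edge $x_1y_i$ with $x_1\in V_R\cap X$ and $y_i\in V_L\cap Y$, i.e.\ precisely a forbidden edge of $E(V_L\cap Y,V_R\cap X)$; thus $A(V_R^D,V_L^D)=\emptyset \iff E(V_L\cap Y,V_R\cap X)=\emptyset$, with the orientation flip built into the contraction.

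The only genuinely non-formal step is condition $(ii)$, the equality $\rank(E(V_L\cap X,V_R\cap Y))=\rank(A(V_L^D,V_R^D))$, which I expect to be the main obstacle. Here I would argue that the relabelling fixing each $x_1\in V_L\cap X$ and sending each $y_i\in V_R\cap Y$ to its match $x_i$ is an isomorphism from the undirected graph on $E(V_L\cap X,V_R\cap Y)$ onto the underlying undirected graph of $A(V_L^D,V_R^D)$: it is a bijection on the relevant vertex sets (distinct $y_i$ map to distinct $x_i$ since $M$ is a matching, and no $x_i\in V_R$ collides with an $x_1\in V_L$), and it carries edges to edges bijectively, so it preserves the number of vertices and of connected components, hence the rank. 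Parallel and anti-parallel arcs created by the merging are irrelevant, as rank is computed in the simple underlying graph. Everything beyond this rank-preservation is routine bookkeeping with the matched-pairs-together property.
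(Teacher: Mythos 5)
Your proposal is correct and takes essentially the same route as the paper: both proofs use the correspondence $\langle V_L, V_R\rangle \leftrightarrow \langle V_L\cap X, V_R\cap X\rangle$, lifting back by placing each matched $Y$-vertex on the same side as its $X$-partner, and both transfer the rank condition via the edge--arc correspondence (two cut edges share an endpoint iff the corresponding arcs do). The only difference is presentational: you isolate and prove explicitly the matched-pairs-move-together fact (forced by conditions $(i)$ and $(iii)$ together with $k=d$), which the paper's terser proof uses implicitly inside its ``it is easy to see/verify'' steps.
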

\begin{proof}
Note that the edges in $E(V_L \cap Y, V_R \cap X)$ correspond to arcs in $A(V_R \cap X, V_L \cap X)$ and the edges in $E(V_L \cap X, V_R \cap Y)$ correspond to $A(V_L \cap X, V_R \cap X)$.
Moreover, by the construction of $D$, two edges in $E(V_L \cap X, V_R \cap Y)$ share an endpoint if and only if the corresponding two arcs in $A(V_L \cap X, V_R \cap X)$ share an endpoint.
This implies that $\rank(E(V_L \cap X, V_R \cap Y)) = \rank(A(V_L \cap X, V_R \cap X))$.

Consider a solution $\langle V_L, V_R \rangle$ of $((G, k, d), X, \langle X_L, X_R \rangle)$.
It is easy to see that $\langle V_L \cap X, V_R \cap X \rangle$ is a solution of $(D, \langle X_L, X_R \rangle, k)$.

Similarly, consider a solution $\langle V'_L, V'_R \rangle$ of $(D, \langle X_L, X_R \rangle, k)$.
As $V(D) = X$, $\langle V'_L, V'_R \rangle$ is a partition of $X$.
Let $V_L$ be the collection of vertices in $V'_L$ as well as vertices in $Y$ that are adjacent to vertices in $V'_L$ via edges in $M$.
It is easy to verify that $\langle V_L, V_R = V(G) \setminus V_L \rangle$ is a solution of $((G, k, d), X, \langle X_L, X_R \rangle)$.
\end{proof}

Consider a digraph $D^{\circ}$ obtained from $D$ by \emph{merging} a directed cycle $C$ into a single vertex in $D$.
Formally, this operation adds a vertex $x_C$ to $V(D)$, and for every arc $(x, x_1)$ in $A(V(C), V(D) \setminus V(C))$, it adds arc $(x_C, x_1)$ and for every arc $(x_1, x)$ in $A(V(D) \setminus V(C))$, it adds arc $(x_1, x_C)$.
Note that, unlike with the edge contraction operation, this operation does not delete parallel or anti-parallel arcs.
Consider a map $\psi: V(D) \mapsto V(D^{\circ})$ where $\psi(x) = x$ or it is the vertex added to $V(D^{\circ})$ while merging a directed cycle containing $x$.
It is easy to verify that $\psi$ defines a surjective function.
Consider a directed acyclic graph $D'$ obtained from $D$ by repeatedly merging directed cycles.
Let $D = D_1, D_2, \dots, D_q = D'$ be the sequence of the digraphs such that $D_{i + 1}$ is obtained by merging a cycle $C_i$ in $D_i$, and let $\psi_i : V(D_i) \mapsto V(D_{i + 1})$ be the function as defined above.
We define $\psi: V(D) \mapsto V(D')$ inductively, i.e., $\psi(x) = \psi_{q}(\psi_{q-1}(\cdots (\psi_1(x))))$.
Once again, it is easy to verify that $\psi$ defines a surjective function.
For any $x' \in V(D')$, we define $\psi^{-1}(x') := \{x \in V(D)\ |\ \psi(x) = x'\}$, and for any subset $U \subseteq V(D')$, $\psi^{-1}(U) := \bigcup_{x' \in U} \psi^{-1}(x')$.
A \emph{topological ordering} of a directed acyclic graph $D'$ is a linear ordering $\sigma: V(D') \mapsto [|V(D')|]$ such that for every arc $(x, x_1)$, $\sigma(x) < \sigma(x_1)$.

We are now in position to present an algorithm for \textsc{Constrained Directed MaxCut}.

\begin{lemma}
\label{lemma:algo-const-dir-maxcut}
There is an algorithm that, given an instance $(G, \langle X_L, X_R\rangle, k)$ of {\sc Constrained Directed MaxCut}, runs in time $2^{\calO(k)} \cdot n^{\calO(1)}$ and correctly determines whether it is a {\sc Yes}-instance.
\end{lemma}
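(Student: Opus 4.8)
The plan is to turn the topological structure of the merged digraph $D'$ into a processing order for a dynamic program. First I would justify the reduction to a directed acyclic graph: condition $(i)$, namely $A(V_R,V_L)=\emptyset$, says that no arc leaves $V_R$ towards $V_L$, so $V_R$ is closed under out-neighbours. In particular every directed cycle of $D$ must be monochromatic, since following its arcs from any vertex placed in $V_R$ forces the whole cycle into $V_R$. Hence merging each directed cycle into a single super-vertex, as in the definition of $D'$ and $\psi$ preceding the statement, does not change the set of feasible partitions: a partition $\langle V_L', V_R'\rangle$ of $V(D')$ lifts through $\psi^{-1}$ to a partition of $V(D)$ and conversely. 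I would lift the constraints by forcing a super-vertex into $V_L$ (resp.\ $V_R$) whenever $\psi^{-1}$ of it meets $X_L$ (resp.\ $X_R$), and rejecting the branch when it meets both. From now on I work with the DAG $D'$ and a fixed topological ordering $\sigma$.

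Next I would process the super-vertices in the order $\sigma$ (sources first) and maintain a partial partition of the already-processed vertices. Because all arcs of $D'$ point forward under $\sigma$, condition $(i)$ becomes a local, monotone rule: when a super-vertex $v$ is reached, if one of its (already processed) in-neighbours lies in $V_R'$ then $v$ is forced into $V_R'$; otherwise $v$ may be placed freely, subject to the lifted $X_L'/X_R'$ constraints. This exactly enforces that $V_R'$ is closed under out-neighbours, and corresponds to the forced moves described informally as the first two cases in the overview.

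The heart of the argument is controlling condition $(ii)$. The key observation is that the set of crossing arcs only grows as $\sigma$ advances: an arc $(u,v)$ with $\sigma(u)<\sigma(v)$ becomes a crossing arc precisely when $v$ is processed and placed in $V_R'$ while $u$ had been placed in $V_L'$, and it is never removed afterwards. Therefore $\rank(A(V_L',V_R'))$ is non-decreasing along any branch of the process, and as soon as it reaches $k$ the algorithm may output \yes. While this accumulated rank stays below $k$, \autoref{obs:rank-bound-vertices} bounds the number of vertices incident to the revealed crossing arcs by $\calO(k)$. This is what keeps the search small: whenever a free placement is available (the third and fourth cases of the overview), I would either detect that the crossing arcs already have rank at least $k$ and accept, or branch over the $\calO(k)$ endpoints that are incident to the current crossing arcs. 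Since the total rank along a branch is bounded by $k$ until acceptance, the vertices branched upon along a single root-to-leaf path number $\calO(k)$, so the branching tree has only $2^{\calO(k)}$ leaves; interleaving the polynomially many forced moves with this globally bounded branching gives the claimed running time $2^{\calO(k)}\cdot n^{\calO(1)}$, and a feasible partition is reachable iff the instance is a \yes-instance. Here the guarantee $k=d$ inherited through \autoref{rr:only-matching} and the matching-based merging of \autoref{lemma:redu-maxcut-digraph-maxcut} is essential, as it is what produced a digraph on $X$ alone in which the frontier cannot blow up.

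The step I expect to be the main obstacle is the rank bookkeeping, that is, maintaining enough information about the connected components of the crossing-arc graph to decide future rank increments while keeping the state of size $2^{\calO(k)}$. I would need to argue that, as long as the rank remains below $k$, only an $\calO(k)$-sized \emph{active frontier} of vertices can ever acquire a further crossing arc, so the component structure relevant to the rank is entirely captured by the bounded state and the branching budget is genuinely global rather than per step. The remaining checks — that merging directed cycles is rank-faithful for the crossing-arc sets that arise, and that the forced and free placement rule never discards a feasible partition — are routine but must be carried out carefully given the subtlety in the definition of $\rank$ on digraphs.
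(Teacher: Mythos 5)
Your structural groundwork is sound and matches the paper: directed cycles are monochromatic under condition $(i)$, so merging them into a DAG $D'$ is harmless; the $X_L/X_R$ constraints propagate as forced moves; and, processing $V(D')$ in topological order, a vertex with an in-neighbour already placed in $V_R$ is forced into $V_R$, while all other placements are free. The genuine gap is in the complexity analysis of your search. You claim that, because the rank of the crossing arcs is non-decreasing and stays below $k$ until acceptance, at most $\calO(k)$ vertices are branched upon along any root-to-leaf path, so the branching tree has $2^{\calO(k)}$ leaves. This conflates \emph{rank-increasing} decisions with \emph{all} decisions. Placing a free vertex into $V_L$ creates no crossing arc, and placing a free vertex of in-degree zero into $V_R$ also creates no crossing arc; neither consumes any of your budget of $k$, yet both are genuine binary choices that determine which arcs become crossing arcs later (a vertex put in $V_R$ drags its entire forward closure into $V_R$). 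A DAG with many sources feeding into a common structure admits $\Omega(n)$ such rank-neutral free choices on every path of your search tree, so the tree can have $2^{\Omega(n)}$ leaves; even if the rank-neutral choices involving sources were somehow dispatched, counting paths with at most $k$ rank-increasing choices gives only an $n^{\calO(k)}$-type bound, not $2^{\calO(k)}\cdot n^{\calO(1)}$.

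What is missing is precisely the collapse that the paper's proof performs: two partial partitions are interchangeable for the rest of the computation whenever they agree on $(a)$ the position $i$ in the topological order, $(b)$ the induced partition of the frontier $W^i$, defined as the endpoints inside the suffix $U^i$ of \emph{all} arcs from the prefix into the suffix (not merely of the crossing arcs of the current branch), and $(c)$ the accumulated rank capped at $k$. Memoizing on these triples --- the table $\calT[i,\langle W_L^{i}, W_R^{i}\rangle, k^{\circ}]$ of the paper --- replaces the exponential tree by $n'\cdot 2^{|W^i|}\cdot (k+1)$ states, each computable in time $2^{\calO(k)}\cdot n^{\calO(1)}$. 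Moreover, the bound $|W^i| < k$ that makes this state space small is not automatic: it is guaranteed by a preliminary check that no threshold cut $\langle \psi^{-1}(V(D)\setminus U^i),\psi^{-1}(U^i)\rangle$ compatible with $\langle X_L, X_R\rangle$ already has rank at least $k$ (in which case the algorithm accepts outright), combined with the argument of \autoref{obs:rank-bound-vertices}; your bound via the crossing arcs of the current branch controls only one branch, not the index set of the states. You correctly flagged ``rank bookkeeping with a state of size $2^{\calO(k)}$'' as the main obstacle, but the global branching budget you propose does not overcome it, whereas the frontier-indexed dynamic program does.
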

\begin{proof}
The algorithm takes as input an instance $(D, \langle X_L, X_R \rangle, k)$ of \textsc{Constrained Directed MaxCut}, and returns either \yes\ or \no.
It starts by constructing a directed acyclic graph $D'$ from $D$ by merging directed cycles in $D$ as described above.
Suppose that $D'$ is the resulting directed acyclic graph obtained after this procedure.
The algorithm applies the following two reduction rules exhaustively.
\begin{itemize}
\item If there is an arc $(x_1, x)$ such that $x \in X_L$, then delete arc $(x_1, x)$, and add $x_1$ to $X_L$.
\item If there is an arc $(x, x_1)$ such that $x \in X_R$, then delete arc $(x, x_1)$, and add $x_1$ to $X_R$.
\end{itemize}
The correctness of these reduction rules follows from the fact that, for any solution $\langle V_L, V_R \rangle$ of $(G, \langle X_L, X_R \rangle, k)$, $X_L \subseteq V_L$, $X_R \subseteq V_R$, and $A(V_R, V_L) = \emptyset$.

For notational convenience, we denote the resulting instance obtained after exhaustive application of these reduction rules by $(D', \langle X_L, X_R \rangle, k)$.
Note that every vertex in $X_L$ has in-degree zero and every vertex in $X_R$ has out-degree zero.
Let $n'$ be the number of vertices in $V(D')$ and $\sigma = \{x_1, x_2, \dots, x_{n'}\}$ be a topological ordering of $V(D')$.

We assume, without loss of generality, that the first $|X_L|$ vertices are in $X_L$, and that the last $|X_R|$ vertices are in $X_R$.
For every $i \in [n']$, define $U^i := \{u_i, u_{i + 1}, \dots, u_{n'}\}$, and $W^{i} \subseteq \psi^{-1}(U^i)$ as the collection of endpoints of arcs in $A(\psi^{-1}(V(D) \setminus U^i), \psi^{-1}(U^i) )$ that are in $\psi^{-1}(U^i)$.

If there is an integer $i \in [n']$, such that $X_L \subseteq  \psi^{-1}(V(D) \setminus U^i)$, $X_R \subseteq \psi^{-1}(U^i)$, and the rank of the arcs across the partition $\langle \psi^{-1}(V(D) \setminus U^i), \psi^{-1}(U^i) \rangle$ is at least $k$, then the algorithm returns \yes.
Otherwise, the algorithm constructs a dynamic programming table
such that $\calT[i, \langle W_{L}^{i}, W_R^{i} \rangle, k^{\circ}]$ is a `valid partition' $\langle V_{L}^{i}, V_R^{i} \rangle$ of $U^i$ which is `consistent' with $\langle W_{L}^{i}, W_R^{i} \rangle$, and has rank at least $k^{\circ}$, i.e., $\rank(A(V_{L}^{i}, V_R^{i})) \ge k^{\circ}$, if such a partition exists, otherwise it is $\langle \emptyset, \emptyset \rangle$.
Here, $i \in [n']$, $\langle W_{L}^{i}, W_R^{i} \rangle$ is a partition of $W^{i}$, and $k^{\circ} \in \{0\} \cup [k]$. The details follow.

\begin{definition}
We say that  a partition $\langle V_{L}^{i}, V_R^{i} \rangle$ of $U^i$ is a \emph{valid partition} if $(i)$ $A(V_R^{i}, V_{L}^{i}) = \emptyset$, and $(ii)$ $(X_L \cap U^i) \subseteq V_{L}^{i}$ and $(X_R \cap U^i) \subseteq V_{R}^{i}$.
We say that a valid partition $\langle V_{L}^{i}, V_R^{i} \rangle$ of $U^i$ is \emph{consistent} with $\langle W_{L}^{i}, W_R^{i} \rangle$ if $W_{L}^{i} \subseteq V_{L}^{i}$ and $W_R^{i} \subseteq V_R^{i}$.
\end{definition}

The algorithm initializes $\calT[i, \langle W_{L}^{i}, W_R^{i} \rangle, k^{\circ}]$ to $\langle \emptyset, \emptyset \rangle$ for every $i \in [n']$, for every partition $\langle W_{L}^{i}, W_R^{i} \rangle$ of $W^{i}$, and for every $k^{\circ} \in \{0\} \cup [k]$.
It sets the following values:

\begin{itemize}
\item $\calT[n', \langle W_{L}^{n} = \{u_{n'}\}, W_R^{n} = \emptyset \rangle, k^{\circ} = 0]$ to $\langle V_L^{n} = \{u_{n'}\}, V_R^n = \emptyset\rangle$, and
\item $\calT[n', \langle W_{L}^{n} = \emptyset, W_R^{n} = \{u_{n'}\} \rangle, k^{\circ} = 0]$ to $\langle V_L^{n}  = \emptyset, V_R^n= \{u_{n'}\}\rangle$.
\end{itemize}

To compute $\calT[i, \langle W_{L}^{i}, W_R^{i} \rangle, k^{\circ}]$, the algorithm considers the following three cases: $(1)$ $u_{i} \in W_L^i$, $(2)$ $u_{i} \in W_R^i$, and $(3)$ $u_{i} \not\in W_L^i \cup W_R^i$.
\medskip

\begin{enumerate}
\item In the first case, if there exists a table entry $(i + 1, \langle W_L^{i + 1}, W_R^{i + 1} \rangle, k')$ for some $k' \in \{0\} \cup [k]$ such that
\begin{itemize}
\item $W_L^{i} \setminus \{u_i\} \subseteq W_L^{i + 1}$ and $W_R^{i} \subseteq W_R^{i + 1}$, and
\item $\rank(A(V^{i+1}_L \cup \{u_i\}, V^{i+1}_R)) \ge k^{\circ}$,
\end{itemize}
where $\langle V^{i+1}_L, V^{i+1}_R \rangle = \calT[i + 1, \langle W_L^{i + 1}, W_R^{i + 1} \rangle, k']$, then the algorithm sets $\calT[i, \langle W_{L}^{i}, W_R^{i} \rangle, k^{\circ}] = \langle V^{i+1}_L \cup \{u_i\}, V^{i+1}_R \rangle$.
\medskip
\item In the second case, if there exists a table entry $(i + 1, \langle W_L^{i + 1}, W_R^{i + 1} \rangle, k')$ for some $k' \in \{0\} \cup [k]$ such that
\begin{itemize}
\item $W_L^{i} \subseteq W_L^{i + 1}$ and $W_R^{i} \setminus \{u_i\} \subseteq W_R^{i + 1}$,
\item $N_{\sf out}(u_i) \cap W_L^{i + 1} = \emptyset$, and
\item $\rank(A(V^{i+1}_L, V^{i+1}_R \cup \{u_i\})) \ge k^{\circ}$,
\end{itemize}
where $\langle V^{i+1}_L, V^{i+1}_R \rangle = \calT[i + 1, \langle W_L^{i + 1}, W_R^{i + 1} \rangle, k']$, then the algorithm sets $\calT[i, \langle W_{L}^{i}, W_R^{i} \rangle, k^{\circ}] = \langle V^{i+1}_L, V^{i+1}_R \cup \{u_i\} \rangle$.
\medskip
\item In the third case, if there exists a table entry $(i + 1, \langle W_L^{i + 1}, W_R^{i + 1} \rangle, k')$ for some $k' \in \{0\} \cup [k]$ such that
\begin{itemize}
\item $W_L^{i} \subseteq W_L^{i + 1}$ and $W_R^{i} \subseteq W_R^{i + 1}$, and
\item $\rank(A(V^{i+1}_L \cup \{u_i\}, V^{i+1}_R)) \ge k^{\circ}$,
\end{itemize}
where $\langle V^{i+1}_L, V^{i+1}_R \rangle = \calT[i + 1, \langle W_L^{i + 1}, W_R^{i + 1} \rangle, k']$, then the algorithm sets $\calT[i, \langle W_{L}^{i}, W_R^{i} \rangle, k^{\circ}] = \langle V^{i+1}_L \cup \{u_i\}, V^{i+1}_R \rangle$.
\end{enumerate}

If $\calT[1, \langle W^1_L, W^1_R \rangle, k]$ is not $\langle \emptyset, \emptyset \rangle$ for some partition $\langle W^1_L, W^1_R \rangle$ of $W^1$, then the algorithm returns \yes, otherwise it returns \no.
This concludes the description of the algorithm.

We now argue that if the algorithm computes the above dynamic programming table, then all the entries are correct.
It is easy to verify that all the entries corresponding to indices where $i = n'$ are correct.
Suppose this is true for every integer in $[n']$ which is greater than $i$.
Consider an index $(i, \langle W^i_L, W^i_R \rangle, k^{\circ})$ and suppose that there exists a valid partition $\langle V^i_L, V^i_R \rangle$ of $U^i$ which is consistent with $\langle W^i_L, W^i_R \rangle$.
By the definition, $W_L^{i} \setminus \{u_{i}\} \subseteq W_L^{i + 1}$ and $W_R^{i} \setminus \{u_{i}\} \subseteq W_R^{i + 1}$.
We consider the following three exhaustive cases.

\begin{itemize}
\item If $u_i \in W^i_L$ (which implies $u_i \in V^i_L$) , then $\langle V^i_L \setminus \{u_i\}, V^i_R \rangle$ is a valid partition of $U^{i + 1}$ which is consistent with some partition $\langle W^{i + 1}_L, W^{i + 1}_R \rangle$ of $W^{i + 1}$, such that $W^{i}_L \setminus \{u_i\} \subseteq W^{i + 1}_L$ and $W^{i}_R \subseteq W^{i + 1}_R$.

\item If $u_i \in W^i_R$ (which implies $u_i \in V^i_R$), then $\langle V^i_L, V^i_R  \setminus \{u_i\} \rangle$ is a valid partition of $U^{i + 1}$ which is consistent with some partition $\langle W^{i + 1}_L, W^{i + 1}_R \rangle$ of $W^{i + 1}$, such that $W^{i}_L \subseteq W^{i + 1}_L$ and $W^{i}_R \setminus \{u_i\} \subseteq W^{i + 1}_R$.
Note that, as $N_{\sf out}(u_i) \subseteq W_L^{i + 1} \cup W_R^{i + 1}$, we have $N_{\sf out}(u_i) \cap W_L^{i + 1} = \emptyset$.

\item If $u_i \not\in W^i_L \cup W^i_R$, then there is no arc whose endpoint is $u_i$.
In this case, $\langle V^i_L \setminus \{u_i\}, V^i_R \rangle$ is a valid partition of $U^{i + 1}$ which is consistent with some partition $\langle W^{i + 1}_L, W^{i + 1}_R \rangle$ of $W^{i + 1}$, such that $W^{i}_L \subseteq W^{i + 1}_L$ and $W^{i}_R \setminus \{u_i\} \subseteq W^{i + 1}_R$.
\end{itemize}

As the algorithm correctly computes these values for every integer greater than $i$, it correctly computes the value of $\calT[i, \langle W^i_L, W^i_R \rangle, k^{\circ}]$.

We now argue about the correctness of the algorithm.
Suppose that the input instance $(D, \langle X_L, X_R \rangle, k)$ is a \yes-instance.
If there exists a partition $\langle V_L, V_R \rangle$ of $V(D)$ which is a solution, such that $V_R = \psi^{-1}(U_i)$ for some $i \in [n']$, then the rank of the arcs across the partition $\langle \psi^{-1}(V(D) \setminus U^i), \psi^{-1}(U^i) \rangle$ is at least $k$.
In this case, the algorithm correctly concludes that the input is a \yes-instance.
Otherwise, it computes the table as described above.
By the description of the algorithm, $U^1 = V(D')$, and the fact that all the entries in the table are correct, in this case the algorithm correctly concludes that the input is a \yes-instance.

Suppose that the algorithm returns \yes\ on the input instance $(D, \langle X_L, X_R \rangle, k)$.
Consider the case when the algorithm returns \yes\ without constructing the table.
Note that every arc across the partition $\langle V(D') \setminus U^i, U^i \rangle$ has its startpoint in $V(D') \setminus U^i$ and endpoint in $U^i$.
Alternately, $A( U^i, V(D') \setminus U^i) = \emptyset$.
By the construction of $D'$, $A(\psi^{-1}(U^i), V(D) \setminus \psi^{-1}(U^i)) = \emptyset$ for every $i \in [n']$.
As there exists $i \in [n']$ such that $X_L \subseteq  \psi^{-1}(V(D) \setminus U^i)$, $X_R \subseteq \psi^{-1}(U^i)$, and the rank of the arcs across the partition $\langle \psi^{-1}(V(D) \setminus U^i), \psi^{-1}(U^i) \rangle$ is at least $k$, $\langle V(D) \setminus U^i), \psi^{-1}(U^i) \rangle$ is a solution of $(G, \langle X_L, X_R \rangle, k)$.
In the other case, when the table is constructed, as every entry in the table is correct, the algorithm correctly concludes that the input is a \yes-instance.
This completes the proof of correctness of the algorithm.

To argue about the running time of the algorithm, notice that all the other steps, apart from computing the table, can be performed in polynomial time.
The algorithm computes the table if and only if for every integer $i \in [n']$, the rank of arcs across the partition $\langle \psi^{-1}(V(D) \setminus U^i), \psi^{-1}(U^i) \rangle$ is less than $k$.
By  arguments similar to those of \autoref{obs:rank-bound-vertices}, it is easy to see that $|W^{i}| < k$ for all $i \in [n']$.
Hence, the number of entries in the table is $2^{\calO(k)} \cdot n^{\calO(1)}$.
The algorithm takes  $2^{\calO(k)}\cdot n^{\calO(1)}$ time to compute each table entry.
This implies that the algorithm terminates in the desired time, and the proof of the lemma is complete.
\end{proof}

\subsubsection{Proof of \autoref{lemma:k-lesser-2k-algo}}


\autoref{lemma:algo-const-dir-maxcut} implies that there is an algorithm, say $\calA$, that given an instance $(G, \langle X_L, X_R \rangle, k)$ of \textsc{Constrained Directed MaxCut}, runs in time $2^{\calO(k)} \cdot n^{\calO(1)}$, and correctly determines whether it is a \yes-instance.

Consider an algorithm, say $\calB$, that takes as input an instance $((G, k, d), X, \langle X_L, X_R \rangle)$ of \textsc{Constrained MaxCut} with the extra guarantee that $k = d$, and returns \yes\ or \no.
It starts by exhaustively applying \autoref{rr:only-matching}.
It then uses the reduction mentioned in \autoref{sub-sec:cont-max-cut-dir-max-cut} to obtain an instance $(D, \langle X_L, X_R \rangle, k)$ of \textsc{Constrained Directed MaxCut}.
Using Algorithm $\calA$, it determines whether the constructed instance is a \yes-instance.
If it is indeed the case, then it returns \yes, otherwise it returns \no.
This concludes the description of Algorithm~$\calB$.

The correctness of Algorithm~$\calB$ follows from  \autoref{lemma:rr-only-matching}, \autoref{lemma:redu-maxcut-digraph-maxcut}, and the correctness of Algorithm~$\calA$.
Its running time follows from the running time of Algorithm~$\calA$.
Hence, Algorithm~$\calB$ takes as input an instance $((G, k, d), X, \langle X_L, X_R \rangle)$ of \textsc{Constrained MaxCut} with a guarantee that $k  = d$, runs in time $2^{\calO(d)} \cdot n^{\calO(1)}$, and correctly determines whether it is a \yes-instance.

Algorithm~$\calB$ and \autoref{lemma:k-neq-d-to-k-eq-d} imply that there is an algorithm, say $\calC$, that given an instance $((G, k, d), X, \langle X_L, X_R \rangle)$ of \textsc{Constrained MaxCut} (without any guarantee), runs in time $2^{\calO(d)} \cdot n^{k - d + \calO(1)}$, and correctly determines whether it is a \yes-instance. Recall that since we are in the case $d \le k < 2d$, it holds that $2^{\calO(k)} = 2^{\calO(d)}$.

Consider an algorithm, say $\calD$, that takes as input an instance $((G, k, d), X, \langle X_L, X_R \rangle)$ of \textsc{Annotated Contraction(\vc)}, and returns \yes\ or \no.
It exhaustively applies \autoref{rr:anno-contr-edges-X}.
It then uses Algorithm~$\calC$ to determine whether the resulting instance is a \yes-instance of \textsc{Constrained MaxCut}.
If it is indeed the case, it returns \yes, otherwise it returns \no.
This concludes the description of Algorithm~$\calD$.

The correctness of Algorithm~$\calD$ follows from \autoref{lemma:rr-anno-contr-edges-X}, \autoref{lemma:red-const-contr-vc-maxcut}, and the correctness of Algorithm~$\calC$.
Its running time follows from the running time of Algorithm~$\calC$.
Hence, Algorithm~$\calD$ takes as input an instance $((G, k, d), X, \langle X_L, X_R \rangle)$ of \textsc{Annotated Contraction(\vc)}, runs in time $2^{\calO(d)} \cdot n^{k - d + \calO(1)}$ and correctly decides whether it is a \yes-instance.

Algorithm~$\calD$ and \autoref{lemma:contr-VC-to-annot-contr-vc} imply that there is an algorithm, say $\calE$, that given an instance $(G, k, d)$ of \textsc{Contraction(\vc)} with \autoref{guarantee:special-instance}, runs in time $2^{\calO(d)} \cdot n^{\calO(k - d)}$ and correctly decides whether it is a \yes-instance.

Consider an algorithm, say $\calF$, that takes as input an instance $(G, k, d)$ of \textsc{Contraction(\vc)} with the guarantees that $k < \rank(G)$ and $d \le k < 2d$, and returns \yes\ or \no.
It first applies the algorithm mentioned in \autoref{lemma:large-oct-large-rank-vc} as a subroutine.
If the subroutine concludes that $(G, k, d)$ is a \yes-instance, then it returns \yes.
Otherwise, let $X$ be the minimum vertex cover of $G$ returned by the subroutine.
Recall that $\rank(X) < d$.
The algorithm constructs instances $(G, k^{\circ}, d)$ for every value of $k^{\circ} \in [k]$.
In the increasing order of $k^{\circ}$, it determines whether $(G, k^{\circ}, d)$ is a \yes-instance using Algorithm~$\calE$.
If for any value of $k^{\circ}$, Algorithm~$\calE$ concludes that $(G, k^{\circ}, d)$ is a \yes-instance, then the algorithm returns \yes, otherwise it returns \no.
This concludes the description of Algorithm~$\calF$.

Note that for $k^{\circ} = 1$, instance $(G, k^{\circ}, d)$ satisfies every condition mentioned in  \autoref{guarantee:special-instance}.
The algorithm uses Algorithm~$\calE$ with $(G, k^{\circ}, d)$ as input only if Algorithm~$\calE$ concludes that $(G, k^{\circ} - 1, d)$ is a \no-instance.
The correctness and the running time of Algorithm~$\calF$ follow from those of Algorithm~$\calE$.
This implies that Algorithm~$\calF$ receives an instance $(G, k, d)$ of \textsc{Contraction(\vc)} with guarantees that $k < \rank(G)$ and $d \le k < 2d$, runs in time $2^{\calO(d)} \cdot n^{k - d + \calO(1)}$, and correctly determines whether it is a \yes-instance.
This concludes the proof of \autoref{lemma:k-lesser-2k-algo}.

\subsection{Proof of \autoref{thm:algorithm}}
\label{sub-sec:entire-algo}

Consider an instance $(G, k, d)$ of \textsc{Contraction(\vc)}.
If $G$ is a connected graph, we can use the algorithm mentioned at the start of \autoref{sec:algorithm} to conclude whether $(G, k, d)$ is a \yes-instance.
Otherwise, for each each connected component $C_i$ of $G$, integers $k_i \in [k]$, and $d_i \in [d]$, we use the algorithm to determine whether $(G[C_i], k_i, d_i)$ is a \yes-instance of \textsc{Contraction(\vc)}.
With this information, one can construct a standard Knapsack-type dynamic programming table, which is also mentioned in \cite{lima2020reducing}, to determine whether $(G, k, d)$ is a \yes-instance.

The correctness and the running time of this procedure follows from those of \autoref{lemma:k-equal-rank-algo}, \autoref{lemma:k-larger-2k-algo}, and \autoref{lemma:k-lesser-2k-algo}.
Hence, there is an algorithm that takes as input an instance $(G, k, d)$ of \textsc{Contraction(\vc)}, runs in time $2^{\calO(d)} \cdot n^{k - d + \calO(1)}$ and correctly determines whether it is a \yes-instance.

\section{Conclusion}
\label{sec:conclusion}

In this article we considered the problem of reducing the size of a minimum vertex cover of a graph $G$ by at least $d$ using at most $k$ edge contractions.
Note that the problem is trivial when $d < k$.
A few simple observations prove that when $d \le 2k$, the problem is \coNP-\Hard\ and \FPT\ when parameterized by $k + d$.
Almost all of our technical work is to handle the case when $d \le k < 2d$.
We proved that the problem is \NP-\Hard\ when $k = d + \frac{\ell - 1}{\ell + 3} \cdot d$ for any integer $\ell \ge 1$ such that $k$ is an integer (in particular, $\ell = 1$).
This implies that the problem is hard for various values of $k - d$ in the set $\{0, 1, \dots,  d - 1\}$.
We were able to prove that if $(k - d)$ is a constant then the problem is \FPT\ when parameterized by $k + d$.
However, if no such a condition is imposed, then the problem is \W[1]-\Hard.
More precisely, we presented an algorithm with running time $2^{d} \cdot n^{k - d + \calO(1)}$ and proved that the problem is \W[1]-\Hard\ when parameterized by $k + d$ in the case where $k - d = \frac{d}{3}$ (see the proof of \autoref{thm:w1-hard}).

We believe that it should be possible to prove that the problem is \NP-\Hard\ for \emph{every} value of $k - d$ in the set $\{0, 1, \dots,  d - 1\}$.
Such a reduction has the potential to sharpen the distinction between \FPT\ and \W[1]-\Hard\ cases as $k - d$ varies in this range.
It might also simplify the analysis of our \XP\ algorithm or lead to a simpler algorithm.
It would be interesting to analyze the parameterized complexity of the problem with respect to structural parameters like the vertex cover number or the treewidth of the input graph.
Note that the problem is trivially \FPT\ when parameterized by the vertex cover number. Finally, it is worth mentioning that we did not focus on optimizing the degree of the polynomial term $n^{\calO(1)}$ in our \XP\ algorithm, although it is reasonably small.



\bibliography{references}

\end{document}